\documentclass[11pt]{article}
\usepackage[utf8]{inputenc}
\usepackage{cmap}
\usepackage[T1]{fontenc}
\usepackage{lmodern}
\usepackage[margin=1in]{geometry}
\usepackage{microtype}

\usepackage{amsmath,amssymb,amsthm,mathtools,dsfont}
\usepackage{graphicx}
\usepackage{tikz}
\usetikzlibrary{arrows.meta,calc,fit,positioning}
\usepackage{float}
\usepackage{booktabs,array}
\usepackage[titles]{tocloft}
\usepackage{xcolor}
\definecolor{revisionblue}{RGB}{25,75,145}

\usepackage{aliascnt}
\usepackage{xurl}
\usepackage{hyperref}
\usepackage{authblk}
\usepackage[numbers,sort&compress]{natbib}

\hypersetup{
    colorlinks,
    linkcolor={blue!55!black},
    citecolor={blue!55!black},
    urlcolor={blue!55!black},
    pdftitle={Learning Local Fermionic Lindbladians under Parity Superselection},
    pdfauthor={Tim M\"{o}bus, Daniel Stilck Fran\c{c}a, Cambyse Rouz\'e}
}

\theoremstyle{plain}
\newtheorem{theorem}{Theorem}[section]
\newaliascnt{lemma}{theorem}
\newtheorem{lemma}[lemma]{Lemma}
\aliascntresetthe{lemma}
\newaliascnt{proposition}{theorem}
\newtheorem{proposition}[proposition]{Proposition}
\aliascntresetthe{proposition}
\newaliascnt{corollary}{theorem}
\newtheorem{corollary}[corollary]{Corollary}
\aliascntresetthe{corollary}
\theoremstyle{definition}
\newaliascnt{definition}{theorem}
\newtheorem{definition}[definition]{Definition}
\aliascntresetthe{definition}
\newaliascnt{assumption}{theorem}
\newtheorem{assumption}[assumption]{Assumption}
\aliascntresetthe{assumption}
\newaliascnt{example}{theorem}
\newtheorem{example}[example]{Example}
\aliascntresetthe{example}
\theoremstyle{remark}
\newaliascnt{remark}{theorem}
\newtheorem{remark}[remark]{Remark}
\aliascntresetthe{remark}

\DeclareMathOperator{\tr}{tr}
\DeclareMathOperator{\supp}{supp}
\newcommand{\fsupp}{\supp_{\rm f}}
\DeclareMathOperator{\dist}{dist}
\DeclareMathOperator{\diam}{diam}

\DeclareMathOperator{\RePart}{Re}
\DeclareMathOperator{\ImPart}{Im}

\newcommand{\cE}{\mathcal E}
\newcommand{\cI}{\mathcal I}
\newcommand{\cK}{\mathcal K}
\newcommand{\cL}{\mathcal L}
\newcommand{\cM}{\mathcal M}
\newcommand{\cQ}{\mathcal Q}
\newcommand{\cR}{\mathcal R}
\newcommand{\cO}{\mathcal O}
\newcommand{\eps}{\varepsilon}
\newcommand{\Id}{\mathds 1}
\newcommand{\norm}[1]{\left\lVert #1\right\rVert}
\newcommand{\abs}[1]{\left\lvert #1\right\rvert}

\title{\texorpdfstring{Learning Local Fermionic Lindbladians \\ under Parity Superselection}{Learning Local Fermionic Lindbladians under Parity Superselection}}
\author[1,2]{Tim M\"{o}bus\thanks{moebustim@gmail.com}}
\author[3]{Daniel Stilck Fran\c{c}a\thanks{dsfranca@math.ku.dk}}
\author[4]{Cambyse Rouz\'e\thanks{cambyse.rouze@inria.fr}}
\affil[1]{\small Department of Applied Mathematics and Theoretical Physics, University of Cambridge, Cambridge CB3 0WA, United Kingdom}
\affil[2]{\small Department of Mathematics, University of T\"ubingen, 72076 T\"ubingen, Germany}
\affil[3]{\small Department of Mathematical Sciences, University of Copenhagen, 2100 Copenhagen, Denmark}
\affil[4]{\small Inria, T\'el\'ecom Paris--LTCI, Institut Polytechnique de Paris, 91120 Palaiseau, France}
\date{}

\begin{document}

\maketitle

\vspace*{-6ex}

\begin{abstract}
    Parity superselection forbids direct measurement of odd Majorana observables. We learn time-independent, parity-covariant, $k$-mode-local Lindbladians on $m$ modes using even preparations and measurements with uninterrupted short-time evolution. Internal markers turn odd probes into even observables, and pair measurements among three separated markers calibrate their dynamical contributions. Signed Fierz inversion recovers canonical coefficients, while semidefinite fitting yields a valid generator. For finite-range models on known bounded-degree graphs with suitable marker access and a supplied weighted-strength bound $\bar\alpha$, entrywise error $\varepsilon$ is achieved using $\widetilde{\mathcal{O}}(\bar\alpha^2\varepsilon^{-2}\log(m/\delta))$ samples, without external modes or known nonzero coefficient locations. Recovery to diamond-norm error $\varepsilon$ costs an additional factor $m^2$, matching lower bounds for short-time experiments on fresh systems up to logarithmic and fixed geometric factors. Without a supplied graph, one idle ancillary mode per system mode and potentially nonlocal pair operations give total absolute coefficient error at most $\varepsilon$ per mode using $\widetilde{\mathcal{O}}_k(\bar\alpha^2\mathsf d^2m^{\lfloor k/2\rfloor}\varepsilon^{-2}\log(m/\delta))$ samples, under a supplied approximate coefficient-degree bound $\mathsf d$ and controlled weak-coefficient tails. All guarantees hold with probability at least $1-\delta$. For geometric models, fixed-support even observables can be predicted with logarithmic system-size sample complexity at fixed time and accuracy. We also give finite-volume and exponential-tail extensions.
\end{abstract}

\setcounter{tocdepth}{1}
\vspace*{-1ex}
\setlength{\cftbeforesecskip}{2pt}
\setlength{\cftbeforesubsecskip}{1pt}
\tableofcontents
\vspace*{\fill}~

\thispagestyle{empty}

\newpage

\section{Introduction}

\subsection{Motivation and related work}
    Learning dissipative fermionic dynamics requires protocols that respect both fermionic locality and parity superselection. Without an external parity reference, physical states and measurement effects commute with total fermion parity and are called even. Odd probes anticommute with parity and cannot be measured directly. Moreover, an operator supported on a few fermionic modes can become a long qubit string under a Jordan--Wigner mapping \cite{SzalayEtAl2021}. The Fermi--Hubbard model is central to strongly correlated matter \cite{Hubbard1963,ArovasEtAl2022} and ultracold-atom simulation \cite{Esslinger2010,TarruellSanchezPalencia2018,HartEtAl2015}. These platforms also allow studies of particle loss and engineered dissipation \cite{SponseleeEtAl2018}.

    Our running example is the dissipative Fermi--Hubbard model with Hamiltonian
    \begin{equation*}
        H_{\rm FH}
        =-\sum_{\langle x,y\rangle,\sigma}
        \left(J_{xy,\sigma}a_{x,\sigma}^\dagger a_{y,\sigma}+\mathrm{h.c.}\right)
        +\sum_x U_xn_{x,\uparrow}n_{x,\downarrow}
        +\sum_{x,\sigma}\mu_{x,\sigma}n_{x,\sigma},
    \end{equation*}
    and dissipation defined below. In the above line, $a_{x,\sigma}$ annihilates a fermion of spin $\sigma\in\{\uparrow,\downarrow\}$ at site $x$, $n_{x,\sigma}=a_{x,\sigma}^\dagger a_{x,\sigma}$, and $U_x,\mu_{x,\sigma}\in\mathbb R$.  Each nearest-neighbor edge $\langle x,y\rangle$ is counted once. Common dissipative processes are described by the following jump operators $L_\ell=\sqrt{\gamma_\ell}F_\ell$, $\gamma_\ell\ge0$, with
    \begin{equation}
        F_\ell\in\left\{
        a_{x,\sigma},\ a_{x,\sigma}^\dagger,\
        n_{x,\sigma}-\tfrac12\Id,\
        a_{x,\downarrow}a_{x,\uparrow},\
        a_{y,\sigma}^\dagger a_{x,\sigma}
        \right\},
        \label{eq:intro-fermi-hubbard-jumps}
    \end{equation}
    describing loss, injection, dephasing, pair loss, and nearest-neighbor incoherent hopping, respectively. All are covered by Theorem~\ref{thm:geo-noaux} under its geometry, strength, and marker-access assumptions, as detailed in Example~\ref{ex:dissipative-fermi-hubbard}. The goal is to learn the canonical generator coefficients and their nonzero locations from short-time data.

    More generally, we consider Markovian quantum dynamics governed by the master equation
    \begin{equation*}
        \frac{d}{dt}\rho_t=\cL(\rho_t),
    \end{equation*}
    where $\rho_t$ denotes the density operator at time $t$ and $\cL$ is a time-independent Gorini--Kossakowski--Sudarshan--Lindblad (GKSL) generator \cite{GKS1976,Lindblad1976}. Quantum process tomography reconstructs the evolution map from known input states and measurements of the corresponding output states \cite{NielsenChuang2000,ChuangNielsen1997,Poyatos1997}. Without structural assumptions, its resource requirements grow exponentially with the number of modes, making it impractical for large systems.

    Work on open qubit dynamics spans steady-state reconstruction \cite{BaireyEtAl2020}, quench-based inference and ansatz selection in weakly dissipative systems \cite{OlsacherEtAl2025}, characterization of digital simulations \cite{PastoriEtAl2022}, experimental characterization \cite{SamachEtAl2022,BirkeEtAl2026,SeverinEtAl2026}, and numerical inference \cite{HeightmanEtAl2026}. Recent theoretical advances address sparse and local generators under different assumptions on structure, experimental access, and identifiability \cite{IvashkovEtAl2026,AradEtAl2026,LewisTangWright2026,MobusEtAl2026,ZhouGong2026,Sinha2026,ChenYu2026}, as well as coefficient learning within a known local generator dictionary \cite{SongEtAl2026} and learning generators with structured time dependence \cite{FrancaMobusRouzeWerner2025}. Gauge-invariant Lindbladian components can also be learned under state-preparation and measurement noise using trusted single-qubit controls \cite{FlammiaSinhaTong2026}. Beyond Markovian dynamics, protocols estimate initial derivatives of stationary Gaussian noise kernels with logarithmic system-size sample complexity under locality and regularity assumptions \cite{MontanaLopezEtAl2025}. Their general qubit construction inserts a layer of single-qubit Clifford gates during the evolution.

    Precision scaling and experimental control provide a complementary distinction. Short-time protocols can attain standard-quantum-limit sample scaling under suitable structural assumptions \cite{StilckFranca2024,AradEtAl2026,LewisTangWright2026,MobusEtAl2026,SongEtAl2026,FrancaMobusRouzeWerner2025,MobusBluhmCaroWernerRouze2023}. Heisenberg-limited total evolution time has been established for Hamiltonian learning with additional control or engineered dissipation in qubit, bosonic, and fermionic systems \cite{HuangTongFangSu2023,LiTongNiGefenYing2024,MobusBluhmGefenTongWernerRouze2025,MiraniHayden2024}. Quantum error correction also enables Lindblad learning and, under additional assumptions, Heisenberg scaling for certain Hamiltonian components \cite{RomanovEtAl2026}.

    For fermions, state and observable estimation \cite{ZhaoRubinMiyake2021,WanEtAl2023,OGorman2022,Low2024,DenzlerEtAl2024,MeleHerasymenko2025,WestCerezoLarocca2026,McNultyCalegariOszmaniec2024,MajsakMcNultyOszmaniec2025} complements dynamical learning of Hamiltonians, Gaussian and weakly non-Gaussian transformations, and noise \cite{NiLiYing2024,BluhmLemmMobusSiebert2026Coulomb,CudbyStrelchuk2024,ChristensenZhao2026,Iyer2025,AustinMoralesGorshkov2025,HelsenEtAl2022,ChapmanFlammia2026,ZhangHuoLi2021}. Recent advances include optimal Gaussian-state tomography \cite{ChenEtAl2026GaussianTomography}, separations between collective and single-copy measurements \cite{Rubin2026GaussianTomography}, and improved pure-state certification \cite{DangniamEtAl2026Certification}. Short-time potential learning is already available for free fermions \cite{BluhmLemmMobusSiebert2026Coulomb}, and base protocols for fermionic Gaussian-unitary tomography attain inverse-square precision scaling in query complexity \cite{ChristensenZhao2026}. Gaussian probes and connected-correlation responses have also been proposed for short-time identification of nonquadratic Hamiltonian components \cite{Sasaki2026GaussianProbes}. Non-Markovian kernel learning has also been demonstrated numerically for free fermions using even bilinear preparations and measurements \cite{MontanaLopezEtAl2025}. To our knowledge, prior work has not established a general scalable scheme for interacting local fermionic Hamiltonians or Lindbladians that combines local even preparations and measurements, uninterrupted short-time evolution without interleaved control, and standard-quantum-limit sample scaling. We establish this combination under the stated geometry, strength, and marker-access assumptions.

    The present work builds on our local Lindbladian reconstruction framework \cite{MobusEtAl2026}, short-time response estimation and learning of time-dependent local dynamics \cite{StilckFranca2024,FrancaMobusRouzeWerner2025}, and fermionic learning and measurement methods \cite{NiLiYing2024,MiraniHayden2024,ZhaoRubinMiyake2021,WanEtAl2023,ChristensenZhao2026}. In particular, the parity-compatible use of all three pairings within a triplet is adapted from our continuum free-fermion potential-learning protocol \cite{BluhmLemmMobusSiebert2026Coulomb}. Here the pair responses instead calibrate evolving internal markers for an interacting Lindbladian. The remaining challenge is to recover information carried by odd probes while preserving fermionic locality, which requires adapting the local inversion to fermionic anticommutation.

\subsection{Parity-superselected physical models}

    Separate parity-preserving measurements on disjoint subsystems do not, in general, suffice to reconstruct their joint state \cite{DArianoEtAl2014,BartlettRudolphSpekkens2007,VidalEtAl2021,SzalayEtAl2021}. Full fermionic process tomography likewise generally needs an ancillary mode \cite{ZhangHuoLi2021}. We study parity-covariant generators: the Hamiltonian is even and the jumps may be chosen even or odd \cite{VidalEtAl2021}. Such generators arise from parity-invariant system-bath models with an even bath state, under the assumptions leading to a Markovian master equation \cite{breuer2002theory}. The class allows odd loss or injection jumps and includes density dephasing, pair loss, incoherent hopping, dissipative Fermi--Hubbard models \cite{SponseleeEtAl2018}, engineered pairing \cite{DiehlEtAl2010}, and reservoirs for topological paired phases \cite{BardynEtAl2013,IeminiEtAl2016}. Parity covariance allows odd jumps and does not require conservation of system parity.

\section{Main results}\label{sec:main-results}
\paragraph{Common assumptions and experimental access.}
    The unknown time-independent generator acts on $m$ fermionic modes and admits a sum of parity-covariant physical GKSL terms, each supported on at most $k$ modes, with a supplied weighted interaction-strength bound $\bar\alpha$. We learn its Hamiltonian and Kossakowski coefficients in a fixed Majorana basis, using a traceless Hamiltonian and identity-free dissipative representation. Each independent sample uses a fresh full system, even preparations and measurements, and one uninterrupted evolution for a preassigned time at most $(4k\bar\alpha)^{-1}$. Experiments are nonadaptive, and disjoint local blocks may be prepared and measured in parallel within one sample. The learning guarantees hold with probability at least $1-\delta$, where $\delta\in(0,1)$. Below, strength and access parameters are fixed, and $\widetilde{\cO}$ hides polylogarithmic factors.
\paragraph{Geometric learning without external modes.}
    Assume a known bounded-degree mode graph, finite interaction range, suitable nearby internal markers, and parity-preserving pair operations within each target and marker region. The graph fixes the allowed supports, but neither nonzero coefficient locations nor a local decomposition is supplied. At fixed locality and geometry, Theorem~\ref{thm:geo-noaux} estimates every canonical coefficient to entrywise error $\eps$ using
    \begin{equation*}
        N=\widetilde{\cO}\!\left(\eps^{-2}\log\frac m\delta\right)
    \end{equation*}
    samples without auxiliary modes. Returning a valid generator with diamond-norm error $\eps$ costs an additional factor $m^2$. Corollary~\ref{cor:geometric-diamond-lower} matches this dependence up to logarithmic factors at constant failure probability, in its stated accuracy regime and at the same maximum evolution time. Thresholding identifies significant coefficients, with exact support recovery under a coefficient gap.
\paragraph{Structure learning without a supplied graph.}
    Alternatively, allow one idle ancillary mode per system mode and potentially nonlocal pair operations. Assume that, after discarding a total coefficient magnitude of at most a sufficiently small $k$-dependent multiple of $\eps$ per mode, at most $\mathsf d$ coefficients involve any one mode, with $\mathsf d$ supplied. Theorem~\ref{thm:ancilla-structure} uses
    \begin{equation*}
        N=\widetilde{\cO}_k\!\left(
            \mathsf d^2m^{\lfloor k/2\rfloor}\eps^{-2}\log\frac m\delta
        \right)
    \end{equation*}
    samples to return coefficients with local error at most $\eps$ and at most $\cO_k(\mathsf d)$ nonzero entries involving each mode. Here the so-called local error is the largest, over modes, sum of absolute errors in coefficients involving that mode. Neither a graph nor nonzero coefficient locations are supplied. The estimate need not generate physical dynamics. A subsequent convex fit returns a valid generator with comparable local error, up to numerical tolerance, but may increase its coefficient degree.

\paragraph{Consequences and extensions.}
    For geometric models, Corollaries~\ref{cor:simultaneous-local-observables} and~\ref{cor:finite-volume-local-prediction} give simultaneous prediction of fixed-support even observables with logarithmic system-size dependence, or reconstruction near a prescribed region with system-size-independent local resources, at fixed time, accuracy, and model parameters. In the latter case, experiments still evolve the full system, while subsequent classical simulation uses only the reconstructed neighborhood. On sufficiently large periodic lattices with the required marker access, exponential tails with supplied bounds add only polylogarithmic overhead for coefficient and diamond-norm learning (Corollary~\ref{cor:exponential-lattice}). Example~\ref{ex:dissipative-fermi-hubbard} treats dissipative Fermi--Hubbard models.

    A marker Majorana on a disjoint mode turns an odd monomial $\Gamma_A$ into the even observable $i\Gamma_A\eta$. For internal markers, the pair responses of three separated modes carrying $\eta_1,\eta_2,\eta_3$ determine the needed corrections; signed Fierz inversion \cite{fierz1937relativistische} and a local recursion then recover the canonical coefficients from the corrected short-time derivatives (see Fig.~\ref{fig:marker-schematic}).
    \vspace*{-1ex}
    \begin{figure}[H]
        \centering
        \includegraphics[width=\linewidth]{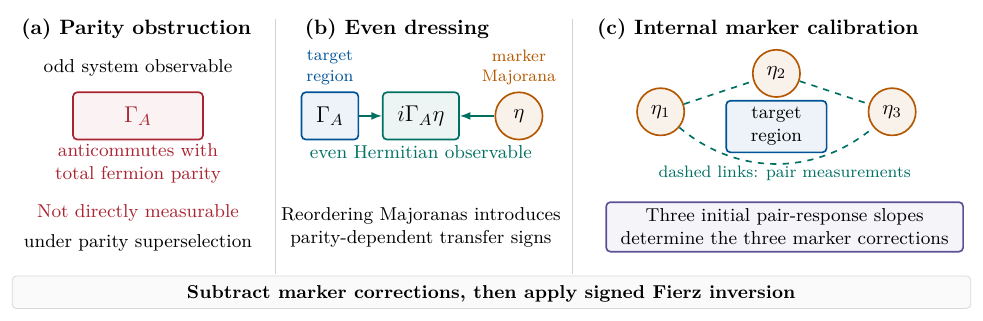}
        \vspace*{-5ex}
        \caption{Parity dressing and internal-marker calibration. (a) An odd Majorana monomial $\Gamma_A$ is not directly measurable. (b) A disjoint marker makes $i\Gamma_A\eta$ even. (c) Three nearby internal markers, beyond the target and each other's interaction range, calibrate even pair responses.}
        \label{fig:marker-schematic}
    \end{figure}

\section{Preliminaries}\label{sec:model}

\subsection{Majorana basis and parity}

    We work with $m$ fermionic modes and fix the order $\gamma_1<\cdots<\gamma_{2m}$ of their Majorana operators, which satisfy the canonical anticommutation relations (CAR),
    \begin{equation}
        \gamma_1,\ldots,\gamma_{2m},
        \qquad
        \gamma_a^\dagger=\gamma_a,
        \qquad
        \{\gamma_a,\gamma_b\}=2\delta_{ab}\Id .
    \end{equation}
    Equivalently, we may use fermionic creation and annihilation operators, defined by
    \begin{equation}
        a_j:=\frac{\gamma_{2j-1}+i\gamma_{2j}}2,
        \qquad
        n_j:=a_j^\dagger a_j
        =
        \frac{\Id+i\gamma_{2j-1}\gamma_{2j}}2 .
    \end{equation}
    The fermionic Fock space is the Hilbert space spanned by occupation states $|\nu_1,\ldots,\nu_m\rangle$, with $\nu_j\in\{0,1\}$ indicating whether mode $j$ is occupied. Its dimension is therefore $d=2^m$, including both parity subspaces. Here $a_j^\dagger$ creates a fermion in an empty mode $j$, while $a_j$ removes one from an occupied mode, and both give zero otherwise. We write $\tr$ for the ordinary, unnormalized trace on this space, so $\tr[\Id]=d$. For a subset $A=\{\alpha_1<\cdots<\alpha_{|A|}\}\subseteq[2m]$, set
    \begin{equation}
        \Gamma_A
        :=
        i^{|A|(|A|-1)/2}\gamma_{\alpha_1}\cdots\gamma_{\alpha_{|A|}},
        \qquad
        \Gamma_\varnothing=\Id .
    \end{equation}
    The phase is chosen so that every basis element is a Hermitian unitary: $\Gamma_A=\Gamma_A^\dagger$ and $\Gamma_A^2=\Id$. Moreover,
    \begin{equation}
        d^{-1}\tr[\Gamma_A\Gamma_B]=\delta_{A,B}.
    \end{equation}
    Write
    \begin{equation}
        \pi(A):=|A|\bmod 2.
    \end{equation}
    With $\mathsf N:=\sum_{j=1}^m n_j$, the particle-number parity operator satisfies
    \begin{equation}
        \Pi=(-1)^{\mathsf N}=(-i)^m\gamma_1\cdots\gamma_{2m},
        \qquad
        \Pi\Gamma_A\Pi=(-1)^{\pi(A)}\Gamma_A .
    \end{equation}
    The physical algebra without a parity reference is the even algebra, consisting of operators $X$ with $\Pi X\Pi=X$. A linear map $\mathcal T$ is parity covariant if $\mathcal T(\Pi X\Pi)=\Pi\mathcal T(X)\Pi$ for every operator $X$, or equivalently, it preserves the even and odd operator subspaces. For a mode set $e\subseteq[m]$, a Majorana label set $A\subseteq[2m]$, and $j\in[m]$, define the associated labels and local Majorana parity by
    \begin{equation}
        \cM(e):=\{2j-1,2j:j\in e\},
        \qquad
        q_j(A):=
        |A\cap\{2j-1,2j\}|\bmod2.
        \label{eq:local-majorana-parity}
    \end{equation}
    The fermionic support is
    \begin{equation}
        \supp_{\rm f}(A):=
        \{j:A\cap\{2j-1,2j\}\neq\varnothing\}.
    \end{equation}
    Thus $\pi(A)=\sum_{j=1}^m q_j(A)\bmod2$. We use the same notation with two inputs for the combined fermionic support of $A$ and $B$:
    \begin{equation}
        \fsupp(A,B):=\supp_{\rm f}(A)\cup\supp_{\rm f}(B).
    \end{equation}
    For a mode set $e\subseteq[m]$, let
    \begin{equation*}
        \mathfrak A_e
        :=
        \operatorname{span}\{\Gamma_A:A\subseteq\cM(e)\}
    \end{equation*}
    be its local CAR algebra. An operator is supported on $e$ if it belongs to $\mathfrak A_e$. Fermionic locality refers to this mode support, which can differ from the qubit support of the same operator in a Jordan--Wigner representation.

\subsection{Local parity-covariant GKSL cone}

    Fix an integer locality parameter $k\ge1$.  We parameterize dissipation by local positive semidefinite (PSD) Kossakowski blocks.  For a region $e\subseteq[m]$ with $|e|\le k$, define
    \begin{equation}
        \cI_{e,p}:=
        \{A\subseteq\cM(e):A\neq\varnothing,\ \pi(A)=p\},
        \qquad p\in\{0,1\}.
    \end{equation}
    \begin{definition}[Local parity-covariant GKSL cone]\label{def:localcone}
        A generator is a $k$-local fermionic Lindbladian, written $\cL\in\mathfrak L_k$, if it has a decomposition
        \begin{equation}
            \cL=\sum_{e:|e|\le k}\cL_e
        \end{equation}
        with
        \begin{align}
            \cL_e(\rho)
            =
            -i[H_e,\rho]
            +
            \sum_{p=0}^1
            \sum_{A,B\in\cI_{e,p}}
            K^{(e,p)}_{A,B}
            \left(
                \Gamma_A\rho\Gamma_B
                -
                \frac12\{\Gamma_B\Gamma_A,\rho\}
            \right),\label{eq:local-gksl}
        \end{align}
        where $H_e$ is even, Hermitian, traceless, and supported on $e$, and each $K^{(e,p)}$ is PSD.
    \end{definition}

    Equivalently, each local term has parity-homogeneous jump operators in the identity-free GKSL gauge: their identity components are removed, with the induced Hermitian Hamiltonian corrections absorbed into $H_e$ \cite{GKS1976,Lindblad1976,breuer2002theory}. Define the global Hamiltonian and Kossakowski matrix by
    \begin{equation}
        H:=\sum_e H_e=
        \sum_{\substack{A\neq\varnothing\\ \pi(A)=0\\ |\supp_{\rm f}(A)|\le k}}
        h_A\Gamma_A,
        \qquad h_A\in\mathbb R,
    \end{equation}
    and
    \begin{equation}
    G_{A,B}
        :=
        \sum_{\substack{e\supseteq \fsupp(A,B)\\ |e|\le k}}
        K^{(e,\pi(A))}_{A,B},
        \qquad A,B\neq\varnothing,\quad \pi(A)=\pi(B).
        \label{eq:aggregate-G}
    \end{equation}
    Local blocks are extended by zero outside their index sets, and $G_{A,B}:=0$ when $\pi(A)\neq\pi(B)$. We write $\cL_{H,G}$ for the resulting generator. Locality implies $G_{A,B}=0$ when $|\fsupp(A,B)|>k$. Membership in $\mathfrak L_k$ additionally requires the local PSD decomposition in \eqref{eq:aggregate-G}.

    Every linear map has a unique expansion in the Majorana-superoperator basis,
    \begin{equation}
        \cL
        =
        \sum_{A,B\subseteq[2m]}
        \chi_{A,B}\,\Gamma_A(\cdot)\Gamma_B .
        \label{eq:chi-expansion}
    \end{equation}
    Parity covariance is exactly the block condition
    \begin{equation}
        \chi_{A,B}=0
        \qquad\text{unless}\qquad
        \pi(A)=\pi(B).
        \label{eq:chi-parity}
    \end{equation}
    For $k$-local generators, we use the nonempty coefficient index set
    \begin{equation}
        \cI_k^\chi
        :=
        \{(A,B):\pi(A)=\pi(B),\ 1\le|\fsupp(A,B)|\le k\}.
        \label{eq:punctured-chi-index-set}
    \end{equation}

    Comparing with \eqref{eq:local-gksl} gives
    \begin{align}
        G_{A,B}&=\chi_{A,B},
        && A,B\neq\varnothing,\quad \pi(A)=\pi(B),
        \label{eq:G-from-chi}\\
        h_A&=\frac{i}{2}
        \left(\chi_{A,\varnothing}-\chi_{\varnothing,A}\right),
        && A\neq\varnothing,\quad \pi(A)=0 .
        \label{eq:h-from-chi}
    \end{align}
    The dissipative anticommutator contributes equally to the two identity-index coefficients, so it cancels in the skew difference.

    These identities uniquely determine $h$ and $G$ from $\cL$. However, a decomposition $\cL=\sum_e\cL_e$ into overlapping local physical generators in $\mathfrak L_k$ need not be unique. We call any such decomposition a \emph{witness} for membership in $\mathfrak L_k$ and fix one for the analysis. All references to local terms and their supports are relative to this fixed decomposition unless stated otherwise.

    For operators, use the unnormalized Hilbert--Schmidt norm
    \begin{equation*}
        \norm{X}_2:=\bigl(\tr[X^\dagger X]\bigr)^{1/2},
    \end{equation*}
    and for a linear map $\mathcal T$ set
    \begin{equation*}
        \norm{\mathcal T}_{2\to2}
        :=
        \sup_{X\neq0}\frac{\norm{\mathcal T(X)}_2}{\norm{X}_2}.
    \end{equation*}
    We use $\dagger$ for operator and matrix adjoints and for the Hilbert--Schmidt adjoint of a linear map. The weighted interaction strength of the fixed decomposition is
    \begin{equation}
        \alpha(\{\cL_e\})
        :=
        \max_{u\in[m]}
        \sum_{e\ni u}
        \norm{\cL_e^\dagger}_{2\to2}.
        \label{eq:weighted-strength}
    \end{equation}
    The fixed decomposition and its coefficient locations are not supplied to the learner. Every learning theorem that uses a short-time grid instead assumes that the learner receives an upper bound $\bar\alpha>0$ satisfying
    \begin{equation*}
        \alpha(\{\cL_e\})\le\bar\alpha .
    \end{equation*}
    This numerical input fixes the time window and makes the physical-time dependence explicit.
    For a local parity-covariant map, the $2\to2$ norm is evaluated on its support algebra. Its fermionic extension has the same induced Hilbert--Schmidt norm by Lemma~\ref{lem:graded-diamond-amplification}.

    \begin{lemma}[Canonical local coefficient bound]\label{lem:local-coefficient-norm}
        Let $1\le j\le k$ and write a parity-covariant $j$-mode generator in the local form \eqref{eq:local-gksl}, with Hamiltonian coefficients $h_A$, Kossakowski blocks $K^{(p)}$, and Majorana-superoperator coefficients $\chi_{A,B}$. Then
        \begin{equation*}
            \sum_A|h_A|
            +
            \sum_{p=0}^1\sum_{A,B}|K^{(p)}_{A,B}|
            +
            \sum_{A,B}|\chi_{A,B}|
            \le
            c_k\|\cL^\dagger\|_{2\to2},
            \qquad c_k=2\cdot4^k.
        \end{equation*}
        The norm is taken on the full local operator algebra, including its even and odd subspaces, and $c_k$ is independent of the ambient system size.
    \end{lemma}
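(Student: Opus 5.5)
The plan is to bound every coefficient by a Hilbert--Schmidt overlap with the superoperator $\cL^\dagger$ itself, and then sum over the finitely many indices of the local algebra. Work on the local algebra of $j$ modes, with dimension $D=2^j$ and $4^j$ Majorana monomials. The superoperators $X\mapsto\Gamma_A X\Gamma_B$ are orthonormal with respect to the normalized inner product $\langle \mathcal S,\mathcal T\rangle:=D^{-2}\tr[\mathcal S^\dagger\mathcal T]$, where the trace is over the $D^2$-dimensional operator space. Indeed $\tr_{\rm super}[\Gamma_A(\cdot)\Gamma_B]=\tr[\Gamma_A]\tr[\Gamma_B]$, and together with $D^{-1}\tr[\Gamma_A\Gamma_B]=\delta_{A,B}$ this gives orthonormality.

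\textbf{Step 1: the $\chi$ coefficients.} Each coefficient is an overlap,
\[
\chi_{A,B}=D^{-2}\tr_{\rm super}\!\bigl[(\Gamma_A(\cdot)\Gamma_B)^\dagger\cL\bigr].
\]
Writing the superoperator trace in the orthonormal basis $\{D^{-1/2}\Gamma_C\}$ gives
\[
|\chi_{A,B}|\le D^{-2}\sum_C D^{-1}\bigl|\tr[\Gamma_B\Gamma_C\Gamma_A\,\cL(\Gamma_C)]\bigr|.
\]
Each summand is at most $D^{-1}\|\Gamma_B\Gamma_C\Gamma_A\|_2\|\cL(\Gamma_C)\|_2\le D^{-1}\sqrt D\,\|\cL\|_{2\to2}\sqrt D=\|\cL\|_{2\to2}$. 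Since $\|\cL\|_{2\to2}=\|\cL^\dagger\|_{2\to2}$, this yields $|\chi_{A,B}|\le\|\cL^\dagger\|_{2\to2}$. Summing over the $(4^j)^2=16^j$ pairs would be too crude, because the target constant is $2\cdot 4^k$. So I would instead use Cauchy--Schwarz on the whole family:
\[
\sum_{A,B}|\chi_{A,B}|^2=\|\cL\|_{\rm HS,norm}^2\le\|\cL\|_{2\to2}^2,
\]
because the normalized superoperator Hilbert--Schmidt norm $D^{-2}\sum_C\|\cL(\Gamma_C)\|_2^2$ averages over only $D^2$ basis elements. Cauchy--Schwarz then gives
\[
\sum_{A,B}|\chi_{A,B}|\le 4^j\,\|\cL^\dagger\|_{2\to2}.
\]

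\textbf{Step 2: $h$ and $K$ via \eqref{eq:G-from-chi}--\eqref{eq:h-from-chi}.} The identity $K^{(p)}_{A,B}=G_{A,B}=\chi_{A,B}$ holds for nonempty $A,B$ in the single-region setting, so $\sum_{p}\sum_{A,B}|K^{(p)}_{A,B}|$ is already bounded by the $\chi$-sum restricted to nonempty indices. For the Hamiltonian, $|h_A|\le\tfrac12(|\chi_{A,\varnothing}|+|\chi_{\varnothing,A}|)$, so $\sum_A|h_A|$ is at most half of the $\chi$-mass on the rows and columns indexed by $\varnothing$.

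\textbf{Step 3: combine.} These two contributions are disjoint parts of the $\chi$-sum, so together they total at most $\sum|\chi|$. The overall sum is therefore at most $2\sum|\chi|\le 2\cdot4^j\|\cL^\dagger\|_{2\to2}\le c_k\|\cL^\dagger\|_{2\to2}$. The constant depends only on $j\le k$, not on $m$.

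\textbf{Main obstacle.} The crux is Step 1's Hilbert--Schmidt-to-operator comparison on the superoperator level: the normalized superoperator $2$-norm must be bounded by the $2\to2$ norm. This holds because it is a normalized Frobenius norm, that is, an average of $\|\cL(X)\|_2^2$ over an orthonormal basis of $D^2$ operators, each of which is at most $\|\cL\|_{2\to2}^2$. The remaining care is to check that the normalizations of the $\Gamma_A(\cdot)\Gamma_B$ basis are exactly orthonormal, so that no stray $D$ factors appear. The evaluation on the full local algebra, including odd operators, is what allows the basis sum to include all $\Gamma_C$.
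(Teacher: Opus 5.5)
Your proposal is correct and follows essentially the same route as the paper: the extraction identities give $\sum_A|h_A|+\sum_{p}\sum_{A,B}|K^{(p)}_{A,B}|\le\sum_{A,B}|\chi_{A,B}|$, the Frobenius-versus-operator-norm comparison on the $D^2$-dimensional operator space gives $\sum_{A,B}|\chi_{A,B}|^2\le\|\cL\|_{2\to2}^2$, and Cauchy--Schwarz over the $16^j$ pairs yields $\sum_{A,B}|\chi_{A,B}|\le 4^j\|\cL^\dagger\|_{2\to2}$. One small normalization slip: since $\|\Gamma_C\|_2^2=D$, the normalized superoperator norm is $D^{-3}\sum_C\|\cL(\Gamma_C)\|_2^2$, not $D^{-2}\sum_C\|\cL(\Gamma_C)\|_2^2$; your description as an average over the orthonormal basis $\Gamma_C/\sqrt D$ is the correct version and gives the bound you use.
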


    \begin{proof}
        Set $d_j=2^j$. For a single local term, \eqref{eq:G-from-chi}, \eqref{eq:h-from-chi}, and the triangle inequality give
        \begin{equation*}
            \sum_A|h_A|+\sum_{p=0}^1\sum_{A,B}|K^{(p)}_{A,B}|
            \le \sum_{A,B}|\chi_{A,B}|.
        \end{equation*}
        Let $M$ be the $d_j^2\times d_j^2$ matrix of $\cL$ under column vectorization. Since vectorization is an isometry for the Hilbert--Schmidt norm, $\|M\|_{\rm op}=\|\cL\|_{2\to2}$. The basis matrices $\Gamma_B^{\mathsf T}\otimes\Gamma_A$ are orthogonal and have Frobenius norm $d_j$, so
        \begin{equation*}
            \|M\|_{\rm F}^2=d_j^2\sum_{A,B}|\chi_{A,B}|^2.
        \end{equation*}
        If $s_\ell$ are the singular values of $M$, then
        \begin{equation*}
            \|M\|_{\rm F}^2
            =\sum_\ell s_\ell^2
            \le d_j^2\max_\ell s_\ell^2
            =d_j^2\|\cL\|_{2\to2}^2.
        \end{equation*}
        Hence $\|M\|_{\rm F}\le d_j\|\cL\|_{2\to2}$, and Cauchy--Schwarz over the $d_j^4$ coefficient pairs gives
        \begin{equation*}
            \sum_{A,B}|\chi_{A,B}|
            \le d_j\|M\|_{\rm F}
            \le d_j^2\|\cL\|_{2\to2}
            =d_j^2\|\cL^\dagger\|_{2\to2}.
        \end{equation*}
        The last equality uses invariance of the induced Hilbert--Schmidt norm under adjoints. The full coefficient sum is therefore at most $2d_j^2\|\cL^\dagger\|_{2\to2}$, and $d_j^2=4^j\le4^k$ concludes the proof.
    \end{proof}


\section{Parity dressing and local signed inversion}\label{sec:parity-dressing-and-local-signed-inversion}

    We first derive a signed inversion formula for even probes obtained by parity dressing. Restricting this formula to local labels yields a descending reconstruction with controlled error.

\subsection{Parity dressing}\label{subsec:parity-dressing}

Add one ancillary fermionic mode with Majoranas $\eta$ and $\eta'$, ordered after all system Majoranas. Only $\eta$ is used for parity dressing. The operator $\eta'$ completes the physical ancillary mode. The system map is extended to the enlarged CAR algebra using the same Majorana-superoperator expansion,
    \begin{equation}
        \cL_{\rm ext}(X)
        :=
        \sum_{A,B}
        \chi_{A,B}\,\Gamma_A X\Gamma_B ,
        \label{eq:graded-extension}
    \end{equation}
    We call this the \emph{fermionic extension}. Parity-homogeneous operators on disjoint system and ancillary modes obey the $\mathbb Z_2$-graded commutation rule
    \begin{equation*}
        XY=(-1)^{\pi(X)\pi(Y)}YX.
    \end{equation*}
    Thus two odd operators anticommute, whereas they commute if either operator is even.  Using this fermionic sign rule avoids the ambiguity of an ordinary tensor-product extension for odd operators \cite{SzalayEtAl2021}. An ancillary mode is called idle when it is appended only for algebraic parity dressing and carries no Hamiltonian or dissipative support of its own.

    \begin{lemma}[Norm stability under fermionic extension]\label{lem:graded-diamond-amplification}
        Let $\mathcal T$ be a parity-covariant linear map on the system CAR algebra and let $\mathcal T_{\rm ext}$ be its fermionic extension by finitely many idle fermionic modes.  Then
        \begin{equation*}
            \|\mathcal T_{\rm ext}\|_{2\to2}
            =
            \|\mathcal T\|_{2\to2},
            \qquad
            \|\mathcal T_{\rm ext}(X)\|_1
            \le
            \|\mathcal T\|_\diamond\|X\|_1
        \end{equation*}
        for every $X$ on the enlarged algebra.  In particular, for parity-covariant Lindbladians $\cL$ and $\cK$,
        \begin{equation*}
            \|e^{t\cL_{\rm ext}}-e^{t\cK_{\rm ext}}\|_{1\to1}
            \le
            t\|\cL-\cK\|_\diamond ,
            \qquad t\ge0 .
        \end{equation*}
    \end{lemma}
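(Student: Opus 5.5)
The plan is to reduce the fermionic extension to an ordinary tensor-product extension, twisted by a parity-dependent unitary, so that both norm statements follow from standard facts.

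\textbf{Step 1 (splitting the enlarged algebra).} Let the idle modes carry Majoranas $\eta_1,\dots,\eta_{2r}$, ordered after the system ones. Every operator on the enlarged algebra has a unique decomposition $X=\sum_C X_C\,\Gamma'_C$, where $X_C$ lies in the system algebra and $\Gamma'_C$ are the ancillary Majorana monomials. Parity covariance gives $\chi_{A,B}\neq0$ only when $\pi(A)=\pi(B)$. Moving $\Gamma'_C$ to the right past $\Gamma_B$ then produces the sign $(-1)^{\pi(B)\pi(C)}$. This yields
\begin{equation*}
\mathcal T_{\rm ext}(X_C\Gamma'_C)=\mathcal T\bigl(\Pi^{\pi(C)}X_C\bigr)\,\Pi^{\pi(C)}\,\Gamma'_C
=\bigl(\Pi^{\pi(C)}\,\mathcal T(X_C\,\Pi^{\pi(C)})\bigr)\Gamma'_C ,
\end{equation*}
up to the correct placement of the parity factor. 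Concretely, for odd $C$ the extension acts as $\mathcal T^{\Pi}:=\mathcal T(\,\cdot\,\Pi)\Pi$, which is well defined because $\Pi\Gamma_B=(-1)^{\pi(B)}\Gamma_B\Pi$. I will verify this identity directly on basis elements $\Gamma_A\Gamma_D\Gamma'_C\Gamma_B$.

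\textbf{Step 2 ($2\to2$ equality).} The monomials $\Gamma'_C$ are Hilbert--Schmidt orthogonal with equal norms, so $\mathcal T_{\rm ext}$ is block-diagonal over $C$. Each block is either $\mathcal T$ or $X\mapsto\mathcal T(X\Pi)\Pi$. Right multiplication by the unitary $\Pi$ is a Hilbert--Schmidt isometry, so every block has norm $\|\mathcal T\|_{2\to2}$. The norm of a block-diagonal map is the maximum over its blocks, and the block $C=\varnothing$ always occurs. Hence $\|\mathcal T_{\rm ext}\|_{2\to2}=\|\mathcal T\|_{2\to2}$.

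\textbf{Step 3 (trace-norm bound, the main obstacle).} The difficulty is that the sign twist is not a tensor product, so the diamond norm cannot be invoked verbatim. I would use a Jordan--Wigner-type isomorphism of the enlarged algebra with $\mathfrak A_{\rm sys}\otimes M_{2^r}$: map $\gamma_a\mapsto\gamma_a\otimes\Id$ and $\eta_c\mapsto\Pi\otimes\sigma_c$, where the $\sigma_c$ are Majoranas on the ancilla. This is a $*$-isomorphism onto the full matrix algebra, hence unitarily implemented, so it preserves the trace norm. Under it, $\Gamma_A X\Gamma_B$ with $\pi(A)=\pi(B)$ becomes $(\Gamma_A\otimes\Id)\,\tilde X\,(\Gamma_B\otimes\Id)$. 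The reason is that the conjugating factors commute or anticommute with $\Pi\otimes\sigma_c$ consistently on both sides, which is exactly the step requiring parity covariance. Therefore $\mathcal T_{\rm ext}$ is unitarily equivalent to $\mathcal T\otimes\mathrm{id}_{2^r}$, and the bound $\|\mathcal T_{\rm ext}(X)\|_1\le\|\mathcal T\|_\diamond\|X\|_1$ follows from the definition of the diamond norm. Applying the same isomorphism to Step 2 gives a second proof of the equality there. I expect the main care to be in checking that the images of the $\eta_c$ anticommute correctly with the images of odd system monomials, since $\Pi$ anticommutes with each $\gamma_a$.

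\textbf{Step 4 (semigroup bound).} The extension is linear and multiplicative under composition, because it corresponds to $\mathcal T\mapsto\mathcal T\otimes\mathrm{id}$. Hence $e^{t\cL_{\rm ext}}=(e^{t\cL})_{\rm ext}$. Write $e^{t\cL}-e^{t\cK}=\int_0^t e^{(t-s)\cL}(\cL-\cK)e^{s\cK}\,ds$ and extend each factor. The extended semigroups are CPTP, being unitarily equivalent to tensor products of CPTP maps with the identity, so they are $1\to1$ contractions. Step 3 then bounds the middle factor by $\|\cL-\cK\|_\diamond$, and integrating over $s\in[0,t]$ gives the factor $t$.
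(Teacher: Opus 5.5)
Your proposal is correct and is essentially the paper's argument. Ordering the system modes first, the identification $\gamma_a\mapsto\gamma_a\otimes\Id$, $\eta_c\mapsto\Pi\otimes\sigma_c$ represents $\mathcal T_{\rm ext}$ as the ordinary amplification $\mathcal T\otimes\mathrm{id}$, which gives both norm statements, and Duhamel's formula with contractive amplified semigroups gives the final bound; your Steps~1--2 are a valid but redundant direct check of the $2\to2$ equality.

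Two small remarks. First, since the identification is multiplicative, $\Gamma_AX\Gamma_B\mapsto(\Gamma_A\otimes\Id)\tilde X(\Gamma_B\otimes\Id)$ holds for all $A,B$, so Step~3 does not actually use parity covariance. Second, the first display of Step~1 should read $\mathcal T(X_C\Pi^{\pi(C)})\Pi^{\pi(C)}\Gamma'_C$, as in your concrete formula; the placement you wrote differs from it by a sign on odd $X_C$.
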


    \begin{proof}
        Order the system modes before the idle modes. Under the resulting identification of their Fock spaces, system Majorana monomials act as $\Gamma_A\otimes\Id$, while the fermionic grading is carried by parity strings on odd ancillary operators. Thus the fermionic extension is represented on the full matrix algebra by the ordinary amplification $\mathcal T_{\rm ext}=\mathcal T\otimes\operatorname{id}$. Consequently,
        \begin{equation*}
            \|\mathcal T_{\rm ext}\|_{2\to2}=\|\mathcal T\|_{2\to2},
            \qquad
            \|\mathcal T_{\rm ext}\|_{1\to1}\le\|\mathcal T\|_\diamond.
        \end{equation*}
        These bounds are independent of the chosen fermionic ordering because CAR-mode permutations are unitarily implemented. Finally, Duhamel's formula gives
        \begin{equation*}
            e^{t\cL_{\rm ext}}-e^{t\cK_{\rm ext}}
            =
            \int_0^t
            e^{(t-s)\cL_{\rm ext}}
            (\cL-\cK)_{\rm ext}
            e^{s\cK_{\rm ext}}\,ds .
        \end{equation*}
        The two amplified semigroups are trace-norm contractions, so integration yields the stated estimate.
    \end{proof}

    Define the dressing map
    \begin{equation}
        J_\eta(\Gamma_A)
        :=
        i^{\pi(A)}\Gamma_A\eta^{\pi(A)}
        =
        \begin{cases}
            \Gamma_A, & \pi(A)=0,\\
            i\Gamma_A\eta, & \pi(A)=1.
        \end{cases}
        \label{eq:ancilla-dressing}
    \end{equation}
    Then every $J_\eta(\Gamma_A)$ is an even Hermitian unitary, and
    \begin{equation}
        (2d)^{-1}
        \tr\!\left[
            J_\eta(\Gamma_A)J_\eta(\Gamma_B)
        \right]
        =
        \delta_{A,B}.
    \end{equation}
    The physically queried dressed transfer entries are
    \begin{equation}
        \Lambda_{C,D}
        :=
        (2d)^{-1}
        \tr\!\left[
            J_\eta(\Gamma_C)\,
            \cL_{\rm ext}(J_\eta(\Gamma_D))
        \right].
        \label{eq:dressed-transfer}
    \end{equation}

\subsection{Signed Fierz inversion}\label{subsec:signed-fierz}

    The inversion uses the following completeness identities for the Majorana basis.

\begin{lemma}[Majorana completeness]\label{lem:maj-completeness}
        For all matrix indices $p,q,r,s$,
        \begin{equation}\label{eq:majorana-completeness}
            \sum_{A\subseteq[2m]}
            (\Gamma_A)_{pq}(\Gamma_A)_{rs}
            =
            d\,\delta_{ps}\delta_{qr}.
        \end{equation}
        Consequently, for all matrices $X,Y,Z,W$ on the $d$-dimensional Fock space,
        \begin{equation}\label{eq:majorana-fierz}
            \sum_{C,D\subseteq[2m]}
            \tr[\Gamma_C X\Gamma_D Y]
            \tr[\Gamma_C Z\Gamma_D W]
            =
            d^2\tr[XW]\tr[YZ].
        \end{equation}
    \end{lemma}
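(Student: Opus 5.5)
The first identity \eqref{eq:majorana-completeness} is the matrix form of completeness of an orthonormal basis, and the second identity \eqref{eq:majorana-fierz} then follows by substituting it twice and contracting indices.

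For \eqref{eq:majorana-completeness} we use that the $4^m=d^2$ operators $\Gamma_A$ are orthonormal for the normalized trace inner product $d^{-1}\tr[X^\dagger Y]$. This holds because $\Gamma_A^\dagger=\Gamma_A$ and $d^{-1}\tr[\Gamma_A\Gamma_B]=\delta_{A,B}$. Since $d^2=\dim M_d(\mathbb C)$, they form an orthonormal basis. Hence every matrix $X$ satisfies
\begin{equation*}
X=d^{-1}\sum_A\tr[\Gamma_A X]\,\Gamma_A,
\end{equation*}
where Hermiticity of $\Gamma_A$ lets us drop the adjoint. Take $X=E_{sr}$, the matrix unit with $(E_{sr})_{ab}=\delta_{as}\delta_{br}$. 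Then $\tr[\Gamma_A E_{sr}]=(\Gamma_A)_{rs}$, and reading off the $(p,q)$ entry gives
\begin{equation*}
\delta_{ps}\delta_{qr}=d^{-1}\sum_A(\Gamma_A)_{rs}(\Gamma_A)_{pq},
\end{equation*}
which is the claim. The only point needing care is the index bookkeeping, so that the transpose lands as $\delta_{ps}\delta_{qr}$ rather than $\delta_{pr}\delta_{qs}$.

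For \eqref{eq:majorana-fierz}, expand both traces in indices:
\begin{equation*}
\tr[\Gamma_C X\Gamma_D Y]\tr[\Gamma_C Z\Gamma_D W]=\sum (\Gamma_C)_{ab}X_{bc}(\Gamma_D)_{ce}Y_{ea}\,(\Gamma_C)_{fg}Z_{gh}(\Gamma_D)_{hl}W_{lf}.
\end{equation*}
Summing over $C$ with \eqref{eq:majorana-completeness} gives $d\,\delta_{ag}\delta_{bf}$. Summing over $D$ gives $d\,\delta_{cl}\delta_{eh}$. Contracting these deltas leaves
\begin{equation*}
d^2\sum X_{bc}Y_{ea}Z_{ae}W_{cb}=d^2\,\tr[XW]\,\tr[YZ].
\end{equation*}
This second step is a routine index contraction. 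The main obstacle is simply getting the pairing of indices right, which I would double-check on the case $m=1$ with $X=Y=Z=W=\Id$. There both sides equal $d^2\cdot d^2$: the left side is $\sum_{C,D}\tr[\Gamma_C\Gamma_D]^2=d^2\cdot d^2$, using the orthonormality relation $\tr[\Gamma_C\Gamma_D]=d\,\delta_{C,D}$, so the normalizations match.
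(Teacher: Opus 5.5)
Your proof is correct and follows essentially the paper's route: the paper also derives both identities solely from the fact that $\{\Gamma_A/\sqrt d\}_A$ is a Hermitian orthonormal Hilbert--Schmidt basis, and it defers the index computation to the completeness argument of its cited reference. You carry out that computation explicitly, and your index bookkeeping checks out, both for the matrix-unit choice $E_{sr}$ and for the two delta contractions.
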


    \begin{proof}
        The operators $\{\Gamma_A/\sqrt d\}_A$ form a Hermitian orthonormal Hilbert--Schmidt basis of the full Fock-space matrix algebra.  Both identities therefore follow from the completeness argument in \cite[proof of Lemma~3.2]{MobusEtAl2026}, which uses only these basis properties.
    \end{proof}

    For comparison, undressed Majorana data satisfy
    \begin{equation*}
        \chi_{A,B}
        =
        d^{-4}
        \sum_{C,D}
        \tr[\Gamma_C\cL(\Gamma_D)]
        \tr[\Gamma_C\Gamma_B\Gamma_D\Gamma_A].
    \end{equation*}
    These algebraically complete data are not physical under parity superselection when $C$ or $D$ is odd.

\begin{lemma}[Signed dressed transfer relation]\label{lem:signed-transfer}
        For a parity-covariant map and every $C,D\subseteq[2m]$,
        \begin{equation}
            \Lambda_{C,D}
            =
            d^{-1}
            \sum_{\substack{A,B\\ \pi(A)=\pi(B)}}
            (-1)^{\pi(D)\pi(B)}
            \chi_{A,B}
            \tr[\Gamma_C\Gamma_A\Gamma_D\Gamma_B].
            \label{eq:signed-transfer}
        \end{equation}
        Both sides vanish whenever $\pi(C)\neq\pi(D)$.
    \end{lemma}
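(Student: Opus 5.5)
The plan is to substitute the expansion \eqref{eq:graded-extension} into \eqref{eq:dressed-transfer} and then move the marker Majorana $\eta$ through the system monomials using the graded commutation rule.

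\textbf{Reduction to a single term.} By linearity and the parity block condition \eqref{eq:chi-parity}, only pairs with $\pi(A)=\pi(B)$ contribute. It therefore suffices to evaluate
\begin{equation*}
    T_{A,B}:=(2d)^{-1}\tr\!\left[J_\eta(\Gamma_C)\,\Gamma_A J_\eta(\Gamma_D)\Gamma_B\right]
\end{equation*}
for each such pair. Write $J_\eta(\Gamma_D)=i^{\pi(D)}\Gamma_D\eta^{\pi(D)}$, and similarly for $C$.

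\textbf{Moving $\eta$ past $\Gamma_B$.} Inside the trace we have $\Gamma_D\eta^{\pi(D)}\Gamma_B$. Since $\eta$ is a Majorana on a mode disjoint from the system, it anticommutes with every system Majorana. Hence
\begin{equation*}
    \eta^{\pi(D)}\Gamma_B=(-1)^{\pi(D)\pi(B)}\Gamma_B\eta^{\pi(D)}.
\end{equation*}
This is the origin of the sign in \eqref{eq:signed-transfer}. The trace then becomes
\begin{equation*}
    i^{\pi(C)+\pi(D)}(-1)^{\pi(D)\pi(B)}\tr\!\left[\Gamma_C\eta^{\pi(C)}\Gamma_A\Gamma_D\Gamma_B\eta^{\pi(D)}\right].
\end{equation*}

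\textbf{Case $\pi(C)\neq\pi(D)$.} Exactly one factor of $\eta$ remains. The trace over the enlarged algebra of a system operator times $\eta$ vanishes: in the tensor identification of Lemma~\ref{lem:graded-diamond-amplification}, $\eta$ is represented by a parity string times a traceless ancilla operator, and the system factor is unaffected. So both sides vanish. On the right-hand side, $\Gamma_C\Gamma_A\Gamma_D\Gamma_B$ has parity $\pi(C)+\pi(D)$ when $\pi(A)=\pi(B)$, so it is an odd product. An odd product is proportional to $\Gamma_E$ with $|E|$ odd, hence $E\neq\varnothing$, and its trace is zero. This also proves the final assertion of the lemma.

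\textbf{Case $\pi(C)=\pi(D)=0$.} No $\eta$ appears. The trace over the enlarged algebra equals $2\tr[\Gamma_C\Gamma_A\Gamma_D\Gamma_B]$, where the factor $2$ is the ancilla dimension. This cancels against $(2d)^{-1}$ and gives $d^{-1}\tr[\Gamma_C\Gamma_A\Gamma_D\Gamma_B]$, with sign $+1$.

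\textbf{Case $\pi(C)=\pi(D)=1$.} The prefactor is $i^2=-1$. We must move the first $\eta$ past $\Gamma_A\Gamma_D\Gamma_B$, which has parity $\pi(A)+1+\pi(B)\equiv1$. This produces a factor $-1$, and then $\eta^2=\Id$. The two minus signs cancel. The ancilla trace again contributes a factor $2$, leaving
\begin{equation*}
    (-1)^{\pi(D)\pi(B)}\,d^{-1}\tr[\Gamma_C\Gamma_A\Gamma_D\Gamma_B].
\end{equation*}

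Summing over $(A,B)$ yields \eqref{eq:signed-transfer}. The main care point is the sign bookkeeping, which comes from two sources. The first is the placement of $\eta$ relative to $\Gamma_B$. The second is the parity of the intermediate product in the odd case. Both parities should be tracked explicitly using $\pi(A)=\pi(B)$.
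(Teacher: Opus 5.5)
Your proof is correct and follows essentially the paper's route: expand $\cL_{\rm ext}$, anticommute the marker factors through the system monomials, note that a single leftover $\eta$ has vanishing trace, and in the paired case collect a factor $2$ from the ancillary trace. The only difference is the order of commutation: you first move the second $\eta^{\pi(D)}$ past $\Gamma_B$, while the paper moves the first $\eta^{q}$ through $\Gamma_A\Gamma_D$ and combines the result with $i^{2q}$; this does not affect the argument.
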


    \begin{proof}
        If $\pi(C)\neq\pi(D)$, the ancillary factor contains one power of $\eta$, so $\Lambda_{C,D}=0$. The trace on the right also vanishes termwise: since $\pi(A)=\pi(B)$, the product $\Gamma_C\Gamma_A\Gamma_D\Gamma_B$ is odd. Now suppose $\pi(C)=\pi(D)=q$. In one summand,
        \begin{equation*}
            J_\eta(\Gamma_C)\Gamma_AJ_\eta(\Gamma_D)\Gamma_B
            =
            i^{2q}\Gamma_C\eta^q\Gamma_A\Gamma_D\eta^q\Gamma_B.
        \end{equation*}
        Moving the first $\eta^q$ through $\Gamma_A\Gamma_D$ contributes $(-1)^{q(\pi(A)+q)}$.  Since $i^{2q}=(-1)^q$ and $q^2=q$ modulo two, the total phase is $(-1)^{q\pi(A)}=(-1)^{\pi(D)\pi(B)}$.  The ancillary trace then contributes a factor of two, leaving the system trace in \eqref{eq:signed-transfer}.
    \end{proof}

    \begin{lemma}[Signed dressed inversion]\label{lem:signed-dressed-inversion}
        For every $A,B\subseteq[2m]$,
        \begin{equation}\label{eq:signed-dressed-inversion}
            \chi_{A,B}
            =
            d^{-3}
            \sum_{C,D\subseteq[2m]}
            (-1)^{\pi(D)\pi(B)}
            \Lambda_{C,D}\,
            \tr[\Gamma_C\Gamma_B\Gamma_D\Gamma_A].
        \end{equation}
    \end{lemma}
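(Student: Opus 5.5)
The plan is to substitute the signed transfer relation of Lemma~\ref{lem:signed-transfer} into the right-hand side of \eqref{eq:signed-dressed-inversion}, exchange the finite sums, and collapse the resulting double sum over $C,D$ with the Fierz identity \eqref{eq:majorana-fierz} of Lemma~\ref{lem:maj-completeness}. The only complication compared with the undressed formula is the sign $(-1)^{\pi(D)\pi(B)}$, which appears once in the transfer relation and once in the inversion kernel.

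Concretely, the right-hand side becomes
\begin{equation*}
    d^{-4}\sum_{\substack{A',B'\\ \pi(A')=\pi(B')}}\chi_{A',B'}
    \sum_{C,D}(-1)^{\pi(D)(\pi(B)+\pi(B'))}
    \tr[\Gamma_C\Gamma_{A'}\Gamma_D\Gamma_{B'}]\,
    \tr[\Gamma_C\Gamma_B\Gamma_D\Gamma_A].
\end{equation*}
First, dispose of the case $\pi(A)\neq\pi(B)$. Then the left side vanishes by \eqref{eq:chi-parity}. On the right side, $\Lambda_{C,D}$ vanishes unless $\pi(C)=\pi(D)$, and in that case $\Gamma_C\Gamma_B\Gamma_D\Gamma_A$ is odd and traceless, so every term is zero.

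Now assume $\pi(A)=\pi(B)$. The key step is to absorb the sign into the operators. Since $\Pi\Gamma_D\Pi=(-1)^{\pi(D)}\Gamma_D$, we have $(-1)^{\pi(D)s}\Gamma_D=\Pi^s\Gamma_D\Pi^s$ for $s\in\{0,1\}$. Take $s=\pi(B)+\pi(B')\bmod 2$ and insert this in the second trace, so the kernel becomes $\tr[\Gamma_C\Gamma_B\Pi^s\Gamma_D\Pi^s\Gamma_A]$. Applying \eqref{eq:majorana-fierz} with $X=\Gamma_{A'}$, $Y=\Gamma_{B'}$, $Z=\Gamma_B\Pi^s$ and $W=\Pi^s\Gamma_A$, the $C,D$ sum equals $d^2\tr[\Gamma_{A'}\Pi^s\Gamma_A]\tr[\Gamma_{B'}\Gamma_B\Pi^s]$.

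If $s=0$, orthonormality gives $d^4\delta_{A,A'}\delta_{B,B'}$, and the prefactor $d^{-4}$ returns exactly $\chi_{A,B}$. If $s=1$, then $\Gamma_A\Pi$ is, up to phase, the complementary monomial $\Gamma_{[2m]\setminus A}$. The trace therefore forces $A'=[2m]\setminus A$ and $B'=[2m]\setminus B$. This is the step I expect to be the main obstacle, since these spurious terms must be shown to vanish. I would use parity counting: $\pi(A')=\pi(A)$ because $2m$ is even. Likewise $\pi(B')=\pi(B)$, which contradicts $s=\pi(B)+\pi(B')=1$. So the $s=1$ contributions never occur, and the identity follows.

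Finally, I would double-check the parity of the complement and the phase bookkeeping in Lemma~\ref{lem:signed-transfer}, because the proof hinges on the total sign being exactly $(-1)^{\pi(D)(\pi(B)+\pi(B'))}$.
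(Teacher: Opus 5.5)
Your proposal is correct and follows the paper's proof essentially step for step. You substitute the signed transfer relation, apply the Fierz identity directly when $\pi(B')=\pi(B)$, and write the leftover sign as $\Pi\Gamma_D\Pi$ when $\pi(B')\neq\pi(B)$. Your complement-parity argument for discarding that block is just a rephrasing of the paper's observation that $\Pi\Gamma_{B'}\Gamma_B$ is odd and hence traceless.
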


    \begin{proof}
        If $\pi(A)\neq\pi(B)$, then $\chi_{A,B}=0$ by parity covariance. Moreover, $\Lambda_{C,D}$ vanishes unless $\pi(C)=\pi(D)$, in which case $\Gamma_C\Gamma_B\Gamma_D\Gamma_A$ is odd and its trace vanishes. Thus both sides are zero. Assume henceforth that $\pi(A)=\pi(B)$.

        Insert \eqref{eq:signed-transfer} into the right-hand side of \eqref{eq:signed-dressed-inversion}, and denote the summation indices in \eqref{eq:signed-transfer} by $A',B'$.  The sign multiplying the two traces is
        \begin{equation*}
            (-1)^{\pi(D)\pi(B)}
            (-1)^{\pi(D)\pi(B')}
            =
            (-1)^{\pi(D)(\pi(B)+\pi(B'))}.
        \end{equation*}
        If $\pi(B')=\pi(B)$, this sign is one.  Applying \eqref{eq:majorana-fierz} to
        \begin{equation*}
            \tr[\Gamma_C\Gamma_{A'}\Gamma_D\Gamma_{B'}]
            \tr[\Gamma_C\Gamma_B\Gamma_D\Gamma_A]
        \end{equation*}
        gives $d^2\tr[\Gamma_{A'}\Gamma_A]\tr[\Gamma_{B'}\Gamma_B]$, hence only $A'=A$ and $B'=B$ survive after the prefactors.

        It remains to exclude the block $\pi(B')\neq\pi(B)$.  In that case the leftover factor is $(-1)^{\pi(D)}$, which can be written as $\Pi\Gamma_D\Pi$ inside the $D$-sum.  Applying \eqref{eq:majorana-fierz} then yields a product of traces
        \begin{equation*}
            d^2\tr[\Gamma_{A'}\Pi\Gamma_A]\,
            \tr[\Pi\Gamma_{B'}\Gamma_B],
        \end{equation*}
        but the second trace vanishes because $\Pi$ is even and $\pi(B')\neq\pi(B)$. Thus only the desired block survives.
    \end{proof}

\subsection{Restricted signed inversion}\label{subsec:restricted-inversion}

The algebraic inversion uses the dressed entries $\Lambda_{C,D}$, which are the initial derivatives of the even transfer curves \eqref{eq:operational-transfer}. See \eqref{eq:local-transfer-derivative}.
    To recover a target coefficient $\chi_{A,B}$, set $R=\fsupp(A,B)$ and use only derivatives with $C,D\subseteq\cM(R)$.  Their signed Fierz transform is the restricted array $\chi^{[R]}$ defined below.  By \eqref{eq:diagonal-extension}, $\chi^{[R]}_{A,B}$ equals $\chi_{A,B}$ plus coefficients $\chi_{A\sqcup E,B\sqcup E}$ with strictly larger mode support.  Hence $k$-locality reduces recovery to a triangular recursion from larger to smaller supports.  Section~\ref{sec:statistics} gives the physical even preparations and measurements, while Proposition~\ref{prop:fermionic-derivative-poly} gives robust estimation of \eqref{eq:local-transfer-derivative} from noisy short-time data.

    In general, a partial trace of the undressed system dynamics in a Jordan--Wigner representation does not yield the dressed restricted coefficients, because marker dressing changes the contributions of odd complement operators.  We therefore invert directly from dressed derivative entries with local labels, and then remove the diagonal extensions contributed by modes outside the chosen region.

Let $R\subseteq[m]$, $d_R=2^{|R|}$, and let $\tr_R$ denote the trace in the abstract $|R|$-mode CAR algebra, with its Majoranas ordered as in the corresponding subsequence of $\gamma_1<\cdots<\gamma_{2m}$.  For $A,B\subseteq\cM(R)$ with $\pi(A)=\pi(B)$, define
    \begin{equation}
        \chi^{[R]}_{A,B}
        :=
        d_R^{-3}
        \sum_{C,D\subseteq\cM(R)}
        (-1)^{\pi(D)\pi(B)}
        \Lambda_{C,D}\,
        \tr_R[\Gamma_C\Gamma_B\Gamma_D\Gamma_A].
        \label{eq:restricted-chi}
    \end{equation}
    We call $\chi^{[R]}$ the restricted dressed coefficient array.  It is a local marginal of the global $\chi$-array, not the coefficient matrix of an ordinary partial-trace marginal generator.

    For disjoint Majorana label sets $A,E$, define
    \begin{equation}
        \nu(A,E):=\#\{(a,e)\in A\times E:e<a\},
        \qquad
        \omega(A,E):=(-1)^{\nu(A,E)}i^{|A||E|}.
        \label{eq:omega}
    \end{equation}
    Then
    \begin{equation}
        \Gamma_{A\sqcup E}
        =
        \omega(A,E)\Gamma_A\Gamma_E.
        \label{eq:union-product}
    \end{equation}
    For $A,B\subseteq\cM(R)$ and $E\subseteq\cM(R^c)$, set
    \begin{equation}
        s_R(A,B;E)
        :=
        \omega(A,E)\omega(B,E)(-1)^{\pi(E)\pi(B)}.
        \label{eq:extension-phase}
    \end{equation}
    If $\pi(A)=\pi(B)$, then $s_R(A,B;E)\in\{\pm1\}$.

    \begin{theorem}[Restricted-data diagonal-extension identity]\label{thm:diagonal-extension}
        For every $A,B\subseteq\cM(R)$ with $\pi(A)=\pi(B)$,
        \begin{equation}
            \chi^{[R]}_{A,B}
            =
            \sum_{E\subseteq\cM(R^c)}
            s_R(A,B;E)\,
            \chi_{A\sqcup E,\,B\sqcup E}.
            \label{eq:diagonal-extension}
        \end{equation}
        Under $k$-locality, only terms with
        $|\fsupp(A,B)\cup\supp_{\rm f}(E)|\le k$ can be nonzero.
    \end{theorem}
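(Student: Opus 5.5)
Substitute the signed dressed transfer relation, Lemma~\ref{lem:signed-transfer}, into the definition \eqref{eq:restricted-chi}. For $C,D\subseteq\cM(R)$ this expresses $\Lambda_{C,D}$ through all global coefficients $\chi_{A',B'}$ with $A',B'\subseteq[2m]$. Split each global label as $A'=A_1\sqcup E$ and $B'=B_1\sqcup F$, with $A_1,B_1\subseteq\cM(R)$ and $E,F\subseteq\cM(R^c)$. By \eqref{eq:union-product}, $\Gamma_{A'}=\omega(A_1,E)\Gamma_{A_1}\Gamma_E$, and similarly for $B'$. The goal is to factor the global trace $\tr[\Gamma_C\Gamma_{A'}\Gamma_D\Gamma_{B'}]$ into an $R$-trace times an $R^c$-trace. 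To do so, move $\Gamma_E$ to the right through $\Gamma_D\Gamma_{B_1}$, using the graded commutation rule for disjoint supports. This gives $\Gamma_C\Gamma_{A_1}\Gamma_D\Gamma_{B_1}\Gamma_E\Gamma_F$ up to the sign $(-1)^{\pi(E)(\pi(D)+\pi(B_1))}$.

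The product $\Gamma_C\Gamma_{A_1}\Gamma_D\Gamma_{B_1}$ lies in the local algebra of $R$, and $\Gamma_E\Gamma_F$ lies in that of $R^c$. The trace of such a product factorizes, up to an ordering identification, as $\tr_R[\cdot]\,\tr_{R^c}[\cdot]$. This holds provided one factor is even. Otherwise both factors are odd, and then each has vanishing trace anyway, so the identity still holds in the form $0=0$. Since $\tr_{R^c}[\Gamma_E\Gamma_F]=d_{R^c}\delta_{E,F}$, only $F=E$ survives. This is the origin of the diagonal extensions. With $F=E$ the parities satisfy $\pi(A_1)=\pi(B_1)$, so $\Gamma_C\Gamma_{A_1}\Gamma_D\Gamma_{B_1}$ is even whenever $\pi(C)=\pi(D)$. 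The factorization step is therefore legitimate in every case that contributes.

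Next, collect the signs. The relation contributes $(-1)^{\pi(D)\pi(B_1\sqcup E)}$, the commutation contributes $(-1)^{\pi(E)(\pi(D)+\pi(B_1))}$, and the definition \eqref{eq:restricted-chi} contributes $(-1)^{\pi(D)\pi(B)}$. The $\pi(D)\pi(E)$ terms cancel, leaving $(-1)^{\pi(D)(\pi(B)+\pi(B_1))}(-1)^{\pi(E)\pi(B_1)}$. Together with $\omega(A_1,E)\omega(B_1,E)$ and $\Gamma_E^2=\Id$, this is the structure of $s_R(A_1,B_1;E)$ times a $D$-dependent residual. Now perform the sum over $C,D\subseteq\cM(R)$ exactly as in the proof of Lemma~\ref{lem:signed-dressed-inversion}, applying Lemma~\ref{lem:maj-completeness} in the $|R|$-mode algebra.
\begin{itemize}
\item In the block $\pi(B_1)=\pi(B)$, the Fierz identity gives $d_R^2\,\delta_{A_1,A}\delta_{B_1,B}$.
\item In the block $\pi(B_1)\neq\pi(B)$, write the residual sign $(-1)^{\pi(D)}$ as conjugation by the local parity $\Pi_R$. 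The block then vanishes by the parity-trace argument of that lemma.
\end{itemize}
Matching the normalizations $d^{-1}$, $d_R^{-3}$, $d_R^2$ and $d_{R^c}$ against $d=d_Rd_{R^c}$ yields exactly $\chi_{A\sqcup E,B\sqcup E}$ with sign $s_R(A,B;E)$. The locality statement is then immediate, because $\fsupp(A\sqcup E,B\sqcup E)=\fsupp(A,B)\cup\supp_{\rm f}(E)$.

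I expect the main obstacle to be the sign bookkeeping in the factorization. One issue is the identification of $\tr$ on the full Fock space with $\tr_R\cdot\tr_{R^c}$ when $R$ is not an initial segment of the mode order. The Jordan--Wigner strings then interleave, and moving $\Gamma_E$ past local operators must be done purely with the graded rule rather than a tensor factorization. I would first reorder modes so that $R$ precedes $R^c$. This reordering is a unitary CAR permutation that preserves $\Gamma_A$ up to the signs already encoded in $\omega$. I would then check that the total sign depends only on $(A,B,E)$ and matches \eqref{eq:extension-phase}. If this does not come out cleanly, the fallback is to verify the sign directly on the reduced case where $\Gamma_E$ appears once on each side.
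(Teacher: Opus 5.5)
Your proposal is correct and follows essentially the same route as the paper's proof. Both split $A'=A_1\sqcup E$, $B'=B_1\sqcup F$, move $\Gamma_E$ through $\Gamma_D\Gamma_{B_1}$ with sign $(-1)^{\pi(E)(\pi(D)+\pi(B_1))}$, force $F=E$ via $\tr_{R^c}[\Gamma_E\Gamma_F]$, cancel the $\pi(D)\pi(E)$ terms, and reduce to Lemma~\ref{lem:signed-dressed-inversion} on the $|R|$-mode algebra. Your worry about the mode ordering is unnecessary: $\tr[X]=d_{R^c}\tr_R[X]$ for $R$-supported $X$ follows directly from $\tr[\Gamma_G]=d\,\delta_{G,\varnothing}$, and the phases $\omega$ already absorb any interleaving. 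One small slip: the Fierz identity gives $d_R^2\tr_R[\Gamma_{A_1}\Gamma_A]\tr_R[\Gamma_{B_1}\Gamma_B]=d_R^4\delta_{A_1,A}\delta_{B_1,B}$, not $d_R^2\delta_{A_1,A}\delta_{B_1,B}$, and it is this $d_R^4$ that cancels the prefactor $d_R^{-3}\cdot d^{-1}d_{R^c}=d_R^{-4}$.
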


    \begin{proof}
        Write arbitrary global labels as $A'=A_0\sqcup E$ and $B'=B_0\sqcup F$, with $A_0,B_0\subseteq\cM(R)$ and $E,F\subseteq\cM(R^c)$. Since $C,D$ are supported on $R$, the trace over $R^c$ contains $\tr_{R^c}[\Gamma_E\Gamma_F]$, which vanishes unless $E=F$ by Majorana-basis orthogonality. Using \eqref{eq:union-product}, commuting the complement monomial through the local factors, and tracing over $R^c$ gives
        \begin{align}
            d^{-1}\tr[\Gamma_C\Gamma_{A'}\Gamma_D\Gamma_{B'}]
            =
            \delta_{E,F}\,d_R^{-1}
            \omega(A_0,E)\omega(B_0,E)
            (-1)^{\pi(E)(\pi(D)+\pi(B_0))}
            \tr_R[\Gamma_C\Gamma_{A_0}\Gamma_D\Gamma_{B_0}].
            \label{eq:trace-factor}
        \end{align}
        The signed transfer relation contributes
        \begin{equation*}
            (-1)^{\pi(D)\pi(B')}
            =
            (-1)^{\pi(D)(\pi(B_0)+\pi(E))}.
        \end{equation*}
        Multiplying this with the sign in \eqref{eq:trace-factor} cancels the two copies of $\pi(D)\pi(E)$ and leaves
        \begin{equation*}
            \omega(A_0,E)\omega(B_0,E)
            (-1)^{\pi(E)\pi(B_0)}
            (-1)^{\pi(D)\pi(B_0)}.
        \end{equation*}
        The definition of $\chi^{[R]}_{A,B}$ supplies one more inversion sign,
        $(-1)^{\pi(D)\pi(B)}$.  Hence the $C,D$-dependent part is the local signed Fierz kernel with sign
        \begin{equation*}
            (-1)^{\pi(D)(\pi(B)+\pi(B_0))},
        \end{equation*}
        while the $C,D$-independent coefficient is
        \begin{equation*}
            \omega(A_0,E)\omega(B_0,E)(-1)^{\pi(E)\pi(B_0)}.
        \end{equation*}
        Applying Lemma~\ref{lem:signed-dressed-inversion} in the abstract $|R|$-mode Clifford algebra therefore leaves only $A_0=A$, $B_0=B$.  The remaining coefficient is exactly $s_R(A,B;E)$, proving \eqref{eq:diagonal-extension}.
    \end{proof}

\subsection{Descending recursion and stability}

    Choose $R=\fsupp(A,B)$.  Since $s_R(A,B;\varnothing)=1$, \eqref{eq:diagonal-extension} gives the descending recursion
    \begin{equation}
        \chi_{A,B}
        =
        \chi^{[R]}_{A,B}
        -
        \sum_{\substack{E\subseteq\cM(R^c),\ E\neq\varnothing\\
        |R\cup\supp_{\rm f}(E)|\le k}}
        s_R(A,B;E)\,
        \chi_{A\sqcup E,\,B\sqcup E}.
        \label{eq:descending-recursion}
    \end{equation}
    Process support sizes in descending order $k,k-1,\ldots,0$.  Every nontrivial extension on the right has strictly larger mode-support union and has already been recovered.

    \begin{lemma}[Triangular error propagation]\label{lem:triangular-error-propagation}
        Let $\mathcal V$ be a finite directed graph with no directed cycles and whose edges strictly increase an integer rank. Suppose that
        \begin{equation*}
            e_v\le a_v+\sum_{v\to w}e_w,
            \qquad v\in\mathcal V,
        \end{equation*}
        with $e_v,a_v\ge0$.  If $p(v,u)$ is the number of directed paths from $v$ to $u$, including the trivial path, then
        \begin{equation*}
            e_v\le\sum_{u\in\mathcal V}p(v,u)a_u.
        \end{equation*}
        In particular, if the graph has depth at most $L$, out-degree at most $d$, and $a_v\le a$, then
        \begin{equation*}
            e_v\le\left(\sum_{j=0}^{L}d^j\right)a.
        \end{equation*}
    \end{lemma}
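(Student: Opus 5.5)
We argue by strong induction on the rank, processing vertices in order of decreasing rank. Since $\mathcal V$ is finite and every edge strictly increases the rank, the rank takes finitely many values. A maximal-rank vertex has no outgoing edges, so there $e_v\le a_v=p(v,v)a_v$; here $p(v,v)=1$ because the only path from $v$ to itself is the trivial one, as there are no cycles. For the inductive step, fix $v$ and assume that $e_w\le\sum_u p(w,u)a_u$ for every $w$ with $v\to w$. Each such $w$ has strictly larger rank, so the hypothesis applies. Substituting into $e_v\le a_v+\sum_{v\to w}e_w$ and using $e_w,a_u\ge0$ gives
\begin{equation*}
    e_v\le a_v+\sum_{u}\Bigl(\sum_{v\to w}p(w,u)\Bigr)a_u .
\end{equation*}

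The key combinatorial step is the path decomposition $p(v,u)=\delta_{v,u}+\sum_{v\to w}p(w,u)$. A nontrivial path from $v$ to $u$ is uniquely determined by its first edge $v\to w$ together with a path from $w$ to $u$. The trivial path exists only when $u=v$. If the graph has parallel edges, each one is counted separately in both the sum over $v\to w$ and in $p$, so the identity still holds. Inserting this identity into the display above yields $e_v\le\sum_u p(v,u)a_u$.

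For the quantitative bound, note that the number of paths of length $j$ starting at $v$ is at most $d^j$ when the out-degree is at most $d$; this follows by induction on $j$, since each path extends along at most $d$ edges. Depth at most $L$ means that no path has length exceeding $L$. Hence
\begin{equation*}
    \sum_u p(v,u)\le\sum_{j=0}^L d^j ,
\end{equation*}
and with $a_u\le a$ we obtain $e_v\le\bigl(\sum_{j=0}^L d^j\bigr)a$.

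The only point needing care is well-foundedness. It is guaranteed because the rank strictly increases along edges and takes finitely many values on the finite set $\mathcal V$, so descending induction on rank is valid. No real obstacle is expected beyond this.
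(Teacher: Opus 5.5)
Your proof is correct and follows the paper's argument: descending induction on rank with substitution at each child, where your identity $p(v,u)=\delta_{v,u}+\sum_{v\to w}p(w,u)$ makes precise the paper's remark that each occurrence of $a_u$ is indexed by one directed path from $v$ to $u$, followed by counting paths by length to get $\sum_{j=0}^L d^j$. One minor point: the substitution step only uses monotonicity of the sum, so the nonnegativity you invoke there is not actually needed.
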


    \begin{proof}
        Descending induction in the rank, substituting the bound at every child, indexes each occurrence of $a_u$ by one directed path from $v$ to $u$.  Counting paths by length proves the final assertion.
    \end{proof}

    \begin{lemma}[Worst-case stability]\label{lem:fermionic-stability}
        Suppose every restricted coefficient $\chi^{[R]}_{A,B}$ used by \eqref{eq:descending-recursion} is known to additive error at most $\eps_L$.  For fixed $k$, the recovered coefficients satisfy
        \begin{equation}
            \max_{\substack{\pi(A)=\pi(B)\\ |\fsupp(A,B)|\le k}}
            \abs{\widehat\chi_{A,B}-\chi_{A,B}}
            \le
            C_k m^k\eps_L
        \end{equation}
        for a constant $C_k$ depending only on $k$.  More precisely, on the punctured set of nonidentity coefficients,
        \begin{equation}
            \max_{\substack{\pi(A)=\pi(B)\\
            1\le |\fsupp(A,B)|\le k}}
            \abs{\widehat\chi_{A,B}-\chi_{A,B}}
            \le
            C_k m^{k-1}\eps_L .
            \label{eq:physical-stability}
        \end{equation}
    \end{lemma}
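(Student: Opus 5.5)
The plan is to apply Lemma~\ref{lem:triangular-error-propagation} to the dependency graph of the descending recursion \eqref{eq:descending-recursion}. The vertices are the admissible coefficient pairs $(A,B)$ with $\pi(A)=\pi(B)$ and $|\fsupp(A,B)|\le k$. Draw an edge $(A,B)\to(A\sqcup E,B\sqcup E)$ for every nonempty $E\subseteq\cM(R^c)$ with $R=\fsupp(A,B)$ and $|R\cup\supp_{\rm f}(E)|\le k$. The rank is $|\fsupp(A,B)|$, which strictly increases along edges because $E\neq\varnothing$ adds at least one new mode. The graph is therefore acyclic, and its depth is at most $k$, or at most $k-1$ when started from a punctured vertex with support size at least one.

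Subtracting \eqref{eq:descending-recursion} for the true and estimated quantities, and using $|s_R|=1$ together with the triangle inequality, gives
\begin{equation*}
    e_{A,B}:=\abs{\widehat\chi_{A,B}-\chi_{A,B}}\le\eps_L+\sum_{(A,B)\to w}e_w .
\end{equation*}
This is exactly the hypothesis of Lemma~\ref{lem:triangular-error-propagation} with $a_v=\eps_L$.

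The main step is to count paths rather than to use the crude depth/out-degree bound. The out-degree is of order $\binom{m}{k-|R|}4^{k}$, and raising it to the power of the depth would lose far more than $m^k$. Instead, I will bound $\sum_u p(v,u)$ directly. Every path from $v=(A,B)$ ends at a vertex $u=(A\sqcup F,B\sqcup F)$, where $F$ is the disjoint union of the added sets $E$ along the path. The number of possible endpoints $u$ is at most $\sum_{j\le k-|R|}\binom{m}{j}4^{j}\le C'_k m^{k-|R|}$, since $F$ lives on at most $k-|R|$ new modes with at most $4$ label choices per mode.

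For a fixed endpoint, the paths correspond to ordered decompositions of $F$ into nonempty blocks that respect the per-mode structure. Their number is bounded by the number of ordered set partitions of a set of size at most $2k$, which depends only on $k$.

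Combining these counts gives $e_{A,B}\le C_k m^{k-|R|}\eps_L$. Taking $|R|\ge0$ yields the first claim. On the punctured set, $|R|\ge1$ gives the sharper bound \eqref{eq:physical-stability}. The only delicate point I expect is making sure that the same set $F$ is not reached with inconsistent labels. This is ruled out because $E\subseteq\cM(R^c)$ forces the union to be disjoint and hence uniquely determined.
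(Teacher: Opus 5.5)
Your proposal is correct, and it reaches the bound by a different bookkeeping route than the paper. The paper never unrolls the recursion into paths. It lets $e_s$ be the maximal error at mode-support size $s$ and uses the level-wise scalar recurrence
\[
e_s\le\eps_L+\sum_{t=1}^{k-s}\binom{m-s}{t}3^t e_{s+t}.
\]
It then cites the induction from the earlier qubit paper to get $e_s\le C_{k-s}m^{k-s}\eps_L$. This gives the punctured bound for $s\ge1$, and substituting back at $s=0$ gives the full bound.

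You instead apply the path-sum form of Lemma~\ref{lem:triangular-error-propagation} directly and factor the path count as (number of endpoints) $\times$ (chains per endpoint). There are $\cO_k(m^{k-|R|})$ endpoints, and each is reached by a number of chains bounded by a function of $k$. This is exactly the argument the paper itself uses later in Lemma~\ref{lem:spatially-truncated-inversion}, where each endpoint is reached by $\cO_k(1)$ ordered extension chains.

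What your route buys is a self-contained proof with explicit constants. It also clearly shows why the crude depth/out-degree corollary, which would give $m^{(k-|R|)^2}$, is insufficient. The paper's route is shorter but defers the induction to a citation.

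Two counts in your proposal are valid but loose:
\begin{itemize}
\item Each new mode has only $3$ nonempty Majorana-label choices, not $4$.
\item Because every extension $E\subseteq\cM(R^c)$ must take all labels of the final set $F$ on each mode it touches, the chains to a fixed endpoint are exactly the ordered set partitions of the at most $k-|R|$ new modes. Your bound by ordered partitions of a set of size at most $2k$ is therefore an overcount.
\end{itemize}
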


    \begin{proof}
        Let $e_s$ be the maximal coefficient error at mode-support size $s$.  Since every phase in \eqref{eq:descending-recursion} has modulus one and each added mode has three nonempty Majorana-label choices,
        \begin{equation*}
            e_s
            \le
            \eps_L+\sum_{t=1}^{k-s}\binom{m-s}{t}3^t e_{s+t},
            \qquad 0\le s\le k.
        \end{equation*}
        For $s\ge1$, the induction in \cite[proof of Lemma~3.3]{MobusEtAl2026} gives $e_s\le C_{k-s}m^{k-s}\eps_L$, proving \eqref{eq:physical-stability}.  Substituting these bounds into the recurrence at $s=0$ gives $e_0\le C_km^k\eps_L$ and hence the full bound.
    \end{proof}

\section{Even short-time experiments and internal-marker calibration}\label{sec:statistics}

    Building on classical-shadow and process-shadow methods \cite{HuangKuengPreskill2020,Kunjummen2023ShadowProcess,LevyLuoClark2024,StilckFranca2024}, we construct common-data estimators for dressed input--output overlaps using even pair-block experiments. A separate-query variant appears in Remark~\ref{rem:one-marker-recovery}. We first describe physical transfer data with a passive marker mode, then calibrate the additional contributions when markers belong to the evolving system.

\subsection{Experimental access and physical transfer curves}\label{subsec:experimental-access}

\begin{assumption}[Experimental access and sample accounting]\label{ass:experimental-access}
        Each sample is one channel use on a freshly prepared full system: the protocol prepares the prescribed even state, evolves it under $e^{t\cL}$ for a preassigned time $t$, and performs the even measurements.  Inactive modes are maximally mixed, and reference modes are idle.  The protocols are nonadaptive and apply no control during the evolution.  Disjoint local blocks may be prepared and measured in parallel within one sample.

        We count local readouts, total evolution time, and classical postprocessing separately.  The geometric protocol uses bounded-diameter blocks. Routing for the nonlocal operations in the all-to-all protocol is not included.  Since all upper bounds use $t_i\le T:=(4k\bar\alpha)^{-1}$, $N$ samples satisfy
        \begin{equation}
            T_{\rm evol}=\sum_{i=1}^N t_i\le TN=\frac{N}{4k\bar\alpha}.
            \label{eq:total-evolution-time}
        \end{equation}
    \end{assumption}

    \begin{remark}[Strictly number-conserving access]\label{rem:number-conserving-identifiability}
        Some parity-preserving but number-nonconserving access is necessary for full identification. Our protocol provides it through pairing rotations.  Indeed, let $\cL_0$ be $U(1)$ covariant, meaning that it is invariant under the global phase rotations generated by $\mathsf N$.  Equivalently, $\cL_0$ commutes with $-i[\mathsf N,\cdot]$, and for states, effects, and controls commuting with $\mathsf N$,
        \begin{equation*}
            \tr\!\left[
                Oe^{t(\cL_0-i\theta[\mathsf N,\cdot])}(\rho)
            \right]
            =
            \tr\!\left[Oe^{t\cL_0}(\rho)\right].
        \end{equation*}
        Thus strictly number-conserving experiments cannot distinguish generators that differ by the chemical-potential term $-i\theta[\mathsf N,\cdot]$.  Pairing rotations remove this obstruction by coupling particle-number sectors that differ by two while preserving parity.  This does not imply that fixing this ambiguity suffices to identify every $U(1)$-covariant generator.
    \end{remark}

    Let $d_{\rm ext}=2d$ be the dimension of the system plus one marker mode, and set $\Phi_t^{\rm ext}:=e^{t\cL_{\rm ext}}$.  Define the dressed transfer curve
    \begin{equation}
        M^\eta_{C,D}(t)
        :=
        d_{\rm ext}^{-1}
        \tr\!\left[
            J_\eta(\Gamma_C)\Phi_t^{\rm ext}(J_\eta(\Gamma_D))
        \right].
        \label{eq:operational-transfer}
    \end{equation}
    For $D\neq\varnothing$, the formal input $J_\eta(\Gamma_D)$ is represented by the difference of two physical even states
    \begin{equation}
        \rho_D^\pm
        :=
        \frac{\Id\pm J_\eta(\Gamma_D)}{d_{\rm ext}}.
        \label{eq:physical-inputs}
    \end{equation}
    In that case,
    \begin{equation}
        M^\eta_{C,D}(t)
        =
        \frac12\left(
            \tr[J_\eta(\Gamma_C)\Phi_t^{\rm ext}(\rho_D^+)]
            -
            \tr[J_\eta(\Gamma_C)\Phi_t^{\rm ext}(\rho_D^-)]
        \right).
        \label{eq:physical-transfer-states}
    \end{equation}
    For $D=\varnothing$, prepare the maximally mixed state $\rho_\varnothing=\Id/d_{\rm ext}$. The measured expectation is then directly $M^\eta_{C,\varnothing}(t)$, with no difference of states required.  Thus every queried preparation and effect is even, and
    \begin{equation}
        \left.\frac{d}{dt}M^\eta_{C,D}(t)\right|_{t=0}
        =
        \Lambda_{C,D}.
        \label{eq:local-transfer-derivative}
    \end{equation}
    The protocol requires preparation of the local even eigenspace states $\rho_D^\pm$ and measurement of the even observable $J_\eta(\Gamma_C)$.  Proposition~\ref{prop:even-clifford-eigenaccess} realizes these operations using local fermionic Gaussian control, randomized occupation-state preparations, and occupation measurements.  This is a targeted-query primitive, distinct from the common-data shadow primitive used below.

    \begin{remark}[Comparison with Pauli inputs]
        For a Pauli string $Q$ on $n$ qubits, the analogous states $(\Id\pm Q)/2^n$ can be prepared by randomly choosing product Pauli eigenstates whose eigenvalues have the prescribed product.  Similarly, Proposition~\ref{prop:even-clifford-eigenaccess} maps the even monomial $J_\eta(\Gamma_D)$ to a signed product of occupation parities, samples occupation strings with the required product parity, and applies the inverse Gaussian unitary.  Hence no arbitrary many-body state preparation or non-Gaussian control is needed, and for fixed locality the operation involves only a constant number of modes.  Our preparation scheme uses local control, including marker access, Gaussian mode mixing, and pairing rotations.
    \end{remark}

    \begin{assumption}[Statistical observation model]\label{ass:fermionic-statistical-model}
        Conditional on the recorded preparation and measurement settings, each outcome obeys the Born rule for the specified channel $e^{t\cL}$, and distinct samples are independent. Random settings and input signs are drawn independently as specified by the protocol (All probabilistic guarantees below are stated under this assumption).
    \end{assumption}

\subsection{Pairing-compatible process shadows}\label{sec:pairing-shadows}

    The fermionic analogue of a product Pauli shadow is a product over paired two-mode blocks.  Let $W$ contain an even number of modes and let $\mathfrak p$ pair them into blocks $b=\{u,v\}$.  For such a block define
    \begin{equation}
        \cE_b^{\rm ev}
        :=
        \{\Gamma_A:\ A\subseteq\cM(b),\ \pi(A)=0,\ A\neq\varnothing\}.
    \end{equation}
    This set has seven Hermitian unitaries: the six Majorana bilinears and the block parity.  Together with the identity they form an orthogonal basis of the even algebra on the two-mode block.

    A Majorana monomial $O$ is $\mathfrak p$-compatible if every paired block contains an even number of Majorana labels from $O$.  Equivalently,
    \begin{equation}
        O=\sigma_{\mathfrak p}(O)\prod_{b\in\mathfrak p} O_b,
        \qquad
        O_b\in \{\Id\}\cup\cE_b^{\rm ev},
        \qquad
        \sigma_{\mathfrak p}(O)\in\{\pm1\},
    \end{equation}
    where the sign $\sigma_{\mathfrak p}(O)$ is fixed by the order $\gamma_1<\cdots<\gamma_{2m}$.
    Let $w_{\mathfrak p}(O)$ count the nonidentity block factors.  The pairing $\mathfrak p$ is arbitrary, but it restricts which monomials are estimable in that shot.
    Pairing entries label modes, whereas the subscripts of $\gamma$ label Majoranas, with mode $j$ carrying $\gamma_{2j-1}$ and $\gamma_{2j}$. Figure~\ref{fig:pair-block-shadow} illustrates why multiple pairings are needed: $i\gamma_1\gamma_5$ is incompatible with $\{\{1,2\},\{3,4\}\}$, which places one Majorana in each block, but compatible with $\{\{1,3\},\{2,4\}\}$.
    \vspace{0ex}
    \begin{figure}[H]
        \centering
        \begin{tikzpicture}[
            font=\small,
            >={Latex[length=2mm]},
            flow/.style={-Latex,thick,draw=blue!55!black},
            stage/.style={draw=blue!45!black,fill=blue!4,rounded corners=2pt,
                minimum height=1.35cm,align=center,inner sep=5pt},
            mode/.style={circle,draw=black!65,minimum size=6mm,inner sep=0pt},
            active/.style={mode,fill=orange!55},
            block/.style={draw=blue!50!black,rounded corners=3pt,inner sep=3pt}
        ]
            \node[anchor=west,font=\bfseries] at (-0.9,1.3) {(a) One sample};
            \node[stage,text width=3.35cm] (input) at (1.9,0)
                {\textbf{Input blocks}\\
                 choose $(R_b,S_b)$\\
                 prepare $\rho_b$};
            \node[draw=blue!55!black,fill=blue!10,circle,minimum size=1.25cm,
                right=7mm of input] (channel) {evolve};
            \node[stage,text width=3.35cm,right=7mm of channel] (output)
                {\textbf{Output blocks}\\
                 choose $L_a$\\
                 observe $Y_a$};
            \node[right=2mm of output,text=blue!55!black] (capture)
                {\scalebox{2.2}[5.0]{$\}$}};
            \node[align=center,text width=2.85cm,right=2mm of capture] (record)
                {\textbf{Recorded data}\\ $R_b,S_b,L_a,Y_a$};
            \draw[flow] (input) -- (channel);
            \draw[flow] (channel) -- (output);

            \node[anchor=west,font=\bfseries] at (-0.9,-1.95)
                {(b) Pairing compatibility for $O=i\gamma_1\gamma_5$};

            \node[active] (i1) at (0.5,-3.1) {$1$};
            \node[mode]   (i2) at (1.3,-3.1) {$2$};
            \node[active] (i3) at (2.55,-3.1) {$3$};
            \node[mode]   (i4) at (3.35,-3.1) {$4$};
            \node[block,fit=(i1)(i2)] (ib1) {};
            \node[block,fit=(i3)(i4)] (ib2) {};
            \node[below=4mm of $(ib1.south)!0.5!(ib2.south)$] (badlabel)
                {$\mathfrak p=\{\{1,2\},\{3,4\}\}$};
            \node[at=(ib2.north east),anchor=center,rotate=-12,
                preaction={draw=white,fill=white,line width=5pt},
                draw=red!60!black,fill=white,rounded corners=1pt,
                text=red!60!black,font=\small\bfseries,inner sep=2.5pt]
                {$\times$ incorrect};

            \node[active] (c1) at (6.9,-3.1) {$1$};
            \node[active] (c3) at (7.7,-3.1) {$3$};
            \node[mode]   (c2) at (8.95,-3.1) {$2$};
            \node[mode]   (c4) at (9.75,-3.1) {$4$};
            \node[block,fit=(c1)(c3)] (cb1) {};
            \node[block,fit=(c2)(c4)] (cb2) {};
            \node[below=4mm of $(cb1.south)!0.5!(cb2.south)$] (goodlabel)
                {$\mathfrak p'=\{\{1,3\},\{2,4\}\}$};
            \node[at=(cb2.north east),anchor=center,rotate=-12,
                preaction={draw=white,fill=white,line width=5pt},
                draw=green!45!black,fill=white,rounded corners=1pt,
                text=green!40!black,font=\small\bfseries,inner sep=2.5pt]
                {$\checkmark$ correct};
        \end{tikzpicture}
        \vspace{-2ex}
        \caption{Pair-block process shadow. (a) Random even settings are chosen independently on the input and output blocks, and one sample contributes to all compatible queries. (b) Orange modes carry the factors of $O=i\gamma_1\gamma_5$.  Compatibility requires an even number of factors in each block.}
        \label{fig:pair-block-shadow}
    \end{figure}

    \begin{proposition}[Pair-block process-shadow estimator]\label{prop:pair-block-process-shadow}
        Fix an input pairing $\mathfrak p_{\rm in}$ and an output pairing $\mathfrak p_{\rm out}$ of $W$, and let $\Phi$ be a parity-covariant channel on the corresponding CAR algebra.  Let $\cR_{\rm in}$ be a finite family of $\mathfrak p_{\rm in}$-compatible Hermitian Majorana monomials and let $\cR_{\rm out}$ be a finite family of $\mathfrak p_{\rm out}$-compatible Hermitian Majorana monomials.  At one shot, independently for every input block $b\in\mathfrak p_{\rm in}$ choose an input axis $R_b\in\cE_b^{\rm ev}$ uniformly and a sign $S_b\in\{\pm1\}$ uniformly, and prepare
        \begin{equation}
            \rho_{b,R_b,S_b}
            =
            \frac{\Id+S_bR_b}{4}.
        \end{equation}
        Evolve under $\Phi$.  Independently for every output block $a\in\mathfrak p_{\rm out}$ choose an output axis $L_a\in\cE_a^{\rm ev}$ uniformly and measure $L_a$, obtaining $Y_a\in\{\pm1\}$.  Then for every $P\in\cR_{\rm out}$ and $Q\in\cR_{\rm in}$ there is a single-shot random variable $Z_{P,Q}$, defined in \eqref{eq:pair-block-estimator}, such that
        \begin{equation}
            \mathbb E Z_{P,Q}
            =
            2^{-|W|}
            \tr[P\Phi(Q)]
        \end{equation}
        and
        \begin{equation}
            \mathbb E Z_{P,Q}^2
            \le
            7^{w_{\mathfrak p_{\rm out}}(P)+w_{\mathfrak p_{\rm in}}(Q)}.
            \label{eq:pair-block-second-moment}
        \end{equation}
    \end{proposition}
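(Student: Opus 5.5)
We construct $Z_{P,Q}$ as a product estimator over blocks, mirroring product Pauli process shadows. Write $Q=\sigma_{\mathfrak p_{\rm in}}(Q)\prod_{b}Q_b$ and $P=\sigma_{\mathfrak p_{\rm out}}(P)\prod_a P_a$ with $Q_b,P_a\in\{\Id\}\cup\cE^{\rm ev}$. The candidate estimator is
\begin{equation*}
    Z_{P,Q}:=\sigma_{\mathfrak p_{\rm out}}(P)\sigma_{\mathfrak p_{\rm in}}(Q)
    \prod_{b:Q_b\neq\Id}7\,S_b\,\mathds 1[R_b=Q_b]
    \prod_{a:P_a\neq\Id}7\,Y_a\,\mathds 1[L_a=P_a].
\end{equation*}
The factor $7$ in each block compensates for the uniform choice among the seven axes. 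The input sign $S_b$ isolates the traceless part of the block state.

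\textbf{Input side.} Since every block operator is even, block operators on distinct blocks commute. Hence $\rho=\prod_b\rho_{b,R_b,S_b}$ is a well-defined even product state, and its ordering is immaterial.

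Averaging $\prod_{b\in B}(7S_b\mathds 1[R_b=Q_b])\,\rho$ over the independent settings gives, on each block $b\notin B$, the factor $\mathbb E\rho_b=\Id/4$. On each block $b\in B$ it gives
\begin{equation*}
    7\cdot\tfrac17\cdot\mathbb E_{S}[S(\Id+SQ_b)/4]=Q_b/4.
\end{equation*}
Here $B=\{b:Q_b\neq\Id\}$. The conditional expectation of the input part is therefore $2^{-|W|}\sigma_{\mathfrak p_{\rm in}}(Q)\,Q$, recalling that $4^{|\mathfrak p|}=2^{|W|}$.

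\textbf{Output side.} Condition on the output axes. The measured outcomes have joint expectation $\mathbb E\prod_{a\in A}Y_a=\tr[\prod_{a\in A}L_a\,\Phi(\rho)]$. The joint measurement is well defined because the block observables commute and have $\pm1$ spectrum.

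Averaging over $L_a$ with the indicators yields $\tr[\prod_a P_a\,\Phi(\rho)]=\sigma_{\mathfrak p_{\rm out}}(P)\tr[P\Phi(\rho)]$. Linearity of $\Phi$ and the independence of the input and output randomness combine the two sides. Since $\sigma^2=1$, this gives $\mathbb E Z_{P,Q}=2^{-|W|}\tr[P\Phi(Q)]$.

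The only place that needs care is the sign convention. The decomposition $O=\sigma_{\mathfrak p}(O)\prod_b O_b$ reorders Majoranas, and each $O_b$ is Hermitian only with the $\Gamma$-phase convention. I would check that the even block factors commute with each other, so the product is order independent. I would also check that the sign is absorbed consistently on both sides; Hermiticity of $P,Q$ forces $\sigma\in\{\pm1\}$.

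\textbf{Second moment.} $Z_{P,Q}^2$ equals $7^{2(w_{\rm out}+w_{\rm in})}$ times the product of the indicators, because $S_b^2=Y_a^2=1$. The indicators are independent and each holds with probability $1/7$. Hence $\mathbb E Z_{P,Q}^2=7^{w_{\mathfrak p_{\rm out}}(P)+w_{\mathfrak p_{\rm in}}(Q)}$, which gives \eqref{eq:pair-block-second-moment}, with equality.

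I expect the main obstacle to be the graded bookkeeping rather than the statistics. The key checks are that the product of block states is a legitimate even state whose expectation factorizes, and that $\Phi$ acting on a product of even block operators needs no extra fermionic sign. Both hold because every block factor is even, so the $\mathbb Z_2$-graded rule reduces to ordinary commutation.
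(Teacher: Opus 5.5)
Your proposal is correct and follows the paper's proof. The per-block factors $7S_b\mathbf 1[R_b=Q_b]$ and $7Y_a\mathbf 1[L_a=P_a]$ multiply out to exactly the estimator \eqref{eq:pair-block-estimator}. The mean follows from the same blockwise averaging ($Q_b/4$ on active blocks, $\Id/4$ on inactive ones), and the second moment follows from the same hit-probability count, which indeed gives equality.
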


    \begin{proof}
        Write $P=\sigma_{\mathfrak p_{\rm out}}(P)\prod_{a\in\mathfrak p_{\rm out}}P_a$ and $Q=\sigma_{\mathfrak p_{\rm in}}(Q)\prod_{b\in\mathfrak p_{\rm in}}Q_b$ with $P_a\in\{\Id\}\cup\cE_a^{\rm ev}$ and $Q_b\in\{\Id\}\cup\cE_b^{\rm ev}$.  Let $H_{P,Q}$ be the event that $L_a=P_a$ for every output block with $P_a\neq\Id$ and $R_b=Q_b$ for every input block with $Q_b\neq\Id$.  Define
        \begin{equation}
            Z_{P,Q}
            :=
            \sigma_{\mathfrak p_{\rm out}}(P)
            \sigma_{\mathfrak p_{\rm in}}(Q)
            7^{w_{\mathfrak p_{\rm out}}(P)+w_{\mathfrak p_{\rm in}}(Q)}
            \mathbf 1_{H_{P,Q}}
            \prod_{b:\,Q_b\neq\Id} S_b
            \prod_{a:\,P_a\neq\Id} Y_a .
            \label{eq:pair-block-estimator}
        \end{equation}
        On an active input block,
        $\mathbb E_S[S\rho_{b,Q_b,S}]=Q_b/4$, whereas an inactive block averages to $\Id/4$.  Cancelling the input- and output-axis hit probabilities therefore gives $\mathbb E Z_{P,Q}=2^{-|W|}\tr[P\Phi(Q)]$.

        The hit probability is exactly $7^{-w_{\mathfrak p_{\rm out}}(P)-w_{\mathfrak p_{\rm in}}(Q)}$, while $|Z_{P,Q}|=7^{w_{\mathfrak p_{\rm out}}(P)+w_{\mathfrak p_{\rm in}}(Q)}$ on the hit event and is zero otherwise.  This gives \eqref{eq:pair-block-second-moment}.
    \end{proof}
    Randomizing pairings allows us to estimate many input--output overlaps from the same data.

    \begin{lemma}[Randomized pairing estimators]\label{lem:randomized-pairing-estimators}
        In the setting of Proposition~\ref{prop:pair-block-process-shadow}, draw $s=(\mathfrak p_{\rm out},\mathfrak p_{\rm in})$ from a fixed, query-independent, given joint distribution. Let $\mathcal T$ be the finite nonempty set of required input--output queries. For each $\tau=(P,Q)$, the protocol specifies a set $\mathsf A_\tau$ of accepted pairings, all compatible with $P$ and $Q$, with known probability $p_\tau>0$. Suppose $P$ and $Q$ have at most $w_*$ nonidentity block factors in total on every accepted setting. With $Z_\tau^s$ from Proposition~\ref{prop:pair-block-process-shadow}, define
        \begin{equation}
            X_\tau:=p_\tau^{-1}\mathbf1_{\{s\in\mathsf A_\tau\}}Z_\tau^s,
            \label{eq:random-pairing-estimator}
        \end{equation}
        setting $X_\tau=0$ on rejection. Then
        \begin{equation*}
            \mathbb E X_\tau=2^{-|W|}\tr[P\Phi(Q)],
            \qquad
            \mathbb E X_\tau^2\le p_\tau^{-1}7^{w_*}.
        \end{equation*}
        If $p_\tau\ge p_*>0$ for all queries, one common dataset estimates all overlaps to accuracy $\eps_M>0$ with probability at least $1-\delta$, for $\delta\in(0,1)$, using
        \begin{equation}
            \cO\!\left(
                p_*^{-1}7^{w_*}\eps_M^{-2}
                \log\frac{2|\mathcal T|}{\delta}
            \right)
            \label{eq:random-pairing-cost}
        \end{equation}
        samples.
    \end{lemma}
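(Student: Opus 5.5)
The plan is to treat the estimator one setting at a time. Conditioning on $s$ reduces to Proposition~\ref{prop:pair-block-process-shadow}, and a median-of-means bound with a union bound over $\mathcal T$ then gives the common-data guarantee.

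\emph{Unbiasedness.} Fix a query $\tau=(P,Q)$ and condition on the pairing draw $s$. On the event $s\in\mathsf A_\tau$, both $P$ and $Q$ are compatible with $s$. Proposition~\ref{prop:pair-block-process-shadow} applies with $\cR_{\rm out}=\{P\}$ and $\cR_{\rm in}=\{Q\}$, so $\mathbb E[Z^s_\tau\mid s]=2^{-|W|}\tr[P\Phi(Q)]$. This conditional mean does not depend on $s$. The axes and signs are drawn independently of $s$, and $\Pr(s\in\mathsf A_\tau)=p_\tau$ is known. Hence
\begin{equation*}
\mathbb E X_\tau=p_\tau^{-1}\Pr(s\in\mathsf A_\tau)\,2^{-|W|}\tr[P\Phi(Q)]=2^{-|W|}\tr[P\Phi(Q)].
\end{equation*}

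\emph{Second moment.} On an accepted setting, \eqref{eq:pair-block-second-moment} together with the hypothesis $w_{\mathfrak p_{\rm out}}(P)+w_{\mathfrak p_{\rm in}}(Q)\le w_*$ gives $\mathbb E[(Z^s_\tau)^2\mid s]\le 7^{w_*}$. Therefore
\begin{equation*}
\mathbb E X_\tau^2=p_\tau^{-2}\,\mathbb E\bigl[\mathbf 1_{\{s\in\mathsf A_\tau\}}(Z^s_\tau)^2\bigr]\le p_\tau^{-2}\,p_\tau\,7^{w_*}=p_\tau^{-1}7^{w_*}.
\end{equation*}

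\emph{Common-data concentration.} This is the only step that needs care, because $X_\tau$ can be as large as $p_\tau^{-1}7^{w_*}$. Empirical means controlled by Hoeffding's inequality would then pay a worse factor. I would instead use median-of-means with the variance bound $V:=p_*^{-1}7^{w_*}$.

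Split the $N$ i.i.d.\ shots into $K=\lceil 8\log(2|\mathcal T|/\delta)\rceil$ groups of size $n=\lceil 4V\eps_M^{-2}\rceil$. By Chebyshev's inequality, each group mean deviates from the target by more than $\eps_M$ with probability at most $V/(n\eps_M^2)\le 1/4$. The median fails only if at least half of the groups fail. A Hoeffding bound on the number of failed groups makes this probability at most $e^{-K/8}\le \delta/(2|\mathcal T|)$.

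Every shot is a single dataset recording $(s,R_b,S_b,L_a,Y_a)$, and every $X_\tau$ is computed from that same record. A union bound over the $|\mathcal T|$ queries therefore gives simultaneous accuracy $\eps_M$ with probability at least $1-\delta$. The number of samples used is $N=nK=\cO\bigl(p_*^{-1}7^{w_*}\eps_M^{-2}\log(2|\mathcal T|/\delta)\bigr)$, which is \eqref{eq:random-pairing-cost}. Correlations between different queries do not matter, since only the per-query tail bounds enter the union bound.
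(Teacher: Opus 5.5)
Your proposal is correct and follows the paper's proof essentially step for step. You condition on the pairing to obtain the mean and second moment, then run median-of-means on one common dataset with a union bound over $\mathcal T$; you derive the median-of-means guarantee explicitly via Chebyshev and a Hoeffding bound on failed groups, where the paper cites it.

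The only point the paper adds is the large-accuracy regime. For $\eps_M>2\sqrt{V}$ your group size is $n=1$, so $N=K$ is not bounded by the claimed $\cO(V\eps_M^{-2}\log(2|\mathcal T|/\delta))$. The paper disposes of $\eps_M\ge1$ by returning zero estimates, since every target overlap has modulus at most one.
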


    \begin{proof}
        Fix a query $\tau=(P,Q)$. For every accepted pairing $s$, Proposition~\ref{prop:pair-block-process-shadow} gives conditional mean $2^{-|W|}\tr[P\Phi(Q)]$ and conditional second moment at most $7^{w_*}$ for $Z_\tau^s$. Averaging over the pairing, with $X_\tau=0$ on rejection, therefore gives
        \begin{align*}
            \mathbb E X_\tau
            &=p_\tau^{-1}\Pr\{s\in\mathsf A_\tau\}\,2^{-|W|}\tr[P\Phi(Q)]
            =2^{-|W|}\tr[P\Phi(Q)],\\
            \mathbb E X_\tau^2
            &\le p_\tau^{-2}\Pr\{s\in\mathsf A_\tau\}\,7^{w_*}
            =p_\tau^{-1}7^{w_*}.
        \end{align*}
        Thus acceptance with probability $p_\tau$ cancels one factor of $p_\tau^{-1}$ in the second-moment bound.

        We next estimate all overlaps from repeated independent samples, redrawing the pairings each time. The variance of every $X_\tau$ is at most $p_*^{-1}7^{w_*}$. For $0<\eps_M<1$, divide the data into $\cO(\log(2|\mathcal T|/\delta))$ groups, each containing $\cO(p_*^{-1}7^{w_*}\eps_M^{-2})$ samples, with sufficiently large constants. For each query, average its $X_\tau$ values within each group and take the median of these averages. The median-of-means bound \cite[Theorem~4.1]{DevroyeEtAl2016} gives error at most $\eps_M$ with failure probability at most $\delta/|\mathcal T|$. A union bound gives simultaneous success with probability at least $1-\delta$ and the sample count \eqref{eq:random-pairing-cost}. The same groups serve all queries, since the union bound does not require independence between their estimates. For $\eps_M\ge1$, zero estimates suffice because every target overlap has absolute value at most one.
    \end{proof}

    Section~\ref{sec:ancilla-structure} gives unknown-structure learning with ancillary modes, and Appendix~\ref{sec:modewise-ancilla} supplies its measurement construction. Separate-query access is discussed in Remark~\ref{rem:one-marker-recovery} and model misspecification in Corollary~\ref{cor:fermionic-misspecification}.

\subsection{Internal-marker dynamics and pair calibration}\label{subsec:internal-calibration}

    The preceding construction used an idle auxiliary marker, with no Hamiltonian or jump operator involving its mode. The finite-range protocol instead uses markers within the system. Terms acting on these modes contribute to the measured derivatives and must be calibrated. For odd local jumps, the contribution is not the bare self-derivative of the marker Majorana because a remote odd factor changes the sign of the sandwich term. This sign modification is described algebraically by the remote-parity-twisted form of the same local generator.

    Fix a region $e$ and a local parity-covariant term $\cK:=\cL_e$ from the decomposition fixed in Section~\ref{sec:model}. In this subsection abbreviate
    \begin{equation*}
        H:=H_e,
        \qquad
        \cI_p:=\cI_{e,p},
        \qquad
        K^{(p)}:=K^{(e,p)}.
    \end{equation*}
    For $q\in\{0,1\}$, define
    \begin{align}
        \cK^{\langle q\rangle}(X)
        :=
        -i[H,X]
        +
        \sum_{p=0}^1\sum_{A,B\in\cI_p}
        K^{(p)}_{A,B}
        \left(
            (-1)^{pq}\Gamma_AX\Gamma_B
            -
            \frac12\{\Gamma_B\Gamma_A,X\}
        \right).
        \label{eq:twisted-generator}
    \end{align}
    Thus $\cK^{\langle0\rangle}=\cK$, while $\cK^{\langle1\rangle}$ flips only the sandwich term of the odd Kossakowski block.  For the local parity $\Pi_e:=\prod_{j\in e}(\Id-2n_j)$ and $R_{\Pi_e}(X):=X\Pi_e$, one has $\cK^{\langle1\rangle}=R_{\Pi_e}\cK R_{\Pi_e}$.  Since $R_{\Pi_e}$ is Hilbert--Schmidt unitary, $\|\cK^{\langle1\rangle}\|_{2\to2}=\|\cK\|_{2\to2}$.

    \begin{lemma}[Graded factorization]\label{lem:graded-factor}
        Let $X$ be homogeneous and supported on the region of $\cK$, and let $Y$ be homogeneous of parity $q$ on a disjoint region.  Under the fermionic extension,
        \begin{equation}
            \cK(XY)=\cK^{\langle q\rangle}(X)Y,
            \qquad
            \cK(YX)=Y\cK^{\langle q\rangle}(X).
            \label{eq:graded-factor}
        \end{equation}
    \end{lemma}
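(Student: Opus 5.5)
By linearity it suffices to check \eqref{eq:graded-factor} separately on the three kinds of terms in $\cK$. These are the commutator $-i[H,\cdot]$, the anticommutator $-\tfrac12\{\Gamma_B\Gamma_A,\cdot\}$, and the sandwich $\Gamma_A(\cdot)\Gamma_B$ with $A,B\in\cI_p$. We use only the graded commutation rule for homogeneous operators on disjoint regions, $ZY=(-1)^{\pi(Z)\pi(Y)}YZ$, applied with $Y$ of parity $q$ and $Z$ supported on $e$. The fermionic extension \eqref{eq:graded-extension} is defined by the same Majorana-superoperator expansion, so $\cK(XY)$ is literally computed by these products on the enlarged algebra.

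\textbf{Even terms.} $H$ is even and $\Gamma_B\Gamma_A$ is even, because $\pi(A)=\pi(B)=p$. Hence both commute with $Y$. This gives $[H,XY]=[H,X]Y$ and $\{\Gamma_B\Gamma_A,XY\}=\{\Gamma_B\Gamma_A,X\}Y$, so these pieces factor with no sign, exactly as in $\cK^{\langle q\rangle}$.

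\textbf{Sandwich term.} We have $\Gamma_A XY\Gamma_B=\Gamma_A X\Gamma_B\,(-1)^{pq}Y$, since moving $\Gamma_B$ (parity $p$) to the left past $Y$ (parity $q$) produces the sign $(-1)^{pq}$. This is precisely the twist in \eqref{eq:twisted-generator}. Summing the three kinds of terms gives $\cK(XY)=\cK^{\langle q\rangle}(X)Y$.

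\textbf{The $YX$ ordering.} The same computation applies with $Y$ on the left. Now $\Gamma_A$ must pass through $Y$, which gives $\Gamma_A YX\Gamma_B=(-1)^{pq}Y\Gamma_AX\Gamma_B$, while the even pieces again commute with $Y$. This yields $\cK(YX)=Y\cK^{\langle q\rangle}(X)$. Homogeneity of $X$ is not actually needed, because $X$ is never moved. It only guarantees that $XY$ and $YX$ are the homogeneous products appearing later.

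The only delicate point is confirming that the fermionic extension really acts by these literal products, with no extra Jordan--Wigner parity strings. This holds by definition \eqref{eq:graded-extension}. As a cross-check, for $q=1$ the relation $\cK^{\langle1\rangle}=R_{\Pi_e}\cK R_{\Pi_e}$ is consistent with the result: writing $Y=\Pi_e\cdot(\Pi_eY)$ is not needed, but the sign bookkeeping above matches it.
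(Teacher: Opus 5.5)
Your proof is correct and follows the paper's own argument. The Hamiltonian and anticommutator multipliers $H$ and $\Gamma_B\Gamma_A$ are even and commute with $Y$, and in the sandwich term the single parity-$p$ multiplier that crosses $Y$ ($\Gamma_B$ for $XY$, $\Gamma_A$ for $YX$) produces exactly the twist $(-1)^{pq}$ of \eqref{eq:twisted-generator}; your closing cross-check via $R_{\Pi_e}$ is unnecessary and too vague to add anything, so you can drop it.
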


    \begin{proof}
        Hamiltonian and anticommutator multipliers are even and commute with $Y$.  In sandwich terms, the right multiplier of parity $p$ crosses $Y$, producing $(-1)^{pq}$.  The second is analogous.
    \end{proof}

    As in the triplet protocol for continuum free-fermion potential learning \cite{BluhmLemmMobusSiebert2026Coulomb}, fix distinct marker modes $x_1,x_2,x_3$, choose one Majorana $\eta_i$ on mode $x_i$, and set $X_i:=\{x_i\}$. Disjoint mode sets $U$ and $V$ are \emph{termwise separated} relative to the fixed decomposition if every nonzero local term acts on at most one of them. Equivalently, for each $e$ with $\cL_e\ne0$, either $e\cap U=\varnothing$ or $e\cap V=\varnothing$. Set
    \begin{equation*}
        \cK_i:=\sum_{\substack{e:\,\cL_e\ne0\\e\cap X_i\ne\varnothing}}\cL_e,
        \qquad
        Y_i:=X_i\cup
        \bigcup_{\substack{e:\,\cL_e\ne0\\e\cap X_i\ne\varnothing}}e .
    \end{equation*}
    The twist $\cK_i^{\langle q\rangle}$ is applied termwise as in \eqref{eq:twisted-generator}.  Define the remote-odd marker scalar
    \begin{equation}
        b_i
        :=
        2^{-|Y_i|}
        \tr_{Y_i}
        \left[
            \eta_i\cK_i^{\langle1\rangle}(\eta_i)
        \right].
        \label{eq:marker-b}
    \end{equation}
    For $i\neq j$, set
    \begin{equation}
        O_{ij}:=i\eta_i\eta_j,
        \qquad
        s_{ij}:=2^{-m}\tr\!\left[O_{ij}\cL(O_{ij})\right].
        \label{eq:marker-pair-derivative}
    \end{equation}

    \begin{proposition}[Pair-marker calibration identity]\label{prop:pair-marker}
        If marker cores $X_i$ and $X_j$ are termwise separated, then
        \begin{equation}
            s_{ij}=b_i+b_j.
        \end{equation}
        Hence, for three pairwise termwise-separated markers,
        \begin{align}
            b_1&=\frac12(s_{12}+s_{13}-s_{23}),\\
            b_2&=\frac12(s_{12}+s_{23}-s_{13}),\\
            b_3&=\frac12(s_{13}+s_{23}-s_{12}).
            \label{eq:three-marker-calibration}
        \end{align}
    \end{proposition}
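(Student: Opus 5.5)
We expand $s_{ij}$ over the fixed decomposition, $\cL=\sum_e\cL_e$, and sort the local terms into three classes: those meeting neither $X_i$ nor $X_j$, those meeting $X_i$ (which by termwise separation do not meet $X_j$), and those meeting $X_j$. Linearity of the trace gives $s_{ij}=\sum_e 2^{-m}\tr[O_{ij}\cL_e(O_{ij})]$, and each class is handled separately. The three-marker formulas then follow by solving the linear system $s_{12}=b_1+b_2$, $s_{13}=b_1+b_3$, $s_{23}=b_2+b_3$. This is elementary: for example, adding the first two equations and subtracting the third gives $2b_1$.

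\emph{Terms avoiding both markers.} Such a term has support $e$ disjoint from $\{x_i,x_j\}$. The operator $O_{ij}=i\eta_i\eta_j$ is even and supported off $e$, so Lemma~\ref{lem:graded-factor} with $X=\Id$ and $q=0$ gives $\cL_e(O_{ij})=\cL_e(\Id)O_{ij}$. Since $\cL_e(\Id)$ lies in $\mathfrak A_e$, we get $\tr[O_{ij}\cL_e(\Id)O_{ij}]=\tr[\cL_e(\Id)]$. This equals $\tr[\cL_e^\dagger(\Id)\Id]$ and vanishes because $\cL_e$ is trace preserving ($\cL_e^\dagger(\Id)=0$). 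So these terms contribute nothing.

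\emph{Terms meeting $X_i$.} Write $O_{ij}=i\eta_i\eta_j$ with $\eta_i$ supported on $e$ and $\eta_j$ odd on a disjoint region. Lemma~\ref{lem:graded-factor} with $q=1$ gives $\cL_e(\eta_i\eta_j)=\cL_e^{\langle1\rangle}(\eta_i)\eta_j$. Hence
\begin{equation*}
2^{-m}\tr[O_{ij}\cL_e(O_{ij})]=2^{-m}\tr\bigl[(i\eta_i\eta_j)\,i\,\cL_e^{\langle1\rangle}(\eta_i)\eta_j\bigr].
\end{equation*}
The operator $\cL_e^{\langle1\rangle}(\eta_i)$ is odd, because $\cL_e^{\langle1\rangle}=R_{\Pi_e}\cL_eR_{\Pi_e}$ is parity covariant. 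Moving $\eta_j$ past it therefore produces a factor $-1$, which combines with $i^2=-1$ and $\eta_j^2=\Id$ to give $2^{-m}\tr[\eta_i\cL_e^{\langle1\rangle}(\eta_i)]$. Summing over all $e$ meeting $X_i$ gives $\cK_i^{\langle1\rangle}$. This is supported in $Y_i$, so factoring the trace over $Y_i$ and its complement leaves $2^{-|Y_i|}\tr_{Y_i}[\eta_i\cK_i^{\langle1\rangle}(\eta_i)]=b_i$.

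\emph{Terms meeting $X_j$.} These are symmetric. We use $O_{ij}=-i\eta_j\eta_i$ and the left version of the graded factorization. This yields $b_j$, which gives $s_{ij}=b_i+b_j$.

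The main obstacle is the bookkeeping of graded signs in the mixed step. We need the parity of $\cK^{\langle1\rangle}(\eta_i)$, the sign from commuting $\eta_j$ past it, and the phases $i^2$ to combine to exactly $+1$. We must also justify that the partial trace over the complement of $Y_i$ factors even though $Y_i$ need not be an initial segment in the Majorana order. For the latter, we would invoke the mode-permutation unitary argument used in Lemma~\ref{lem:graded-diamond-amplification}, together with the fact that the full trace of a product of an $\mathfrak A_{Y_i}$ element and an even complement identity factors.
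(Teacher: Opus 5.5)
Your proposal is correct and follows essentially the same route as the paper. Both split the fixed decomposition into terms meeting $X_i$, terms meeting $X_j$, and terms meeting neither (these vanish because $\tr[\cL_e(\Id)]=0$), apply Lemma~\ref{lem:graded-factor} with the odd remote factor, and use the same sign bookkeeping to reach $\eta_i\cK_i^{\langle1\rangle}(\eta_i)$; your worry about factoring the trace over $Y_i$ needs no permutation argument, since both normalized traces keep only the identity component of an element of $\mathfrak A_{Y_i}$.
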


    \begin{proof}
        Consider first a local term meeting $X_i$. Termwise separation makes its support disjoint from $X_j$, so Lemma~\ref{lem:graded-factor} gives
        \begin{equation*}
            \cL_e(O_{ij})
            =
            i\cL_e^{\langle1\rangle}(\eta_i)\eta_j.
        \end{equation*}
        The odd operator $\cL_e^{\langle1\rangle}(\eta_i)$ anticommutes with the disjoint Majorana $\eta_j$.  Hence
        \begin{equation*}
            O_{ij}\cL_e(O_{ij})
            =
            \eta_i\cL_e^{\langle1\rangle}(\eta_i),
        \end{equation*}
        and normalized tracing gives the contribution of $\cL_e$ to $b_i$. The same calculation for a term meeting $X_j$ gives its contribution to $b_j$. A term meeting neither marker contributes a factor $\tr[\cL_e(\Id)]=0$, while termwise separation excludes a term meeting both. Summing the local terms proves $s_{ij}=b_i+b_j$, and solving the three equations gives \eqref{eq:three-marker-calibration}.
    \end{proof}

\subsection{Corrected transfer derivatives}\label{subsec:corrected-transfer}

Let $R$ be a learning region termwise separated from marker core $X_i$. For this calculation, order the marker after the region. Any other fixed mode ordering introduces only known phases. Set $J_i:=J_{\eta_i}$ as in \eqref{eq:ancilla-dressing} and measure
    \begin{equation}
        \Lambda^{R|i}_{C,D}
        :=
        2^{-m}
        \tr\!\left[
            J_i(\Gamma_C)\cL(J_i(\Gamma_D))
        \right],
        \qquad C,D\subseteq\cM(R).
        \label{eq:internal-data}
    \end{equation}
    Let $\Lambda_{C,D}$ denote the ideal dressed transfer derivative obtained with an idle external marker for the same system generator.

    \begin{theorem}[Internal-marker correction]\label{thm:internal-correction}
        If $R$ and $X_i$ are termwise separated, then
        \begin{equation}
            \Lambda^{R|i}_{C,D}
            =
            \Lambda_{C,D}
            +
            \mathbf 1_{\{\pi(C)=\pi(D)=1\}}\delta_{C,D}\,b_i.
            \label{eq:internal-correction}
        \end{equation}
        Therefore the corrected data
        \begin{equation}
            \Lambda^{\rm corr}_{C,D}
            :=
            \Lambda^{R|i}_{C,D}
            -
            \mathbf 1_{\{\pi(C)=\pi(D)=1\}}\delta_{C,D}\,b_i
        \end{equation}
        are exactly the external-marker data needed in Theorem~\ref{thm:diagonal-extension}.
    \end{theorem}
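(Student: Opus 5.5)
We split the generator along the fixed decomposition, $\cL=\sum_e\cL_e$, and compute each local term's contribution to $\Lambda^{R|i}_{C,D}$ separately. The key point is that an idle external marker sees exactly the terms acting on $R$, while the internal marker also sees the terms in $\cK_i$. Termwise separation of $R$ and $X_i$ guarantees that every nonzero $\cL_e$ meets at most one of $R$ and $X_i$. We therefore sort the terms into three classes: (a) terms meeting $R$ but not $X_i$; (b) terms meeting $X_i$ (these form $\cK_i$); (c) terms meeting neither.

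\emph{Class (a).} When $\pi(D)=1$, write $J_i(\Gamma_D)=i\Gamma_D\eta_i$, with $\eta_i$ ordered after $R$. Since $\cL_e$ does not act on $x_i$, Lemma~\ref{lem:graded-factor} with $Y=\eta_i$ and $q=1$ gives $\cL_e(\Gamma_D\eta_i)=\cL_e^{\langle1\rangle}(\Gamma_D)\eta_i$. The same sandwich-sign twist appears for an idle external marker: the fermionic extension \eqref{eq:graded-extension} applied to $\Gamma_D\eta$ produces exactly the graded factor of Lemma~\ref{lem:graded-factor}. Hence these terms contribute identically to $\Lambda^{R|i}_{C,D}$ and to $\Lambda_{C,D}$. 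The same holds trivially when $\pi(D)=0$.

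\emph{Class (c).} These terms act on neither $R$ nor $x_i$. For the external marker such terms give zero, since $\tr[\cL_e(\Id)]=0$ after factoring. For the internal marker, we factor $J_i(\Gamma_D)$ as an operator on the complement of $e$ and apply Lemma~\ref{lem:graded-factor}, which produces $\cL_e^{\langle q\rangle}(\Id)$. This vanishes because trace preservation of $\cL_e^{\langle q\rangle}$ also holds for $q=1$: in $\cL_e^{\langle1\rangle}(\Id)$, only the odd sandwich term acquires a sign, and we must check that the normalized trace against the remaining factor still vanishes. We can avoid this check by using traces only: the trace-dual $\cL_e^\dagger(\Id)=0$ kills the full-trace contribution once $J_i(\Gamma_C)$ restricted to $e$ is the identity. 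Terms that also act on $R$ belong to class (a).

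\emph{Class (b).} This is the new contribution and the main step. If $\pi(D)=0$, the input $\Gamma_D$ is the identity on $x_i$ and $e$ is disjoint from $R$. Lemma~\ref{lem:graded-factor} gives $\Gamma_D\cL_e(\Id)$ up to a twist, and the trace then vanishes by trace preservation, since $\tr[X\cL_e^{\langle0\rangle}(\Id)]$ is zero after factoring. If $\pi(D)=1$, the input is $i\Gamma_D\eta_i$, and factoring out the even-or-odd $\Gamma_D$ supported on $R$ (parity $1$) gives $i\Gamma_D\cL_e^{\langle1\rangle}(\eta_i)$, up to ordering signs. Pairing this with $J_i(\Gamma_C)$ and tracing over $R$ forces $C=D$ by orthogonality. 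The remaining normalized trace over $Y_i$ is $\tr[\eta_i\cL_e^{\langle1\rangle}(\eta_i)]$. The cases with $C$ odd and $D$ even vanish by parity. Summing over $e$ in $\cK_i$ yields $b_i$ from \eqref{eq:marker-b}, as in the proof of Proposition~\ref{prop:pair-marker}.

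The main obstacle is the sign bookkeeping in class (b). We must show that the product $i^2$, the sign from commuting $\eta_i$ past $\Gamma_D$ and $\Gamma_C$, and the ordering of the marker after $R$ combine to exactly $+1$, so that the correction is $+\delta_{C,D}b_i$ rather than $-\delta_{C,D}b_i$. This works just as in Proposition~\ref{prop:pair-marker}, where $O_{ij}\cL_e(O_{ij})=\eta_i\cL_e^{\langle1\rangle}(\eta_i)$ was obtained with the analogous $i\cdot i$ and one anticommutation. We verify it here with $\Gamma_C=\Gamma_D$ odd and $\Gamma_D^2=\Id$. Subtracting the correction then gives the corrected data, and Theorem~\ref{thm:diagonal-extension} applies verbatim.
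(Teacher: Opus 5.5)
Your proposal is correct and follows the paper's proof: you split the fixed decomposition three ways under termwise separation. Terms meeting $R$ reproduce $\Lambda_{C,D}$, terms meeting $X_i$ contribute $\delta_{C,D}b_i$ only for odd $C=D$ (via Lemma~\ref{lem:graded-factor}, with the sign fixed by $i\cdot i$ and one anticommutation), and the remaining terms vanish because $\tr[\cL_e(\Id)]=0$.

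One correction in class (c): the argument $J_i(\Gamma_D)$ is always even, so only $q=0$ arises and $\cL_e(J_i(\Gamma_D))=\cL_e(\Id)J_i(\Gamma_D)$ directly. Your claim that $\cL_e^{\langle1\rangle}$ is trace preserving is therefore unnecessary, and it is false whenever the odd block is nonzero, since $\tr[\cL_e^{\langle1\rangle}(\Id)]=-2d\sum_{A}K^{(1)}_{A,A}$. Your fallback $\cL_e^\dagger(J_i(\Gamma_C))=J_i(\Gamma_C)\cL_e^\dagger(\Id)=0$ already settles that case.
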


    \begin{proof}
        Split the fixed decomposition term by term. A term meeting $R$ is disjoint from $X_i$ and therefore acts on the marker exactly as on an external remote odd factor. Summing these terms gives the ideal datum $\Lambda_{C,D}$. A term meeting $X_i$ is disjoint from $R$. Its normalized overlap vanishes unless $C$ and $D$ are both odd and equal. In the remaining case, Lemma~\ref{lem:graded-factor} and one graded reordering give
        \begin{equation*}
            2^{-|R|}\tr_R[\Gamma_C\Gamma_D]
            \;2^{-|Y_i|}
            \tr[\eta_i\cK_i^{\langle1\rangle}(\eta_i)]
            =
            \delta_{C,D}b_i.
        \end{equation*}
        A term meeting neither region contributes a factor $\tr[\cL_e(\Id)]=0$. Termwise separation excludes a term meeting both regions, proving \eqref{eq:internal-correction}.
    \end{proof}

    Under the stated separation, corrected derivatives enter the same signed inversion as external-marker data. Section~\ref{sec:pair-marker-blocks} gives the finite-range marker geometry and sampling bounds. Appendix~\ref{app:sdp-implementation} adapts the local PSD-block fit of \cite{MobusEtAl2026} to return a physical generator from noisy coefficients, as required for diamond-norm recovery and local-observable prediction.

\section{Geometric learning without external modes}\label{sec:pair-marker-blocks}

    Dissipative Fermi--Hubbard models combine hopping and onsite interactions with particle loss, injection, and dephasing. Under the geometric and experimental assumptions below, we learn the underlying canonical generator coefficients from even preparations and measurements after short-time evolution, using only modes already present in the system. We first treat finite-range generators and then extend the protocol to exponentially decaying interactions.

\subsection{Geometric model and marker packing}
    Let the modes be vertices of a graph $\mathsf G=([m],E)$ with graph distance $\dist_{\mathsf G}$. We take the interaction range $r\in\mathbb N_0$, without loss of generality. For $S\subseteq[m]$, write $\diam_{\mathsf G}(S)$ for its diameter and define
    \begin{equation*}
        B_R(S):=\{v\in[m]:\dist_{\mathsf G}(v,S)\le R\},
        \qquad
        \dist_{\mathsf G}(v,S):=\min_{u\in S}\dist_{\mathsf G}(v,u).
    \end{equation*}
    Set $B_R(\varnothing)=\varnothing$ and $\diam_{\mathsf G}(\varnothing)=0$. A coefficient pair $(A,B)$ is geometrically $(k,r)$-local if
    \begin{equation}
        |\fsupp(A,B)|\le k,
        \qquad \diam_{\mathsf G}(\fsupp(A,B))\le r.
    \end{equation}
    Let $\mathfrak L^{\rm geo}_{k,r}$ be the subcone of $\mathfrak L_k$ admitting a decomposition as in Definition~\ref{def:localcone} with $\diam_{\mathsf G}(e)\le r$ for every nonzero term $\cL_e$. Its coefficients vanish outside the geometrically allowed supports. These supports need not be connected, since a basis coefficient may omit an intermediate mode of a local term. Geometric locality is stronger than all-to-all $k$-locality, treated with auxiliary modes in Corollary~\ref{cor:alltoall-modewise}.

    \begin{assumption}[Geometric finite-range marker packing]\label{ass:marker-packing}
        The graph has maximum degree $\Delta=\cO(1)$. Every nonempty region $S$ with $|S|\le k$ and $\diam_{\mathsf G}(S)\le r$ has three single-mode marker cores $X_1(S),X_2(S),X_3(S)\subseteq B_{4r+3}(S)$ satisfying
        \begin{equation}
            \dist_{\mathsf G}(S,X_i)>r,
            \qquad \dist_{\mathsf G}(X_i,X_j)>r\quad(i\ne j).
        \end{equation}
    \end{assumption}

    Whenever a separated triplet exists at finite distance from $S$, one can choose it in $B_{4r+3}(S)$. Either this ball contains the connected component containing $S$, or a shortest path from $S$ supplies markers at distances $r+1$, $2(r+1)$, and $3(r+1)$.

    On a periodic cubic lattice with every side length greater than $2(4r+3)$, fix $x\in S$ and place markers along one coordinate direction at distances $r+j(r+1)$, $j=1,2,3$, from $x$. These markers satisfy the assumption. The scale-dependent construction and coloring bound used in Corollary~\ref{cor:exponential-lattice} are proved in Appendix~\ref{app:exponential-proofs}. The separation conditions ensure that a term of diameter at most $r$ meets at most one of $S,X_1,X_2,X_3$, exactly as required by the calibration identities of Subsections~\ref{subsec:internal-calibration} and~\ref{subsec:corrected-transfer}.

\subsection{Local even block frames}
    For a learning region $S$ and its marker cores, define the extended block
    \begin{equation}
        W(S):=S\cup X_1(S)\cup X_2(S)\cup X_3(S),
        \qquad |W(S)|=|S|+3\le k+3.
    \end{equation}
    Choose a Majorana on each marker mode and use the dressing maps $J_i$ and marker-pair observables $O_{ij}$ from Subsections~\ref{subsec:internal-calibration} and~\ref{subsec:corrected-transfer}. The required curves are
    \begin{equation}
        M^{S|i}_{C,D}(t):=2^{-m}\tr\!\left[
            J_i(\Gamma_C)e^{t\cL}(J_i(\Gamma_D))\right],
        \qquad C,D\subseteq\cM(S),
        \label{eq:block-internal-transfer}
    \end{equation}
    and
    \begin{equation}
        t\longmapsto 2^{-m}\tr[O_{ij}e^{t\cL}(O_{ij})],
        \qquad 1\le i<j\le3.
        \label{eq:block-pair-curves}
    \end{equation}
    Their derivatives at zero are $\Lambda^{S|i}_{C,D}$ and $s_{ij}$, respectively.

    \begin{definition}[Local internal block frame]\label{def:local-block-frame}
        A local internal block frame on $W(S)$ is an even randomized preparation-and-measurement scheme that, with maximally mixed input and identity measurement on $W(S)^c$, supplies from common data an unbiased estimator $Z_q$ for every curve $q$ in \eqref{eq:block-internal-transfer} and \eqref{eq:block-pair-curves}, with
        \begin{equation*}
            \mathbb E Z_q^2\le\kappa
        \end{equation*}
        uniformly over learning regions, queried curves, and sampled times, for a constant $\kappa$ independent of $m$. Its average input on $W(S)$ is maximally mixed.
    \end{definition}

    The frame need only cover the specified dressed-transfer and calibration observables. Local fermionic Gaussian control, including pairing rotations, together with occupation measurements provides a sufficient implementation \cite{JozsaMiyake2008,WanEtAl2023}. Related pair-creation and interferometric primitives appear in \cite{BluhmLemmMobusSiebert2026Coulomb}. Remark~\ref{rem:number-conserving-identifiability} explains the obstruction under strictly number-conserving access.
    \begin{proposition}[Pair-block realization of local block frames]\label{prop:pair-block-local-frame}
        Suppose the two-mode even preparations and measurements of Proposition~\ref{prop:pair-block-process-shadow} are implementable for every pairing of $W(S)$. If $|W(S)|$ is odd, allow one unpaired mode whose parity $Z_j=-i\gamma_{2j-1}\gamma_{2j}=\Id-2n_j$ can be prepared in either eigenstate $(\Id+\sigma Z_j)/2$, $\sigma\in\{\pm1\}$, and measured. These operations realize Definition~\ref{def:local-block-frame} with $\kappa=\kappa_k=\cO_k(1)$.
    \end{proposition}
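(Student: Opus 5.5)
The plan is to reduce every queried curve in \eqref{eq:block-internal-transfer} and \eqref{eq:block-pair-curves} to a single input--output overlap $2^{-|W|}\tr_W[P\,\Phi(Q)]$ on the block $W=W(S)$. Here $\Phi$ is the effective channel on $W$ obtained from $e^{t\cL}$ with maximally mixed input on $W^c$ and identity measurement there. This $\Phi$ is parity covariant, since $e^{t\cL}$ is and the complement input is even. For a transfer curve, take $P=J_i(\Gamma_C)$ and $Q=J_i(\Gamma_D)$ with $C,D\subseteq\cM(S)$. For a calibration curve, take $P=Q=O_{ij}$. Because the complement is maximally mixed and traced out with identity effect, $2^{-m}\tr[Pe^{t\cL}(Q)]=2^{-|W|}\tr_W[P\Phi(Q)]$. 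Each $P,Q$ is, up to a known sign, an even Hermitian Majorana monomial supported on $W$ with at most $2(k+1)$ Majorana labels.

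Next, I would fix a query-independent distribution over pairs of pairings $(\mathfrak p_{\rm out},\mathfrak p_{\rm in})$ of $W$, drawn uniformly and independently. If $|W|$ is odd, the distribution also chooses one unpaired mode uniformly. The key combinatorial step is to show that every queried even monomial is compatible with at least one pairing. Let $O$ be even, and call a mode odd if it carries exactly one of its two Majorana labels in $O$. The number of odd modes is even, so we can pair odd modes with each other and even modes (including label-free modes) with each other. This leaves at most one even mode over when $|W|$ is odd, and it serves as the unpaired mode. The unpaired factor is then either $\Id$ or $Z_j=-i\gamma_{2j-1}\gamma_{2j}$, which is measurable and preparable by assumption. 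Thus each query has acceptance probability $p_\tau\ge p_*$, where $p_*^{-1}$ is at most the number of (near-)pairings of a set of size $\le k+3$, squared. This is a constant $\cO_k(1)$.

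I would then extend Proposition~\ref{prop:pair-block-process-shadow} by one factor for the unpaired mode. Choose the sign $\sigma$ uniformly and record $\sigma$, and read out $Z_j$ on the output side. The factor for this mode is $\sigma$ on input and the $Z_j$ outcome on output when the query involves $Z_j$; it is $1$ otherwise. With no axis choice needed, this factor contributes variance at most one. Since $\mathbb E_\sigma[(\Id+\sigma Z_j)/2]=\Id/2$, the average input on $W$ remains maximally mixed, as Definition~\ref{def:local-block-frame} requires.

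Lemma~\ref{lem:randomized-pairing-estimators} then gives unbiased estimators $X_\tau$ from common data, with $\mathbb E X_\tau^2\le p_*^{-1}7^{w_*}$ and $w_*\le|W|\le k+3$. This yields $\kappa_k=p_*^{-1}7^{k+3}=\cO_k(1)$, uniform in $S$, in the query, in $t$, and in $m$. The remaining care is the ordering signs $\sigma_{\mathfrak p}(O)$, which are known, and the dressing phases, which are also known. Neither affects the moments. The main obstacle I anticipate is the compatibility argument for odd $|W|$, where the unpaired mode must be chosen query-dependently. The uniform random choice of this mode handles it, at the cost of an extra factor $|W|$ in $p_*^{-1}$.
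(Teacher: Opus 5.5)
Your proposal is correct and follows essentially the same route as the paper. Like the paper, you reduce each curve to an overlap for the parity-covariant block channel with maximally mixed complement, pair the modes carrying a single Majorana label among themselves so that every even monomial is compatible with some (near-)pairing, draw input and output pairings uniformly and independently, and apply Lemma~\ref{lem:randomized-pairing-estimators} with the singleton parity primitive to get a $k$-dependent constant $\kappa_k$ and a maximally mixed average input.
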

    \begin{proof}
        Fix an even Majorana monomial on $W(S)$. The modes on which it contains exactly one Majorana occur in an even number, so pair them with one another and pair the remaining modes arbitrarily. The monomial then has even parity on every two-mode block. If one mode $j$ remains unpaired, its factor is $\Id$ or $-Z_j$, both accessible. Thus a suitable pairing exists for each required monomial. The experiment chooses input and output pairings independently and uniformly, without tailoring them to a query. Since $|W(S)|\le k+3$, the probability that both pairings suit the two monomials of any fixed query is bounded below by a positive constant depending only on $k$.

        Let $\mathbb E_{W(S)}:\mathfrak A_{[m]}\to\mathfrak A_{W(S)}$ be the normalized-trace-preserving conditional expectation. The reduced channel $\Phi_{t,W(S)}:=\mathbb E_{W(S)}\circ e^{t\cL}|_{\mathfrak A_{W(S)}}$ is completely positive, trace preserving, and parity covariant. It describes evolution with the modes outside $W(S)$ initialized in the maximally mixed state and satisfies
        \begin{equation*}
            2^{-|W(S)|}\tr_{W(S)}[P\Phi_{t,W(S)}(Q)]
            =2^{-m}\tr[P e^{t\cL}(Q)]
        \end{equation*}
        for even monomials $P,Q$ on $W(S)$. Thus the local estimator describes the actual full-system evolution, including interactions across the boundary of $W(S)$.

        Apply Lemma~\ref{lem:randomized-pairing-estimators}, with the singleton parity primitive when needed, to obtain $\kappa_k=\cO_k(1)$. Each input ensemble averages to the maximally mixed state, completing the frame construction.
    \end{proof}
\subsection{Parallel sampling}
    The triplet system \eqref{eq:three-marker-calibration} recovers the marker self-terms from the pair derivatives $s_{ij}$ in \eqref{eq:marker-pair-derivative}. Under Assumption~\ref{ass:marker-packing}, Theorem~\ref{thm:internal-correction} converts block-frame derivatives into external-marker derivatives with constant error amplification.

    To reuse each full-system sample across local regions, we follow the parallel process-shadow strategy of \cite{StilckFranca2024}, replacing product-Pauli inputs by independent even block frames. The following lemma justifies this parallel sampling for fermionic blocks.

    \begin{lemma}[Parallel even-block marginalization]\label{lem:parallel-block-marginalization}
        Let $U_1,\ldots,U_q$ be pairwise disjoint mode sets, and let $\rho_i(\omega_i)$ be independently sampled even input states with $\mathbb E_{\omega_i}\rho_i(\omega_i)=2^{-|U_i|}\Id_{U_i}$. Fix $j$ and compare two full-system preparations: (i) prepare $\rho_i(\omega_i)$ on every $U_i$, or (ii) prepare $\rho_j(\omega_j)$ on $U_j$ and every other $U_i$ maximally mixed. In both cases, prepare the complement maximally mixed and apply the same parity-covariant channel and local measurement on $U_j$. Any estimator using only the preparation and measurement records on $U_j$ has the same expectation in (i) and (ii).
    \end{lemma}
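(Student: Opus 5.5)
The plan is to condition on everything except the randomness on the blocks $U_i$, $i\neq j$, and average over that randomness first. The estimator depends only on $\omega_j$ (the preparation record on $U_j$), the measurement settings on $U_j$, and the outcomes there. So its expectation in either scenario is
\begin{equation*}
\mathbb E_{\omega_j,\text{settings}}\Big[\sum_y f(\omega_j,\text{settings},y)\,\tr\big[E_y\,\Phi(\sigma)\big]\Big],
\end{equation*}
where $E_y$ is the effect on $U_j$, extended by identity on the complement. The full input state $\sigma$ has the following form in each case. In (i) it is a product $\rho_j(\omega_j)\prod_{i\ne j}\rho_i(\omega_i)\cdot 2^{-|U^c|}\Id$ over the disjoint blocks. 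In (ii) every factor $\rho_i$ with $i\ne j$ is replaced by $2^{-|U_i|}\Id_{U_i}$. Since the measurement records outside $U_j$ are not used, the probability of outcome $y$ on $U_j$ is $\tr[E_y\Phi(\sigma)]$. This follows from the Born rule together with the fact that summing over the outcomes of the other blocks' measurements yields the identity effect on those blocks.

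The first step is to make sense of the product of the block states. Each $\rho_i(\omega_i)$ is even, so the fermionic product of even operators on disjoint mode sets is well defined, independent of ordering, and is a legitimate state. Operators on disjoint regions commute by the graded rule, since all factors are even. The full state is therefore multilinear in the block factors. Because the $\omega_i$ are independent of $\omega_j$ and of the measurement randomness, I would apply Fubini and take the expectation over $\omega_i$, $i\neq j$, inside the trace. Linearity of $\Phi$ and multilinearity of the product then replace each $\rho_i(\omega_i)$ by its mean $2^{-|U_i|}\Id_{U_i}$, which is exactly the input of scenario (ii). The two expectations therefore coincide.

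The main point to check is this multilinearity. It requires that the block states genuinely enter the global state as a product in the CAR algebra, rather than being merely correlated in some ordering-dependent way. Evenness guarantees that the product $\prod_i\rho_i$ is a well-defined element for any ordering, and that it is linear in each factor separately. Parity covariance of $\Phi$ is not actually needed for the averaging step; it only ensures the output stays in the even algebra, so that the even effects give valid probabilities. Finally, I would note that the estimator is bounded, or at least integrable, so that Fubini applies.
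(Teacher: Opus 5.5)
Your proposal is correct and follows essentially the same route as the paper: evenness makes the block states commute, so their product is a valid joint input, and one then conditions on $\omega_j$. The Born rule with the $U_j$ effect extended by the identity, independence of the $\omega_i$, and linearity of $\Phi$ replace each $\rho_i(\omega_i)$, $i\neq j$, by $2^{-|U_i|}\Id_{U_i}$, after which you sum with the estimator's weights and average over $\omega_j$; your added remarks on marginalizing other blocks' outcomes and on integrability for Fubini only make explicit what the paper leaves implicit.
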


    \begin{proof}
        Even states on disjoint mode sets commute, so their product is a valid joint input. Let $\rho_{\rm rest}$ be maximally mixed on the complement, write $\omega_{-j}=(\omega_i)_{i\ne j}$, and let $\Phi$ be the full-system channel. Let $M_y$ be the POVM element for outcome $y$ of the measurement on $U_j$, extended by the identity elsewhere. Conditional on $\omega_j$, Born's rule and independence give
        \begin{equation*}
            \mathbb E_{\omega_{-j}}\tr\!\left[M_y\Phi\!\left(\rho_{\rm rest}\prod_{i=1}^q\rho_i(\omega_i)\right)\right]=\tr\!\left[M_y\Phi\!\left(\rho_{\rm rest}\rho_j(\omega_j)\prod_{i\ne j}2^{-|U_i|}\Id_{U_i}\right)\right].
        \end{equation*}
        The right side is the probability of outcome $y$ in (ii). Summing with the estimator's weights and averaging $\omega_j$ proves the claim, even when the dynamics couples the blocks.
    \end{proof}

    Color the intersection graph of $W(S)$ over all nonempty regions of size at most $k$ and diameter at most $r$. Their bounded radius and the graph degree give
    \begin{equation}
        K_{\rm col}=\cO_{k,r,\Delta}(1)
    \end{equation}
    colors, with mode-disjoint blocks within each color. For each sample, choose a color uniformly and run independent frames on its blocks. Multiply each selected block's estimator by $K_{\rm col}$ and set it to zero otherwise. Lemma~\ref{lem:parallel-block-marginalization} gives unbiasedness and second moment at most $K_{\rm col}\kappa_k$. Proposition~\ref{prop:fermionic-derivative-poly} then estimates the derivatives from these exact full-system curves.

\subsection{Learning guarantees and the Fermi--Hubbard example}

    Geometric inversion uses nonempty anchor supports. The dependent scalar $\chi_{\varnothing,\varnothing}$ receives contributions from the entire system and is excluded from the learned index set.

    \begin{theorem}[Geometric fermionic Lindbladian learning]\label{thm:geo-noaux}
        Fix $k,r$ and $\delta\in(0,1)$. Let $\mathsf G$ be a known graph of maximum degree $\Delta$, and suppose $\cL$ admits a decomposition as in Definition~\ref{def:localcone} with
        \begin{equation*}
            |e|\le k,
            \qquad \diam_{\mathsf G}(e)\le r,
            \qquad \alpha(\{\cL_e\})\le\bar\alpha.
        \end{equation*}
        The learner receives $k$, $r$, and $\bar\alpha>0$, but neither the decomposition nor the nonzero coefficient locations. Assume that a marker assignment satisfying Assumption~\ref{ass:marker-packing} is known or constructible from $\mathsf G$, and that the pair-block operations of Proposition~\ref{prop:pair-block-local-frame} are available for this assignment.

        A nonadaptive protocol using only local even preparations and measurements on modes of the system has the following guarantees. For $0<\eps_\chi\le\bar\alpha$, it estimates all induced Hamiltonian coefficients $h_A$ and Kossakowski entries $G_{A,B}$ to entrywise error $\eps_\chi$ using
        \begin{equation}
            \widetilde{\cO}_{k,r,\Delta}\!\left(
                \bar\alpha^2\eps_\chi^{-2}\log\frac m\delta
            \right)
            \label{eq:geo-noaux-entrywise}
        \end{equation}
        samples. For $0<\eps_\diamond\le m\bar\alpha$, a geometric consistency fit returns a valid $\widehat\cL\in\mathfrak L^{\rm geo}_{k,r}$ with $\|\widehat\cL-\cL\|_\diamond\le\eps_\diamond$ using
        \begin{equation}
            \widetilde{\cO}_{k,r,\Delta}\!\left(
                \bar\alpha^2m^2\eps_\diamond^{-2}\log\frac m\delta
            \right)
            \label{eq:geo-noaux-diamond}
        \end{equation}
        samples. Each guarantee holds with probability at least $1-\delta$. All evolution times lie in $[0,(4k\bar\alpha)^{-1}]$, and no control is applied during evolution. The notation $\widetilde{\cO}$ hides polylogarithmic interpolation factors and constants fixed by the displayed parameters and experimental frame.
    \end{theorem}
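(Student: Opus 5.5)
The plan is to assemble the pipeline already built in Sections~\ref{sec:parity-dressing-and-local-signed-inversion}--\ref{sec:pair-marker-blocks} and to track the error through it.

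\textbf{Step 1: data collection and derivative estimation.} For every nonempty region $S$ with $|S|\le k$ and $\diam_{\mathsf G}(S)\le r$, fix the marker triplet of Assumption~\ref{ass:marker-packing} and the block $W(S)$. Proposition~\ref{prop:pair-block-local-frame} gives a local internal block frame with $\kappa_k=\cO_k(1)$. The coloring into $K_{\rm col}=\cO_{k,r,\Delta}(1)$ classes, combined with Lemma~\ref{lem:parallel-block-marginalization}, gives unbiased estimators of all curves \eqref{eq:block-internal-transfer} and \eqref{eq:block-pair-curves} from common data, with second moment at most $K_{\rm col}\kappa_k$. The number of queried curves is $|\mathcal T|=\cO_{k,r,\Delta}(m)$.

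I would sample on a polynomial-interpolation grid in $[0,T]$ with $T=(4k\bar\alpha)^{-1}$. Proposition~\ref{prop:fermionic-derivative-poly} then estimates every derivative $\Lambda^{S|i}_{C,D}$ and $s_{ij}$ to accuracy $\eps_L$. Median-of-means plus a union bound over $|\mathcal T|$ costs $\widetilde\cO(\bar\alpha^2\eps_L^{-2}\log(m/\delta))$ samples. The factor $\bar\alpha^2$ comes from rescaling time by $T$: the curves have derivatives of order $\bar\alpha^j$, which are controlled through Lemma~\ref{lem:local-coefficient-norm} and $\alpha\le\bar\alpha$.

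\textbf{Step 2: calibration and inversion.} Form $\widehat b_i$ via \eqref{eq:three-marker-calibration}; this is valid by Proposition~\ref{prop:pair-marker}, since the markers are pairwise termwise separated. Subtracting gives $\Lambda^{\rm corr}$ with error at most $3\eps_L$ by Theorem~\ref{thm:internal-correction}. Compute $\chi^{[S]}$ through \eqref{eq:restricted-chi}. This is a sum over $4^{2|S|}$ terms with normalized trace kernel bounded by one, so the error is $\cO_k(\eps_L)$.

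Now run the descending recursion \eqref{eq:descending-recursion}. The key point is that Lemma~\ref{lem:fermionic-stability}'s $m^{k-1}$ loss is avoided because of geometry. Under $\mathfrak L^{\rm geo}_{k,r}$, every nonzero extension $\chi_{A\sqcup E,B\sqcup E}$ has support of diameter at most $r$ containing $R$, so $E$ ranges over $\cM(B_r(R))$. This is a set of size $\cO_{k,r,\Delta}(1)$. Lemma~\ref{lem:triangular-error-propagation} with bounded depth $k$ and bounded out-degree therefore gives entrywise error $\cO_{k,r,\Delta}(\eps_L)$. The learner can restrict the recursion to geometrically allowed pairs and set all others to zero, since $\mathsf G$, $k$ and $r$ are known.

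Finally, \eqref{eq:G-from-chi} and \eqref{eq:h-from-chi} give $h_A$ and $G_{A,B}$ with the same error. Choosing $\eps_L=c_{k,r,\Delta}\eps_\chi$ yields \eqref{eq:geo-noaux-entrywise}. The restriction $\eps_\chi\le\bar\alpha$ absorbs the trivial regime.

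\textbf{Step 3: diamond norm.} I would run the semidefinite geometric consistency fit of Appendix~\ref{app:sdp-implementation}. It searches over local PSD blocks on the regions $e$ with $|e|\le k$ and $\diam\le r$, and minimizes the maximal entrywise deviation from the estimates. The true witness is feasible, so the fitted coefficients are within $2\eps_\chi$ of the truth entrywise.

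The difference $\widehat\cL-\cL$ is a sum of $\cO_{k,r,\Delta}(m)$ local parity-covariant terms, each with $\cO_k(1)$ coefficients. I would bound each term's diamond norm by its coefficient $\ell_1$ sum times a constant, using $\|\Gamma_A\cdot\Gamma_B\|_\diamond\le1$ and Lemma~\ref{lem:graded-diamond-amplification}. This gives $\|\widehat\cL-\cL\|_\diamond\le C m\eps_\chi$, and setting $\eps_\chi=\eps_\diamond/(Cm)$ gives \eqref{eq:geo-noaux-diamond}.

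\textbf{Main obstacle.} I expect the derivative-estimation step to be the hardest. It must achieve $\bar\alpha^2\eps^{-2}$ scaling with only polylogarithmic interpolation overhead, while the curves are full-system overlaps rather than local ones. Controlling the higher derivatives $\frac{d^j}{dt^j}M=2^{-m}\tr[J(\Gamma_C)\cL^j(J(\Gamma_D))]$ requires a locality/combinatorial argument in which only terms connected to the support contribute. That argument should give a bound of the form $(c_k\bar\alpha j)^j$, which the choice $T=(4k\bar\alpha)^{-1}$ is designed to make geometrically summable. I would invoke Proposition~\ref{prop:fermionic-derivative-poly} for this, verifying its hypotheses using Lemma~\ref{lem:graded-factor} to handle the odd marker factor.
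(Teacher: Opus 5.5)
Your proposal is correct and follows essentially the same route as the paper's proof. The shared steps are: pair-block frames with coloring and parallel marginalization, median-of-means with Proposition~\ref{prop:fermionic-derivative-poly} for the derivatives, triplet calibration with Theorem~\ref{thm:internal-correction}, the restricted transform \eqref{eq:restricted-chi}, the geometry-truncated recursion controlled by Lemma~\ref{lem:triangular-error-propagation}, and finally the geometric SDP fit with Proposition~\ref{prop:diamond} over $\cO_{k,r,\Delta}(m)$ coordinates. The differences are only in constants. For example, the paper's fit constrains real and imaginary parts separately and includes a numerical tolerance, which gives $3\eps_\chi+2\eps_{\rm SDP}$ rather than your $2\eps_\chi$ for the $G$ entries; this does not affect the stated bounds.
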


    \begin{proof}
        Choose an anchor mode in each nonempty support. All remaining modes lie in its bounded radius-$r$ ball, so the required query and coefficient lists have $\cO_{k,r,\Delta}(m)$ entries. Coefficients with support diameter greater than $r$ vanish and are set to zero. All error bounds below hold uniformly over the required regions and indices.

        Set $T=(4k\bar\alpha)^{-1}$ and choose a derivative accuracy $0<\eps_L\le\bar\alpha$. Proposition~\ref{prop:pair-block-local-frame} and Lemma~\ref{lem:parallel-block-marginalization}, with the coloring above, give unbiased estimators of the curves in \eqref{eq:block-internal-transfer} and \eqref{eq:block-pair-curves}, with second moments at most $K_{\rm col}\kappa_k=\cO_{k,r,\Delta}(1)$. Use the common time grid and degree $\ell$ of Proposition~\ref{prop:fermionic-derivative-poly}. Median-of-means estimation \cite[Theorem~4.1]{DevroyeEtAl2016} and a union bound over all curves and times give curve-value error $\eps_M=T\eps_L/(4C_0\ell^2)$ with probability at least $1-\delta$, using
        \begin{equation*}
            N=\widetilde{\cO}_{k,r,\Delta}\!\left(\bar\alpha^2\eps_L^{-2}\log\frac m\delta\right)
        \end{equation*}
        samples. By \eqref{eq:fermionic-curve-precision}, the proposition converts this curve-value error into derivative errors
        \begin{equation*}
            |\widehat\Lambda^{S|i}_{C,D}-\Lambda^{S|i}_{C,D}|\le\eps_L,
            \qquad |\widehat s_{ij}-s_{ij}|\le\eps_L.
        \end{equation*}

        The triplet formulas \eqref{eq:three-marker-calibration} give $|\widehat b_i-b_i|\le3\eps_L/2$. Subtracting these estimates as in Theorem~\ref{thm:internal-correction} therefore gives $|\widehat\Lambda^{\rm corr}_{C,D}-\Lambda_{C,D}|\le5\eps_L/2$. Insert the corrected estimates into \eqref{eq:restricted-chi}. Since $d_R\le2^k$, there are $d_R^4$ summands, and each trace has modulus at most $d_R$, the resulting restricted coefficients satisfy
        \begin{equation*}
            |\widetilde\chi^{[R]}_{A,B}-\chi^{[R]}_{A,B}|
            \le\tfrac52\,4^k\eps_L.
        \end{equation*}

        Recover $\widetilde\chi$ by the descending recursion \eqref{eq:descending-recursion}, using only extensions of diameter at most $r$. Each such extension enlarges $S=\fsupp(A,B)$ within $B_r(S)$, giving depth at most $k$ and bounded out-degree. Lemma~\ref{lem:triangular-error-propagation} therefore gives $|\widetilde\chi_{A,B}-\chi_{A,B}|\le C_{k,r,\Delta}\eps_L$. Extracting $\widetilde h$ and $\widetilde G$ by \eqref{eq:G-from-chi} and \eqref{eq:h-from-chi} does not increase the error. Choosing $\eps_L=\eps_\chi/C_{k,r,\Delta}$, with $C_{k,r,\Delta}\ge1$, proves \eqref{eq:geo-noaux-entrywise}.

        For physical reconstruction, first obtain raw coefficients $\widetilde h,\widetilde G$ to entrywise error $\eta$ by the preceding procedure. Fit a physical generator to these estimates using the semidefinite program \eqref{eq:sdp-projection} in Appendix~\ref{app:sdp-implementation}, using only Hamiltonian coordinates and local PSD blocks of diameter at most $r$. The true decomposition is feasible with objective at most $\eta$. Proposition~\ref{prop:fermionic-sdp-runtime} returns a feasible fit with objective at most $\eta+\eps_{\rm SDP}$, and \eqref{eq:sdp-explicit-error} gives
        \begin{equation*}
            \|\widehat h-h\|_\infty\le2\eta+\eps_{\rm SDP},
            \qquad
            \|\widehat G-G\|_\infty\le3\eta+2\eps_{\rm SDP}.
        \end{equation*}
        The PSD and support constraints ensure $\widehat\cL\in\mathfrak L^{\rm geo}_{k,r}$. Applying \eqref{eq:diamond-sum} from Proposition~\ref{prop:diamond} to the $\cO_{k,r,\Delta}(m)$ allowed coefficients yields
        \begin{equation*}
            \|\widehat\cL-\cL\|_\diamond
            \le C_{k,r,\Delta}m(\eta+\eps_{\rm SDP}).
        \end{equation*}
        Choose $\eta=c_{k,r,\Delta}\eps_\diamond/m$ and $\eps_{\rm SDP}\le\min\{1,c_{k,r,\Delta}\eps_\diamond/m\}$ with $c_{k,r,\Delta}>0$ sufficiently small. The diamond error is then at most $\eps_\diamond$, and substituting $\eta$ for the coefficient accuracy in \eqref{eq:geo-noaux-entrywise} proves \eqref{eq:geo-noaux-diamond}, on the same event of probability at least $1-\delta$.
    \end{proof}

    Thresholding the estimates identifies which Hamiltonian and dissipative coefficients are nonzero, assuming a sufficient gap between zero and nonzero magnitudes. For the coefficient vector $x=(h,G)$, write $\Omega_a:=\{v:|x_v|>a\}$. Uniform accuracy $\gamma$ and threshold $\lambda\ge\gamma$ give
    \begin{equation}
        \Omega_{\lambda+\gamma}
        \subseteq\{v:|\widehat x_v|>\lambda\}
        \subseteq\Omega_{\lambda-\gamma}.
        \label{eq:threshold-support-sandwich}
    \end{equation}
    This follows from the reverse triangle inequality and requires only $\cO_{k,r,\Delta}(m)$ additional operations. If every coefficient is zero or has modulus greater than $\lambda+\gamma$, the recovered support is exact. The identified structure is that of the canonical coefficients, not a particular local jump decomposition.

    \begin{example}[Dissipative Fermi--Hubbard models]\label{ex:dissipative-fermi-hubbard}
        Let a bounded-degree lattice carry two spin modes per site. Recall
        \begin{equation}
            H_{\rm FH}
            =-\sum_{\langle x,y\rangle,\sigma}
            \left(J_{xy,\sigma}a_{x,\sigma}^\dagger a_{y,\sigma}+\mathrm{h.c.}\right)
            +\sum_x U_xn_{x,\uparrow}n_{x,\downarrow}
            +\sum_{x,\sigma}\mu_{x,\sigma}n_{x,\sigma},
            \label{eq:local-fermi-hubbard-hamiltonian}
        \end{equation}
        where each edge is counted once and $J_{yx,\sigma}:=\overline{J_{xy,\sigma}}$. Allow the loss, injection, dephasing, pair-loss, and nearest-neighbor incoherent-hopping jumps in \eqref{eq:intro-fermi-hubbard-jumps}. Write $L_\ell=\sqrt{\gamma_\ell}F_\ell$, with rates $\gamma_\ell\ge0$ and mode supports $e_\ell$. Remove an irrelevant scalar from $H_{\rm FH}$ to make it traceless. Every term acts on at most two modes, including the quartic onsite interaction.

        Each lattice site $x$ contributes two vertices to the mode graph, $(x,\uparrow)$ and $(x,\downarrow)$. The mode graph consists of two copies of the site lattice, one per spin, with an additional edge connecting the two modes at each site. It has bounded degree, and the generator has $k=2$ and $r=1$. Writing $u=(x,\sigma)$, a sufficient strength bound is
        \begin{equation}
            \bar\alpha\ge
            2\max_{x,\sigma}\left(
            |\mu_{x,\sigma}|+|U_x|
            +\sum_{y:\,y\sim x}|J_{xy,\sigma}|
            +\sum_{\ell:\,u\in e_\ell}\gamma_\ell
            \right).
            \label{eq:fermi-hubbard-strength}
        \end{equation}
        Indeed, a Hamiltonian commutator has induced Hilbert--Schmidt norm at most $2\|H\|$, each hopping term has norm $|J_{xy,\sigma}|$, and each $F_\ell$ has norm at most one. Theorem~\ref{thm:geo-noaux} therefore applies whenever the mode graph satisfies its marker and pair-block access assumptions.
    \end{example}

    \begin{corollary}[Matching geometric diamond lower bound]\label{cor:geometric-diamond-lower}
        Let $\bar\alpha,t_{\max}>0$ and $0<\eps_\diamond\le\bar\alpha m/64$. There is a family of even, number-conserving, geometrically one-local Hamiltonian generators of weighted strength at most $\bar\alpha$ for which diamond recovery to error $\eps_\diamond$, with success probability at least $2/3$ for every member, requires
        \begin{equation*}
            N\ge\frac{m^2}{4096t_{\max}^2\eps_\diamond^2}
        \end{equation*}
        samples and worst-case total evolution time at least $m^2/(4096t_{\max}\eps_\diamond^2)$. This holds even for adaptive protocols using arbitrary fresh system--reference inputs and joint measurements, with evolution time at most $t_{\max}$ per sample.
    \end{corollary}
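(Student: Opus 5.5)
We prove the bound by reduction from a packing (Fano/Assouad) argument over a hypercube of one-local chemical-potential Hamiltonians. Set $H_\sigma=\sum_{j=1}^m \sigma_j \beta\, (n_j-\tfrac12\Id)$ with $\sigma\in\{\pm1\}^m$. Each $\cL_\sigma=-i[H_\sigma,\cdot]$ is even, number conserving, and geometrically one-local. Its weighted strength is at most $2\beta\cdot\tfrac12\cdot 2=2\beta$ after using $\|[n_j-\tfrac12,\cdot]\|_{2\to2}\le1$, so we take $\beta$ of order $\bar\alpha$ times a small constant. The amplitude will instead be $\theta:=c\,\eps_\diamond/m$, and we rescale $H_\sigma=\theta\sum_j\sigma_j(n_j-\tfrac12)$; the hypothesis $\eps_\diamond\le\bar\alpha m/64$ guarantees $\theta\le\bar\alpha/2$ for the right $c$.

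\textbf{Step 1: separation.} Lower bound $\|\cL_\sigma-\cL_{\sigma'}\|_\diamond$ by a constant times $\theta\,d_{\rm H}(\sigma,\sigma')$. The generators commute and are sums over disjoint modes. Pick the product input $\bigotimes_j|+\rangle_j$ in a Jordan--Wigner picture; this is not even, so we instead use a reference mode per system mode and the even entangled state $(|00\rangle+|11\rangle)/\sqrt2$ on $(j,j')$, giving a pair coherence. On this input each flipped site contributes an orthogonal-direction derivative of size $2\theta$. A direct computation of the trace norm of $\sum_{j\in\Delta}\pm 2\theta\,X_j$-type commutators, which are tensor products of single-pair traceless Hermitian pieces, gives $\ge \theta|\Delta|$ via pinching/duality with a product observable. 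This is the step I expect to need most care: the diamond norm of a sum over $|\Delta|$ sites must scale linearly, not as $\sqrt{|\Delta|}$. The plan is to test against the product unitary observable $\bigotimes_j O_j$ with $O_j$ chosen so each local term contributes $+\theta$. By Gilbert--Varshamov, a subset $\mathcal S$ of size $2^{m/8}$ with pairwise distance $\ge m/4$ then forces accurate ($\eps_\diamond$) recovery to identify $\sigma$ once $\theta m/4>2\eps_\diamond$, i.e.\ $c=16$ with room.

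\textbf{Step 2: information per sample.} Each sample applies $(e^{t\cL_\sigma}\otimes\mathrm{id})$ with $t\le t_{\max}$ to an arbitrary, possibly adaptively chosen, fresh input, followed by any measurement. We bound the KL divergence or $\chi^2$ information between outputs for $\sigma,\sigma'$ by the squared Bures distance of the channel outputs. For unitaries $e^{-itH_\sigma}$ this is at most $t^2\|H_\sigma-H_{\sigma'}\|^2_{\rm op}$-type per coordinate. More sharply, using an Assouad-style per-coordinate argument: flipping one $\sigma_j$ changes the output state's fidelity by $1-\cos^2(t\theta)$, so the per-sample squared Hellinger distance is at most $t_{\max}^2\theta^2$. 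Adaptive chain rule gives total divergence $\le N t_{\max}^2\theta^2$ per coordinate.

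\textbf{Step 3: conclude.} By Assouad's lemma, success probability $2/3$ on a $\Omega(m)$-coordinate cube requires the per-coordinate divergence to be $\Omega(1)$, so $N t_{\max}^2\theta^2\gtrsim 1$. With $\theta=16\eps_\diamond/m$ this gives $N\ge m^2/(4096\,t_{\max}^2\eps_\diamond^2)$ after tracking constants. For total time, replace $t_{\max}^2$ by $t_i^2\le t_{\max}t_i$ inside the sum $\sum_i t_i^2\theta^2\gtrsim1$. This yields $\sum_i t_i\ge m^2/(4096\,t_{\max}\eps_\diamond^2)$.
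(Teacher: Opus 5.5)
Your overall architecture is the paper's. The family is a hypercube of independent single-mode occupation terms; since $\theta(n_j-\tfrac12\Id)=-\tfrac\theta2 Z_j$, your $\theta=16\eps_\diamond/m$ is exactly the paper's $\zeta=8\eps_\diamond/m$. As in the paper, you project the estimate onto the nearest family member, apply Assouad's per-coordinate testing bound, and use an adaptive root-fidelity bound $\cos(\theta t_j)$ per sample. Steps 2 and 3 are essentially fine. The Gilbert--Varshamov/Fano detour is unnecessary, since Assouad on the full cube suffices. To land exactly on $4096$ you need $1-\cos x\le x^2/2$ rather than your factor-two-looser Hellinger bound.

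The gap is Step 1, and your proposed resolution cannot work. For any pure input that is a product across modes, including your product of Bell pairs with reference modes, one has $\|[A,\rho]\|_1=2\sqrt{\langle A^2\rangle_\rho-\langle A\rangle_\rho^2}$. For $A=H_\sigma-H_{\sigma'}=\theta\sum_{j\in\Delta}\pm Z_j$ this variance is additive over the uncorrelated factors. Hence $\|[A,\rho]\|_1\le2\theta\sqrt{|\Delta|}$, whichever observable (product or not) you test against. Mixed product inputs do no better by convexity of the trace norm. Your ``orthogonal-direction'' contributions therefore add in quadrature, not linearly.

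This matters. With only $\theta\sqrt{d_{\rm H}}$ separation, diamond accuracy $\eps_\diamond$ pins down $\sigma$ only if $\theta\gtrsim\eps_\diamond/\sqrt m$, and the argument then yields merely $N\gtrsim m/(t_{\max}^2\eps_\diamond^2)$.

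The missing ingredient is the exact identity $\|-i[A,\cdot]\|_\diamond=\lambda_{\max}(A)-\lambda_{\min}(A)$. The upper bound follows by centering $A$. The lower bound is attained by the cat input $(|\nu\rangle+|\bar\nu\rangle)/\sqrt2$, where the occupation strings $\nu,\bar\nu$ agree off $\Delta$ and respectively maximize and minimize $A$ on $\Delta$. Because the $Z_j$ commute and have independently selectable eigenvalues, this gives $\|\cL_\sigma-\cL_{\sigma'}\|_\diamond=2\theta\,d_{\rm H}(\sigma,\sigma')$ exactly. Success then forces $d_{\rm H}(V,\widehat V)\le\eps_\diamond/\theta=m/16$, and your Steps 2--3 go through.

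Your worry about evenness of the test input is also misplaced. The diamond norm is a property of the maps, not of physically preparable probes, and the lower bound permits arbitrary inputs anyway.
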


    \begin{proof}[Proof sketch]
        For the singleton Hamiltonian family constructed in Appendix~\ref{sec:lower-bounds}, Lemma~\ref{lem:adaptive-exposure} and the testing argument leading to \eqref{eq:geometric-quadratic-exposure-lower} give
        \begin{equation*}
            \sup_{\rm transcripts}\sum_j t_j^2
            \ge\frac{m^2}{4096\eps_\diamond^2}.
        \end{equation*}
        Since $\sum_jt_j^2\le Nt_{\max}^2$ and $\sum_jt_j^2\le t_{\max}\sum_jt_j$ on every transcript, the sample and worst-case total-time bounds follow.
    \end{proof}

    For $t_{\max}=(4k\bar\alpha)^{-1}$, the bounds are $\Omega_k(\bar\alpha^2m^2\eps_\diamond^{-2})$ samples and $\Omega_k(\bar\alpha m^2\eps_\diamond^{-2})$ total evolution time, matching Theorem~\ref{thm:geo-noaux} up to logarithmic and fixed geometric factors. Proposition~\ref{prop:fermionic-coefficient-lower} also shows that inverse-square precision dependence is necessary for estimating a single Hamiltonian coefficient at fixed maximum evolution time.

\subsection{Beyond finite range: exponentially decaying interactions}\label{sec:exponentially-decaying}
    Without a strict finite-range cutoff, spatial separation need not remove target--marker cross terms. The same internal-marker protocol still applies when their cumulative strength decays exponentially. Only the reconstruction is truncated. Every experiment uses the full unknown evolution.

    \begin{definition}[Exponentially decaying $k$-body decomposition]\label{def:exponential-witness}
        A decomposition as in Definition~\ref{def:localcone}, with $\ell_e:=\|\cL_e^\dagger\|_{2\to2}$, satisfies an exponential-tail bound with parameters $\bar\alpha,\mu>0$ if
        \begin{equation}
            a(R):=\max_{u\in[m]}
            \sum_{\substack{e\ni u\\\diam_{\mathsf G}(e)\ge R}}\ell_e
            \le\bar\alpha e^{-\mu R}
            \qquad(R\ge0).
            \label{eq:exponential-local-tail}
        \end{equation}
        The learner receives $\bar\alpha,\mu$, but neither the decomposition nor the nonzero coefficient locations. Since $a(0)=\alpha(\{\cL_e\})$, this bound also controls the interaction strength.
    \end{definition}

    Truncating the inversion at diameter $R$ and separating markers by $L$ introduces biases controlled by $a(R)$ and $a(L)$. For general mode graphs, Theorem~\ref{thm:exponentially-decaying-learning} in Appendix~\ref{app:exponential-proofs} bounds the reconstruction error and sample complexity in terms of the interaction tail and graph geometry. On periodic lattices, this yields the following guarantee with only polylogarithmic overhead.

    \begin{corollary}[Exponential tails on a periodic lattice]\label{cor:exponential-lattice}
        Let $\cL$ admit a decomposition as in Definition~\ref{def:exponential-witness} on the known nearest-neighbor mode graph $\mathsf G=(\mathbb Z/M\mathbb Z)^D$, with $m=M^D$. Fix $\delta\in(0,1)$, $0<\eps_\chi\le\bar\alpha$, and $0<\eps_\diamond\le m\bar\alpha$. There is a constant $C=C(k,D,\mu)>0$ (see Appendix~\ref{app:exponential-proofs}) such that the following holds. Set
        \begin{equation*}
            R_\chi:=\left\lceil C\log\!\left(e+\frac{\bar\alpha}{\eps_\chi}\right)\right\rceil,
            \qquad
            R_\diamond:=\left\lceil C\log\!\left(e+\frac{m\bar\alpha}{\eps_\diamond}\right)\right\rceil.
        \end{equation*}
        Assume $4\max\{R_\chi,R_\diamond\}+3<M/2$ and the pair-block access of Proposition~\ref{prop:pair-block-local-frame} on the marker blocks constructed in Appendix~\ref{app:exponential-proofs}. The protocol estimates all canonical coefficients to entrywise error $\eps_\chi$, or returns a valid generator with interaction range at most $R_\diamond$ and diamond error at most $\eps_\diamond$, using respectively
        \begin{align}
            N_{\rm coeff}
            &=\widetilde{\cO}_{k,D,\mu}\!\left(
                \bar\alpha^2\eps_\chi^{-2}\log\frac m\delta
            \right),
            \label{eq:exponential-lattice-entrywise}\\
            N_\diamond
            &=\widetilde{\cO}_{k,D,\mu}\!\left(
                \bar\alpha^2m^2\eps_\diamond^{-2}\log\frac m\delta
            \right)
            \label{eq:exponential-lattice-diamond}
        \end{align}
        samples. Each guarantee holds with probability at least $1-\delta$. The protocols use only system modes and even preparations and measurements, are nonadaptive, and apply no control during evolution. All evolution times are at most $(4k\bar\alpha)^{-1}$. Here $\widetilde{\cO}$ also hides polylogarithmic factors in $\bar\alpha/\eps_\chi$ and $m\bar\alpha/\eps_\diamond$, respectively.
    \end{corollary}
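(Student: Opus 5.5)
The plan is to reduce Corollary~\ref{cor:exponential-lattice} to the general exponential-tail guarantee, Theorem~\ref{thm:exponentially-decaying-learning} of Appendix~\ref{app:exponential-proofs}. That theorem bounds the reconstruction error by an inversion-truncation bias controlled by $a(R)$, a marker-separation bias controlled by $a(L)$, and a statistical term. On the torus $(\mathbb Z/M\mathbb Z)^D$ I would specialize these quantities and check that all geometric constants grow only polynomially in $R$.

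\textbf{Step 1: geometry.} I would fix the truncation diameter $R$ and the marker separation $L=R$. The markers are placed along one coordinate axis at distances $R+j(R+1)$, $j=1,2,3$, from an anchor mode, exactly as in the cubic-lattice construction after Assumption~\ref{ass:marker-packing}. The hypothesis $4R+3<M/2$ ensures these markers do not wrap around, so every mode in $B_{4R+3}(S)$ has a unique short geodesic and the three separation conditions hold. The lattice geometry then gives three polynomial bounds:
\begin{itemize}
\item the number of supports of diameter at most $R$ anchored at a mode is at most $\binom{(2R+1)^D}{k}=\cO_k(R^{Dk})$;
\item the out-degree and depth of the descending recursion \eqref{eq:descending-recursion} restricted to diameter $R$ are $\cO_k(R^{Dk})$ and $k$, so Lemma~\ref{lem:triangular-error-propagation} amplifies errors by $\mathrm{poly}_k(R^D)$;
\item the coloring number of the blocks $W(S)$ is $K_{\rm col}=\cO_{k,D}(R^{D})$, since each block has radius $\cO(R)$.
\end{itemize}

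\textbf{Step 2: biases.} Terms of diameter at least $R$ spoil two things: the calibration identities (Proposition~\ref{prop:pair-marker}, Theorem~\ref{thm:internal-correction}) and the vanishing of extensions beyond diameter $R$ in \eqref{eq:diagonal-extension}. Each spoiled contribution is bounded by the local coefficient norm of Lemma~\ref{lem:local-coefficient-norm}, giving $c_k$ times a tail sum $a(R)$. The triangular recursion then amplifies this, so the total bias is $\mathrm{poly}_k(R^D)\,\bar\alpha e^{-\mu R}$.

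\textbf{Step 3: choice of $R$.} Choosing $R=R_\chi$ with $C$ large, depending on $k,D,\mu$, makes this bias at most $\eps_\chi/2$. The reason is that $e^{-\mu R}$ beats $\mathrm{poly}(R)$ once $R\ge C\log(e+\bar\alpha/\eps_\chi)$. The remaining $\eps_\chi/2$ is the statistical budget.

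\textbf{Step 4: statistics.} The statistical part follows the proof of Theorem~\ref{thm:geo-noaux} verbatim, now with $R$-dependent constants:
\begin{itemize}
\item second moments $K_{\rm col}\kappa_k$;
\item Fierz amplification $4^k$, independent of $R$;
\item recursion amplification $\mathrm{poly}(R^D)$;
\item a union bound over $\cO(m\,\mathrm{poly}(R))$ queries.
\end{itemize}
Since $R=\cO(\log(\bar\alpha/\eps_\chi))$, these constants are polylogarithmic, which gives \eqref{eq:exponential-lattice-entrywise}.

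\textbf{Step 5: diamond recovery.} I would take $R=R_\diamond$ and require entrywise accuracy $\eta=c\,\eps_\diamond/(m\,\mathrm{poly}(R))$. The error then has two parts:
\begin{itemize}
\item the SDP fit restricted to diameter-$R$ blocks (Appendix~\ref{app:sdp-implementation}), combined with \eqref{eq:diamond-sum}, bounds the fitted part by $m\,\mathrm{poly}(R)\eta$;
\item the discarded tail of the true generator contributes at most $m\,a(R)$ in diamond norm, by summing $\|\cL_e\|_\diamond$ over long-range terms.
\end{itemize}
The choice of $R_\diamond$ makes $m\,a(R)\le\eps_\diamond/2$. A subtlety is feasibility of the SDP: I would use the truncated true decomposition as the comparison point, so the objective is at most $\eta$ plus the bias. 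Since the bias is controlled as in Steps~2--3, this gives \eqref{eq:exponential-lattice-diamond}.

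\textbf{Main obstacle.} I expect the hardest step to be the rigorous bias bound in Step~2, specifically the calibration error from long-range terms that simultaneously meet two markers or a marker and the target. These terms break the identity $s_{ij}=b_i+b_j$, and their error enters both the $b_i$ estimates and the corrected data. I would bound each such term's contribution to $s_{ij}$ and to $\Lambda^{R|i}$ by $4^k\ell_e$, sum over terms of diameter at least $L$ containing a marker mode to get $\cO_k(a(L))$, and propagate through the triplet formulas with a factor $3/2$.
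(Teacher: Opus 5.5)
Your proposal is correct and follows the paper's proof. Like the paper, you set $L=R$, place the three markers along $e_1$ at distances $R+i(R+1)$ from an anchor to verify Assumption~\ref{ass:scale-dependent-markers}, and use $\mathsf V(R)\le(2R+1)^D$ so that the cutoff conditions of Theorem~\ref{thm:exponentially-decaying-learning} hold at the logarithmic radii with only polylogarithmic overhead. One harmless slip: the coloring number is $\cO_k(\mathsf V(R)^{k-1})=\cO_k(R^{D(k-1)})$, not $\cO(R^D)$, because all blocks whose supports contain a common mode pairwise intersect; this bound is still polynomial in $R$, so your conclusion is unaffected.
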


    \begin{proof}[Proof sketch]
        Apply Theorem~\ref{thm:exponentially-decaying-learning} with $L=R$ and $R\in\{R_\chi,R_\diamond\}$. Under the finite-size condition, Appendix~\ref{app:exponential-proofs} constructs the required markers. The lattice bound $|B_R(\{u\})|\le(2R+1)^D$ makes the coloring and inversion factors polynomial in $R$. The stated logarithmic cutoffs satisfy \eqref{eq:exponential-entrywise-condition} and \eqref{eq:exponential-diamond-condition} for sufficiently large $C$, since exponential decay dominates these polynomial factors. Substitution into the theorem's sample bounds gives the claimed polylogarithmic overhead.
    \end{proof}

    At cutoff $R$, the required pair operations span distances $\cO(R)$, and their routing cost is not included in the sample count. This polylogarithmic overhead uses polynomial growth of graph balls. Bounded degree alone does not ensure it.

\section{All-to-all structure learning with ancillary modes}\label{sec:ancilla-structure}

    We next learn unknown all-to-all interactions using one idle ancillary mode per system mode. Approximate coefficient degree replaces a supplied graph. Building on the local inversion framework of \cite{MobusEtAl2026}, we threshold signed fermionic marginals before inversion to obtain local coefficient error guarantees as in qubit structure learning \cite{LewisTangWright2026}. The ancillary measurement construction in Appendix~\ref{sec:modewise-ancilla} supplies common even data, with an additional sampling cost from pairing compatibility.

\subsection{Local coefficient error and approximate degree}

    Write $x=(h,G)$ for the physical coefficient array. Index its entries by $v$ and define their mode supports by
    \begin{equation*}
        S_v:=
        \begin{cases}
            \supp_{\rm f}(A), & \text{if }v\text{ indexes }h_A,\\
            \fsupp(A,B), & \text{if }v\text{ indexes }G_{A,B}.
        \end{cases}
    \end{equation*}
    Only parity-compatible coordinates with nonempty support of size at most $k$ are included. Define
    \begin{equation}
        \|x\|_{\mathrm{loc},1}
        :=\max_{j\in[m]}\sum_{v:j\in S_v}|x_v|,
        \qquad
        \deg(x)
        :=\max_{j\in[m]}\#\{v:j\in S_v,\ x_v\neq0\}.
        \label{eq:structure-local-norm}
    \end{equation}
    The same definitions apply to other arrays with their indicated nonempty supports. We impose the \emph{approximate coefficient-degree condition}, adapted from \cite[Definition~2.5]{LewisTangWright2026} to Majorana coefficients and fermionic mode supports: there exists an unknown coefficient array $x^{\rm big}$ on the same index set such that
    \begin{equation}
        \deg(x^{\rm big})\le\mathsf d,
        \qquad
        \|x-x^{\rm big}\|_{\mathrm{loc},1}\le b.
        \label{eq:structure-approx-degree}
    \end{equation}
    Its locations are not supplied, and it need not generate a positive semigroup. Exact degree at most $\mathsf d$ is the case $b=0$. This assumption concerns the identifiable coefficients, rather than a particular decomposition into local jumps.

\subsection{Thresholding before signed inversion}

    For each $(A,B)\in\cI_k^\chi$, set $R=\fsupp(A,B)$ and write
    \begin{equation*}
        y_{A,B}:=\chi^{[R]}_{A,B}.
    \end{equation*}
    The restriction region thus depends on $(A,B)$. Under $k$-locality, \eqref{eq:diagonal-extension} defines a signed linear map $\mathsf T$ on arrays indexed by $\cI_k^\chi$, with $y=\mathsf T\chi$.

    \begin{lemma}[Signed inversion with local coefficient control]\label{lem:structure-signed-inverse}
        The inverse of $\mathsf T$ on $\cI_k^\chi$ is
        \begin{equation}
            \chi_{A,B}
            =\sum_{\substack{E\subseteq\cM(R^c)\\ |R\cup\supp_{\rm f}(E)|\le k}}
            (-1)^{|\supp_{\rm f}(E)|}
            s_R(A,B;E)\,y_{A\sqcup E,B\sqcup E}.
            \label{eq:structure-signed-inverse}
        \end{equation}
        Both $\mathsf T$ and $\mathsf T^{-1}$ have induced local coefficient one-norm at most $2^k$. If $\|\widehat y-y\|_\infty\le\nu$ for $\nu>0$, set
        \begin{equation}
            \widetilde y_{A,B}
            :=\widehat y_{A,B}
            \mathbf1_{\{|\widehat y_{A,B}|>2\nu\}}.
            \label{eq:structure-marginal-threshold}
        \end{equation}
        Extract $\widehat x$ from $\mathsf T^{-1}\widetilde y$ using \eqref{eq:G-from-chi} and \eqref{eq:h-from-chi}. Under \eqref{eq:structure-approx-degree},
        \begin{equation}
            \|\widehat x-x\|_{\mathrm{loc},1}
            \le9\cdot4^k\mathsf d\nu+2\cdot4^k b.
            \label{eq:structure-threshold-error}
        \end{equation}
        The output has degree $\cO_k(\mathsf d+b/\nu)$. Taking the real part of $\widehat h$ and the Hermitian part of $\widehat G$ preserves the error bound and changes this degree bound by at most a constant factor.
    \end{lemma}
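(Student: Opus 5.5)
The proof has three parts: the Möbius inversion formula, the local norm bounds on $\mathsf T$ and $\mathsf T^{-1}$, and a thresholding argument in the style of \cite{LewisTangWright2026}.

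\textbf{Inversion formula.} I would view $\mathsf T$ as a sum over the poset of diagonal extensions. By Theorem~\ref{thm:diagonal-extension}, $y_{A,B}=\sum_E s_R(A,B;E)\chi_{A\sqcup E,B\sqcup E}$.

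The first step is a cocycle identity for the phases. For disjoint $E,F\subseteq\cM(R^c)$, I would show
\begin{equation*}
s_R(A,B;E)\,s_{R\cup\supp_{\rm f}(E)}(A\sqcup E,B\sqcup E;F)=s_R(A,B;E\sqcup F).
\end{equation*}
This should follow from \eqref{eq:union-product} by writing $\Gamma_{A\sqcup E\sqcup F}$ in two ways. The relation $\omega(A\sqcup E,F)\omega(E,F)^{-1}\cdots$ then reduces to $\omega(A,F)\omega(E,F)$. The factor $\omega(E,F)^2=(-1)^{|E||F|}$ has to cancel against the parity signs $(-1)^{\pi(F)\pi(E)}$ hidden in $(-1)^{\pi(F)\pi(B\sqcup E)}$.

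Next I would substitute \eqref{eq:structure-signed-inverse} into $\mathsf T$. Grouping by the total extension $G=E\sqcup F$, the cocycle identity turns the composite into $\sum_G s_R(A,B;G)\chi_{A\sqcup G,B\sqcup G}\,c(G)$. Here
\begin{equation*}
c(G)=\sum_{F'\subseteq\supp_{\rm f}(G)}(-1)^{|F'|},
\end{equation*}
with the sum over the mode subsets carrying the first-stage extension. This is the Möbius function of the Boolean lattice of mode supports. It equals $1$ for $G=\varnothing$ and vanishes otherwise. The decomposition of $G$ over modes is unique, since each mode's labels go entirely to $E$ or to $F$.

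\textbf{Local norm bounds.} For a fixed $(A',B')$ I would bound the column sum over all $(A,B)$ with $j\in\fsupp(A,B)$ that receive a contribution from $(A',B')$. Such $(A,B)$ are obtained by deleting a diagonal block $E$ on a subset of modes of $\fsupp(A',B')$, which gives at most $2^k$ choices. The support only shrinks, so $j\in S$ forces $j\in S'$. Summing then gives $\|\mathsf T z\|_{\mathrm{loc},1}\le2^k\|z\|_{\mathrm{loc},1}$, and the same argument applies to $\mathsf T^{-1}$. The passage between $x$ and $\chi$ via \eqref{eq:G-from-chi} and \eqref{eq:h-from-chi} costs at most a factor $2$. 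The Hamiltonian entries appear as $(A,\varnothing)$ and $(\varnothing,A)$.

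\textbf{Thresholding, which I expect to be the main obstacle.} First, translate the degree condition to $y$. Set $y^{\rm big}=\mathsf T\chi^{\rm big}$, which has degree at most $2^k\cdot\cO(\mathsf d)$. Then $\|y-y^{\rm big}\|_{\mathrm{loc},1}\le2\cdot2^kb$.

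Next, bound $\|\widetilde y-y\|_{\mathrm{loc},1}$ entrywise in three cases, each summed over entries touching $j$:
\begin{enumerate}
\item[(i)] If an entry is kept, $|\widehat y-y|\le\nu$. Kept entries have $|y|>\nu$, so they either lie in $\supp y^{\rm big}$ or carry tail mass exceeding $\nu$. Their count is therefore at most $\deg(y^{\rm big})+\|y-y^{\rm big}\|_{\mathrm{loc},1}/\nu$.
\item[(ii)] If an entry is killed and lies in $\supp y^{\rm big}$, the error is $|y|\le3\nu$.
\item[(iii)] If an entry is killed and lies off $\supp y^{\rm big}$, the error is $|y|=|y-y^{\rm big}|$, and these are summed directly into the tail.
\end{enumerate}
The combined bound is roughly $4\nu\deg(y^{\rm big})+2\|y-y^{\rm big}\|_{\mathrm{loc},1}$. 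Applying $\mathsf T^{-1}$ and the extraction factor then yields a bound of the form $C4^k\mathsf d\nu+C'4^kb$. The stated constants $9$ and $2$ require careful bookkeeping, and that is the delicate point.

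\textbf{Degree and Hermitian part.} The degree bound follows from the same kept-entry count of at most $\cO_k(\mathsf d+b/\nu)$, multiplied by $2^k$ for $\mathsf T^{-1}$. Taking the real and Hermitian parts is a contraction toward the true real and Hermitian $x$. It at most doubles the support through the adjoint pairs $(B,A)$.
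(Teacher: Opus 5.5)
Your inversion argument and your local-norm bounds match the paper's. The cocycle identity follows from bimultiplicativity of $\omega$, with $\omega(E,F)^2$ cancelling the parity sign, and grouping by the total extension leaves the Boolean M\"obius sum. The column sums of $\mathsf T$ and $\mathsf T^{-1}$ are $\pm1$ entries over at most $2^k$ mode subsets. You say you substitute \eqref{eq:structure-signed-inverse} into $\mathsf T$, but your displayed result is the other composition, which is the one the paper computes; either works for a finite square matrix.

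The gap is in the thresholding step, and it is not cosmetic. Splitting entries into kept and killed charges each entry of $\supp y^{\rm big}$ up to $\nu+3\nu$ and charges the tail twice. At best this gives $\|\widetilde y-y\|_{\mathrm{loc},1}\le4\nu\deg(y^{\rm big})+2\|y-y^{\rm big}\|_{\mathrm{loc},1}$, hence $12\cdot4^k\mathsf d\nu+4\cdot4^kb$ after inversion. That does not prove \eqref{eq:structure-threshold-error}. With the choices $\nu=\eps/(18\cdot4^k\mathsf d)$ and $b\le\eps/(4\cdot4^k)$ in Theorem~\ref{thm:ancilla-structure}, it would give $5\eps/3$ instead of $\eps$.

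The missing idea is to split, for each mode, by whether $v\in\supp y^{\rm big}$ rather than by whether $v$ is kept.
\begin{itemize}
\item On the support, every entry has error at most $3\nu$. A kept entry has error at most $\nu$, and a discarded one has $|y_v|\le|\widehat y_v|+\nu\le3\nu$.
\item Off the support, $y^{\rm big}_v=0$. A discarded entry has error $|y_v|=|y_v-y^{\rm big}_v|$. A kept entry has $|y_v|>\nu$ because $|\widehat y_v|>2\nu$, so its error is at most $\nu<|y_v-y^{\rm big}_v|$.
\end{itemize}
Each off-support entry is therefore charged once to the tail, giving $\|\widetilde y-y\|_{\mathrm{loc},1}\le3\nu\deg(y^{\rm big})+\|y-y^{\rm big}\|_{\mathrm{loc},1}$. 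The count of kept entries is needed only for the degree claim.

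Two constants you left as $\cO(\cdot)$ must also be fixed.
\begin{itemize}
\item $\deg(y^{\rm big})\le3\cdot2^k\mathsf d$. Each $G_{A,B}$ populates $\chi_{A,B}$ and two identity-index entries through the anticommutator $-\frac12\{\Gamma_B\Gamma_A,\cdot\}$, each of modulus $\frac12|G_{A,B}|$. You mention only the Hamiltonian entries. This is also why the coefficient-to-$\chi$ map has local-norm factor $2$, giving $\|y-y^{\rm big}\|_{\mathrm{loc},1}\le2^{k+1}b$.
\item The final extraction via \eqref{eq:G-from-chi} and \eqref{eq:h-from-chi} is a local-norm contraction, since $|h_A|\le\frac12(|\chi_{A,\varnothing}|+|\chi_{\varnothing,A}|)$. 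The factor $2$ must not be applied again there.
\end{itemize}
With these, $2^k\bigl(9\cdot2^k\mathsf d\nu+2^{k+1}b\bigr)$ is exactly the stated bound.
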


    \begin{proof}
        We first prove the inverse formula \eqref{eq:structure-signed-inverse} by applying inclusion--exclusion to the diagonal-extension identity \eqref{eq:diagonal-extension} of Theorem~\ref{thm:diagonal-extension}. This requires a composition rule for the extension signs.

        For extensions $E,F$ on disjoint additional mode sets, associativity and the product rule \eqref{eq:union-product} give $\omega(A,E)\omega(A\sqcup E,F)=\omega(A,E\sqcup F)\omega(E,F)$. Applying this to $A$ and $B$ produces the factor $\omega(E,F)^2=(-1)^{\pi(E)\pi(F)}$ by \eqref{eq:omega}. In the extension-sign definition \eqref{eq:extension-phase}, this cancels the extra factor $(-1)^{\pi(F)\pi(E)}$ arising from $\pi(B\sqcup E)=\pi(B)+\pi(E)$ modulo two. Hence, with $R'=R\cup\supp_{\rm f}(E)$,
        \begin{equation*}
            s_R(A,B;E)s_{R'}(A\sqcup E,B\sqcup E;F)
            =s_R(A,B;E\sqcup F).
        \end{equation*}
        We now apply \eqref{eq:diagonal-extension} on $R'$ to obtain
        \begin{equation*}
            y_{A\sqcup E,B\sqcup E}
            =\sum_{\substack{F\subseteq\cM((R')^c)\\ |R'\cup\supp_{\rm f}(F)|\le k}}
            s_{R'}(A\sqcup E,B\sqcup E;F)\,
            \chi_{A\sqcup E\sqcup F,B\sqcup E\sqcup F}.
        \end{equation*}
        Substituting into the right-hand side of \eqref{eq:structure-signed-inverse}, using the sign identity above, and grouping by the final extension $H=E\sqcup F$ yields
        \begin{align*}
            &\sum_{\substack{E\subseteq\cM(R^c),\ F\subseteq\cM((R')^c)\\ |R'\cup\supp_{\rm f}(F)|\le k}}
            (-1)^{|\supp_{\rm f}(E)|}
            s_R(A,B;E)s_{R'}(A\sqcup E,B\sqcup E;F)\,
            \chi_{A\sqcup E\sqcup F,B\sqcup E\sqcup F}
            \\
            &\qquad\qquad\qquad=\sum_{\substack{H\subseteq\cM(R^c)\\ |R\cup\supp_{\rm f}(H)|\le k}}
            s_R(A,B;H)\,\chi_{A\sqcup H,B\sqcup H}
            \sum_{V\subseteq\supp_{\rm f}(H)}(-1)^{|V|}
            =\chi_{A,B}.
        \end{align*}
        Indeed, $E$ and $F$ occupy disjoint mode sets. For fixed $H$, choosing the intermediate mode set $V\subseteq\supp_{\rm f}(H)$ uniquely fixes $E=H\cap\cM(V)$ and $F=H\setminus E$. The final equality follows from inclusion--exclusion: the inner sum is zero unless $H=\varnothing$, when it equals one. This proves \eqref{eq:structure-signed-inverse}, whose exponent counts modes rather than Majorana labels.

        Next, we control error propagation through $\mathsf T$ and $\mathsf T^{-1}$. Since supports only shrink, maximum absolute column sums bound their induced local coefficient one-norms. Each input coefficient contributes to at most $2^k$ outputs by removing subsets of its occupied modes with matching Majorana labels in both indices. The corresponding entries of $\mathsf T$ and, by \eqref{eq:structure-signed-inverse}, $\mathsf T^{-1}$ are signs $\pm1$. Thus both induced norms and degree amplification are at most $2^k$, and the condition number, the product of these norms, is at most $4^k$.

        To relate these bounds to the physical coefficients, let $\mathsf F$ map $x=(h,G)$ to the Majorana-superoperator coefficients indexed by $\cI_k^\chi$, and let $\Psi$ extract $(h,G)$ by \eqref{eq:G-from-chi} and \eqref{eq:h-from-chi}. In \eqref{eq:local-gksl}, each Hamiltonian coordinate contributes two entries of modulus $|h_A|$, and each dissipative coordinate contributes at most three entries of total modulus at most $2|G_{A,B}|$. These contributions do not enlarge supports, and extraction does not increase degree. Hence $\Psi\mathsf F=\mathrm{id}$ and
        \begin{equation}
            \|\mathsf F x\|_{\mathrm{loc},1}\le2\|x\|_{\mathrm{loc},1},
            \qquad
            \deg(\mathsf F x)\le3\deg(x),
            \qquad
            \|\Psi z\|_{\mathrm{loc},1}\le\|z\|_{\mathrm{loc},1},
            \label{eq:structure-coefficient-maps}
        \end{equation}
        We now transfer the sparse approximation in \eqref{eq:structure-approx-degree} to the restricted coefficients. Set $y^{\rm big}:=\mathsf T\mathsf F x^{\rm big}$. Since $y=\mathsf T\mathsf F x$, the preceding bounds give
        \begin{equation*}
            \deg(y^{\rm big})\le3\cdot2^k\mathsf d,
            \qquad
            \|y-y^{\rm big}\|_{\mathrm{loc},1}\le2^{k+1}b.
        \end{equation*}
        We next bound the error of the thresholded estimates $\widetilde y$ defined in \eqref{eq:structure-marginal-threshold}. On the support of $y^{\rm big}$, an estimate satisfying $|\widehat y_v|>2\nu$ has error at most $\nu$, while a discarded one has error $|y_v|\le|\widehat y_v|+\nu\le3\nu$. Outside that support, $y_v^{\rm big}=0$. A discarded entry has error $|y_v|$, while an entry with $|\widehat y_v|>2\nu$ has error at most $\nu<|y_v|$. Summing per mode yields
        \begin{equation*}
            \|\widetilde y-y\|_{\mathrm{loc},1}
            \le3\nu\deg(y^{\rm big})+\|y-y^{\rm big}\|_{\mathrm{loc},1}
            \le9\cdot2^k\mathsf d\nu+2^{k+1}b.
        \end{equation*}
        Applying the reconstruction map $\Psi\mathsf T^{-1}$ now gives
        \begin{equation*}
            \|\widehat x-x\|_{\mathrm{loc},1}
            \le2^k\|\widetilde y-y\|_{\mathrm{loc},1}
            \le9\cdot4^k\mathsf d\nu+2\cdot4^k b,
        \end{equation*}
        proving \eqref{eq:structure-threshold-error}.

        Finally, we bound the output degree. Every index with $|\widehat y_v|>2\nu$ outside the support of $y^{\rm big}$ has $|y_v|>\nu$, so the tail bound allows at most $2^{k+1}b/\nu$ such indices per mode. Together with $\deg(y^{\rm big})\le3\cdot2^k\mathsf d$, this gives $\deg(\widetilde y)=\cO_k(\mathsf d+b/\nu)$. Inversion increases degree by at most $2^k$, and extraction does not increase it. Since the true $h$ is real and $G$ is Hermitian, taking real and Hermitian parts preserves the error bound and increases degree by at most a factor two.
    \end{proof}

\subsection{Learning with idle ancillary modes}

    Pair every system mode with one idle ancillary mode at the input. Ancillary Majoranas make all required inputs even on these pairs. The ancillary factor in each dressed transfer curve is fixed by its input label, so different output queries can require different pairings. Proposition~\ref{prop:modewise-process-shadow} gives a target-independent randomized output pairing and a common-data estimator with second moment $\cO_k(m^{\lfloor k/2\rfloor})$. The preparations and measurements act on pairs of modes, which need not be nearby. Appendix~\ref{sec:modewise-ancilla} gives the explicit construction.

    \begin{theorem}[All-to-all structure learning with ancillary modes]\label{thm:ancilla-structure}
        Let $\cL\in\mathfrak L_k$ have a local decomposition of strength at most the supplied bound $\bar\alpha>0$. Fix $0<\eps\le\bar\alpha$, $\delta\in(0,1)$, and a supplied bound $\mathsf d\ge1$. Suppose the approximate coefficient-degree condition \eqref{eq:structure-approx-degree} holds with
        \begin{equation*}
            b\le\frac{\eps}{4\cdot4^k}.
        \end{equation*}
        Under the idle-ancilla and pair-operation access of Proposition~\ref{prop:modewise-process-shadow}, there is a nonadaptive protocol returning real $\widehat h$ and Hermitian $\widehat G$ such that, with probability at least $1-\delta$,
        \begin{equation}
            \|\widehat x-x\|_{\mathrm{loc},1}\le\eps,
            \qquad
            \deg(\widehat x)=\cO_k(\mathsf d).
            \label{eq:ancilla-structure-guarantee}
        \end{equation}
        Neither an interaction graph nor nonzero coefficient locations are supplied. The protocol uses $m$ idle ancillary modes and
        \begin{equation}
            N=\widetilde{\cO}_k\!\left(
                \bar\alpha^2\mathsf d^2m^{\lfloor k/2\rfloor}
                \eps^{-2}\log\frac m\delta
            \right)
            \label{eq:ancilla-structure-samples}
        \end{equation}
        samples at polylogarithmically many times in $[0,(4k\bar\alpha)^{-1}]$, with no control during evolution. Its total evolution time obeys $T_{\rm evol}\le N/(4k\bar\alpha)$. Once the restricted coefficients are estimated, thresholding and inversion cost $\cO_k(m^k)$ arithmetic operations.
    \end{theorem}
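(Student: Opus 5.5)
The proof combines three earlier ingredients: the modewise ancilla shadow (Proposition~\ref{prop:modewise-process-shadow}), polynomial derivative estimation (Proposition~\ref{prop:fermionic-derivative-poly}), and thresholded signed inversion (Lemma~\ref{lem:structure-signed-inverse}).

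\emph{Step 1: reduce to restricted coefficients.} Fix the index set $\cI_k^\chi$. For each $(A,B)$ in it, set $R=\fsupp(A,B)$; there are $\cO_k(m^k)$ such pairs. By \eqref{eq:restricted-chi}, each $y_{A,B}=\chi^{[R]}_{A,B}$ is a signed combination of at most $4^{2k}$ dressed derivatives $\Lambda_{C,D}$ with $C,D\subseteq\cM(R)$. Each coefficient has modulus at most $d_R^{-3}d_R$, so an error $\eps_L$ on every $\Lambda_{C,D}$ gives $\|\widehat y-y\|_\infty\le 4^k\eps_L=:\nu$. Since the ancillas are idle, Lemma~\ref{lem:graded-diamond-amplification} and \eqref{eq:local-transfer-derivative} identify these $\Lambda_{C,D}$ with initial derivatives of the physical even curves $M^\eta_{C,D}$. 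There are at most $\cO_k(m^k)$ curves, so the union bound contributes only a $\log(m/\delta)$ factor.

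\emph{Step 2: estimate the curves from common data.} Proposition~\ref{prop:modewise-process-shadow} supplies unbiased single-shot estimators of all curves $M^\eta_{C,D}(t)$ on the grid times. Their second moments are $\cO_k(m^{\lfloor k/2\rfloor})$. I expect this factor to come from the randomized output pairing: a target-independent pairing must be compatible with each dressed output label, and matching up to $k$ odd-parity modes among $m$ succeeds with probability $\Omega_k(m^{-\lfloor k/2\rfloor})$. I would then run median-of-means with a union bound over curves and times, as in Lemma~\ref{lem:randomized-pairing-estimators}, with curve accuracy $\eps_M=T\eps_L/(4C_0\ell^2)$ and $T=(4k\bar\alpha)^{-1}$. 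Proposition~\ref{prop:fermionic-derivative-poly}, with $\ell$ polylogarithmic, converts this into derivative error $\eps_L$. The cost is $N=\widetilde\cO_k(m^{\lfloor k/2\rfloor}\bar\alpha^2\eps_L^{-2}\log(m/\delta))$.

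\emph{Step 3: threshold, invert, and tune the parameters.} Apply Lemma~\ref{lem:structure-signed-inverse} with this $\nu$. This gives $\|\widehat x-x\|_{\mathrm{loc},1}\le 9\cdot4^k\mathsf d\nu+2\cdot4^kb$. The hypothesis $b\le\eps/(4\cdot4^k)$ makes the second term at most $\eps/2$. Choosing $\nu=\eps/(18\cdot4^k\mathsf d)$, that is $\eps_L=\Theta_k(\eps/\mathsf d)$, makes the first term at most $\eps/2$. Substituting into Step 2 yields \eqref{eq:ancilla-structure-samples}. For the degree, the lemma gives $\cO_k(\mathsf d+b/\nu)$. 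With the choices above, $b/\nu\le 18\cdot4^k\mathsf d\,\eps/(4\cdot4^k\eps)=\cO(\mathsf d)$, so the output degree is $\cO_k(\mathsf d)$. The lemma's final clause allows passing to real $\widehat h$ and Hermitian $\widehat G$. The condition $\eps\le\bar\alpha$ ensures $\eps_L\le\bar\alpha$, as the derivative proposition requires. The remaining claims follow directly: evolution times lie in $[0,T]$, so \eqref{eq:total-evolution-time} gives $T_{\rm evol}\le N/(4k\bar\alpha)$; thresholding touches $\cO_k(m^k)$ entries; and each entry of \eqref{eq:structure-signed-inverse} sums $\cO_k(1)$ terms, giving $\cO_k(m^k)$ arithmetic.

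The main obstacle is Step 2. The modewise construction must deliver common-data estimators with uniform second moment $\cO_k(m^{\lfloor k/2\rfloor})$ simultaneously for every dressed pair $(C,D)$, even though the ancillary dressing depends on the input label. If the proposition only supplies per-query estimators, I would randomize the ancilla dressing pattern at input as well, reusing the acceptance mechanism of Lemma~\ref{lem:randomized-pairing-estimators}.
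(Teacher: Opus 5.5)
Your proposal is correct and follows the paper's proof step for step: the same choices $\nu=\eps/(18\cdot4^k\mathsf d)$ and $\eps_L=\nu/4^k$, the same chain Proposition~\ref{prop:modewise-process-shadow} $\to$ Proposition~\ref{prop:fermionic-derivative-poly} $\to$ Lemma~\ref{lem:structure-signed-inverse}, and the same bound $b/\nu\le\frac92\mathsf d$ for the degree. Your closing worry is unnecessary, because the fixed vertical input pairing is compatible with every $Q_D$ and only the output acceptance rule depends on the query.

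One small slip concerns the arithmetic cost. An entry of \eqref{eq:structure-signed-inverse} with $|R|=s$ sums over $\cO_k(m^{k-s})$ extensions, not $\cO_k(1)$, and this remains true after thresholding, since degree $\cO_k(\mathsf d)$ need not be constant. The $\cO_k(m^k)$ bound should instead be justified as in the paper, by scattering each nonzero entry of $\widetilde y$ to at most $2^k$ inverse entries. Alternatively, sum $\cO_k(m^s)\cdot\cO_k(m^{k-s})$ over support sizes $s$.
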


    \begin{proof}
        We first estimate the restricted coefficients. Set $\nu=\eps/(18\cdot4^k\mathsf d)$ and $\eps_L=\nu/4^k$. The transform \eqref{eq:restricted-chi} amplifies entrywise derivative errors by at most $4^{|R|}\le4^k$, so derivative accuracy $\eps_L$ ensures $\|\widehat y-y\|_\infty\le\nu$. Apply Proposition~\ref{prop:modewise-process-shadow} at the common times and curve accuracy required by Proposition~\ref{prop:fermionic-derivative-poly} and \eqref{eq:fermionic-curve-precision}, with $\beta=0$. Allocating total failure probability $\delta$ across all $\cO_k(m^k)$ queries and interpolation times gives this bound with probability at least $1-\delta$. The variance $\cO_k(m^{\lfloor k/2\rfloor})$ and inverse squared curve accuracy $\widetilde{\cO}_k(T^{-2}\eps_L^{-2})$ give \eqref{eq:ancilla-structure-samples}, since $T=(4k\bar\alpha)^{-1}$.

        Next, threshold and invert as in Lemma~\ref{lem:structure-signed-inverse}. On the same event, its error bound gives
        \begin{equation*}
            \|\widehat x-x\|_{\mathrm{loc},1}
            \le9\cdot4^k\mathsf d\nu+2\cdot4^k b
            \le\frac\eps2+\frac\eps2=\eps.
        \end{equation*}
        Since $b/\nu\le(9/2)\mathsf d$, the lemma also gives degree $\cO_k(\mathsf d)$. Taking real and Hermitian parts gives the stated output without increasing the error.

        Finally, scanning the $\cO_k(m^k)$ restricted coefficients and distributing each nonzero entry of $\widetilde y$ to at most $2^k$ inverse entries costs $\cO_k(m^k)$ arithmetic operations. The total evolution-time bound follows from $t_i\le T$ for every sample.
    \end{proof}

    The local-norm bound implies entrywise error at most $\eps$, so the significant-support inclusions in \eqref{eq:threshold-support-sandwich} apply with accuracy $\eps$. Exact support recovery additionally requires every nonzero coefficient to have magnitude greater than $2\eps$, in which case thresholding at $\eps$ suffices. The theorem identifies Hamiltonian and Kossakowski coefficients, not a unique choice of local jump operators. Compared with dense all-to-all recovery in Corollary~\ref{cor:alltoall-modewise}, its sample bound replaces the factor $m^{2k-2}$ by $\mathsf d^2$. The remaining factor $m^{\lfloor k/2\rfloor}$ comes from the available pairing scheme and is not a lower bound for general even measurements.

    The coefficient estimate need not define a valid GKSL generator. A classical fit gives the following physical reconstruction guarantee.
    \begin{corollary}[Physical reconstruction in local coefficient norm]\label{cor:ancilla-physical-reconstruction}
        Under the assumptions of Theorem~\ref{thm:ancilla-structure}, there is a protocol returning a valid generator $\widehat\cL=\cL_{\widehat H,\widehat G}\in\mathfrak L_k$ whose coefficients $\widehat x=(\widehat h,\widehat G)$ satisfy, with probability at least $1-\delta$,
        \begin{equation*}
            \|\widehat x-x\|_{\mathrm{loc},1}\le3\eps.
        \end{equation*}
        The protocol uses
        \begin{equation*}
            N=\widetilde{\cO}_k\!\left(
                \bar\alpha^2\mathsf d^2m^{\lfloor k/2\rfloor}
                \eps^{-2}\log\frac m\delta
            \right)
        \end{equation*}
        samples, with the same ancillary access and evolution times as Theorem~\ref{thm:ancilla-structure}. The fit requires no additional samples and may increase the coefficient degree.
    \end{corollary}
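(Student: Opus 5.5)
The plan is to run the protocol of Theorem~\ref{thm:ancilla-structure} unchanged and then add a purely classical convex fit. On the success event of that theorem, which has probability at least $1-\delta$, we obtain a real $\widetilde h$ and a Hermitian $\widetilde G$ with $\|\widetilde x-x\|_{\mathrm{loc},1}\le\eps$. No further samples are drawn, so the sample count, the ancillary access, and the evolution times are inherited directly.

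Next, solve a convex program over the cone $\mathfrak L_k$. The variables are traceless even Hamiltonian coordinates $h'_A$ and local PSD Kossakowski blocks $K^{(e,p)}\succeq0$ for every $|e|\le k$. These blocks aggregate to $G'$ as in \eqref{eq:aggregate-G}. The objective is to minimize $\|x'-\widetilde x\|_{\mathrm{loc},1}$, with $x'=(h',G')$. This norm is a maximum over modes of sums of absolute values of affine functions of the variables. It can therefore be written with an epigraph variable and linear constraints alongside the PSD constraints, in the same way the program \eqref{eq:sdp-projection} of Appendix~\ref{app:sdp-implementation} is adapted from \cite{MobusEtAl2026}. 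The fixed true decomposition witnessing $\cL\in\mathfrak L_k$ is feasible, and its objective value is at most $\eps$. Hence the optimum is at most $\eps$, and a numerical solver returns a feasible point with objective at most $\eps+\eps_{\rm SDP}$.

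The error bound follows from the triangle inequality:
\begin{equation*}
\|\widehat x-x\|_{\mathrm{loc},1}\le\|\widehat x-\widetilde x\|_{\mathrm{loc},1}+\|\widetilde x-x\|_{\mathrm{loc},1}\le 2\eps+\eps_{\rm SDP}\le3\eps,
\end{equation*}
once we choose $\eps_{\rm SDP}\le\eps$. Feasibility means $\widehat\cL=\cL_{\widehat H,\widehat G}$ is assembled from local GKSL terms in $\mathfrak L_k$, so it is a valid generator. The fit imposes no sparsity constraint, which is why the coefficient degree may grow.

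The main obstacle is the size of the program. The PSD variables range over all $\cO_k(m^k)$ regions, and the degree of the fitted output is not controlled. If polynomial size is an issue, I would first try to restrict the blocks to regions containing the support of $\widetilde x$. Feasibility of the truth would then fail, though, because the true decomposition may use other regions. So I would keep all regions and accept the $\mathrm{poly}_k(m)$ classical cost, noting that the statement makes no claim about runtime or degree.
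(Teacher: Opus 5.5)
Your proposal is correct and matches the paper's proof: keep the data and success event of Theorem~\ref{thm:ancilla-structure}, fit the local PSD blocks by minimizing $\|x'-\widetilde x\|_{\mathrm{loc},1}$ over $\mathfrak L_k$, use feasibility of the true decomposition (objective at most $\eps$) plus a solver tolerance $\eps_{\rm SDP}\le\eps$, and conclude with the triangle inequality. One small refinement is that, for complex entries of $G$, the residual bound $u_v\ge|z_v|$ is a $2\times2$ PSD (second-order cone) constraint rather than a linear one, and exact positivity of the returned blocks comes from the eigenvalue-shift procedure of Appendix~\ref{app:sdp-numerics}, which is how the paper handles both points.
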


    \begin{proof}
        Denote the raw estimate from Theorem~\ref{thm:ancilla-structure} by $\widetilde x$ and fit the local PSD blocks of Definition~\ref{def:localcone} with objective
        \begin{equation}
            \inf_{x'=(h',G'):\,\cL_{H',G'}\in\mathfrak L_k}
            \|x'-\widetilde x\|_{\mathrm{loc},1}.
            \label{eq:local-norm-fit}
        \end{equation}
        Choose $\eps_{\rm SDP}=\min\{1,\eps\}$. On the theorem's success event, $\|\widetilde x-x\|_{\mathrm{loc},1}\le\eps$. The bounded rational implementation in Appendix~\ref{app:sdp-implementation} returns an exactly feasible fit with
        \begin{equation*}
            \|\widehat x-\widetilde x\|_{\mathrm{loc},1}
            \le\|x-\widetilde x\|_{\mathrm{loc},1}+\eps_{\rm SDP}
            \le2\eps.
        \end{equation*}
        The triangle inequality gives $\|\widehat x-x\|_{\mathrm{loc},1}\le3\eps$.
    \end{proof}
    Using the local norm in the fit gives the $3\eps$ bound on the total coefficient error at each mode. In the entrywise fit \eqref{eq:sdp-projection}, many small errors at one mode can add up. Next, we take the maximal coefficient degree, which describes the dense case, with no additional sparsity assumption.
    \begin{corollary}[Dense all-to-all recovery]\label{cor:alltoall-modewise}
        Let $\cL\in\mathfrak L_k$ have a local decomposition of strength at most the supplied bound $\bar\alpha>0$. Under the ancillary access of Theorem~\ref{thm:ancilla-structure}, no degree or tail assumption is needed to estimate its coefficients with $\|\widetilde x-x\|_{\mathrm{loc},1}\le\eps_\chi$ using
        \begin{equation}
            \widetilde{\cO}_k\!\left(
                \bar\alpha^2m^{2k-2+\lfloor k/2\rfloor}
                \eps_\chi^{-2}\log\frac m\delta
            \right)
            \label{eq:alltoall-modewise-entrywise}
        \end{equation}
        samples, for $0<\eps_\chi\le\bar\alpha$ and $\delta\in(0,1)$. The local-norm fit \eqref{eq:local-norm-fit} instead returns a valid $\widehat\cL\in\mathfrak L_k$ with $\|\widehat\cL-\cL\|_\diamond\le\eps_\diamond$ using
        \begin{equation}
            \widetilde{\cO}_k\!\left(
                \bar\alpha^2m^{2k+\lfloor k/2\rfloor}
                \eps_\diamond^{-2}\log\frac m\delta
            \right)
            \label{eq:alltoall-modewise-diamond}
        \end{equation}
        samples, for $0<\eps_\diamond\le m\bar\alpha$. Each guarantee holds with probability at least $1-\delta$. Both protocols are nonadaptive, use $m$ idle ancillary modes and evolution times at most $(4k\bar\alpha)^{-1}$, and apply no control during evolution.
    \end{corollary}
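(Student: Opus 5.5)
The plan is to treat the dense case as Theorem~\ref{thm:ancilla-structure} with no thresholding. The sparsity parameter is replaced by the trivial worst-case degree, so no tail assumption is needed.

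\emph{Step 1: trivial degree bound.} Every coordinate $v$ with $j\in S_v$ has $|S_v|\le k$. The number of such supports is at most $\sum_{s\le k}\binom{m-1}{s-1}=\cO_k(m^{k-1})$. Each support carries at most $4^{2k}=\cO_k(1)$ label pairs $(A,B)$. Hence every array on the index set satisfies $\deg\le\mathsf d_{\max}:=C_k m^{k-1}$. The condition \eqref{eq:structure-approx-degree} therefore holds with $x^{\rm big}=x$, $\mathsf d=\mathsf d_{\max}$ and $b=0$.

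\emph{Step 2: coefficient recovery.} Apply the estimation and inversion of Lemma~\ref{lem:structure-signed-inverse} with $\nu=\eps_\chi/(9\cdot4^k\mathsf d_{\max})$. Thresholding is unnecessary here, since it only helps the degree bound. With $b=0$, \eqref{eq:structure-threshold-error} gives $\|\widetilde x-x\|_{\mathrm{loc},1}\le\eps_\chi$. Alternatively, skip thresholding and bound $\|\widetilde y-y\|_{\mathrm{loc},1}\le\nu\,\mathsf d_{\max}$ directly, then apply the $2^k$ bound on $\mathsf T^{-1}$.

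The sampling step is exactly that of Theorem~\ref{thm:ancilla-structure}. The derivative accuracy is $\eps_L=\nu/4^k$, and the variance is $\cO_k(m^{\lfloor k/2\rfloor})$ by Proposition~\ref{prop:modewise-process-shadow}. Together with Proposition~\ref{prop:fermionic-derivative-poly}, this yields $\widetilde\cO_k(\bar\alpha^2\mathsf d_{\max}^2m^{\lfloor k/2\rfloor}\eps_\chi^{-2}\log(m/\delta))$, which is \eqref{eq:alltoall-modewise-entrywise}. The union bound over $\cO_k(m^k)$ queries and times contributes only to the logarithm.

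\emph{Step 3: diamond recovery.} Run Step 2 with local accuracy $\eps'=c_k\eps_\diamond/m$, and fit by \eqref{eq:local-norm-fit} as in Corollary~\ref{cor:ancilla-physical-reconstruction}. This gives a valid $\widehat\cL\in\mathfrak L_k$ with $\|\widehat x-x\|_{\mathrm{loc},1}\le3\eps'$. Then convert local coefficient error to diamond norm. Writing $\widehat\cL-\cL=\sum_v(\widehat x_v-x_v)\mathcal B_v$, each basis superoperator $\mathcal B_v$ (a commutator with $\Gamma_A$, or a dissipator piece in $\Gamma_A,\Gamma_B$) has diamond norm $\cO(1)$ by Lemma~\ref{lem:graded-diamond-amplification}. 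The triangle inequality gives
\[
\|\widehat\cL-\cL\|_\diamond\le \cO(1)\sum_v|\widehat x_v-x_v|\le \cO(1)\sum_{j}\sum_{v\ni j}|\widehat x_v-x_v|\le \cO(m)\|\widehat x-x\|_{\mathrm{loc},1},
\]
which is the content of \eqref{eq:diamond-sum} in Proposition~\ref{prop:diamond}. Hence $\eps'\sim\eps_\diamond/m$ suffices, which adds the factor $m^2$ and gives \eqref{eq:alltoall-modewise-diamond}. The requirement $\eps_\diamond\le m\bar\alpha$ ensures $\eps'\lesssim\bar\alpha$, so the theorem's range condition holds.

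\emph{Main obstacle.} The step needing care is the conversion from local $\ell^1$ error to diamond norm. Each coordinate must be counted once via a mode in its support, and the dissipator basis terms must have uniformly bounded diamond norm under the fermionic extension. Lemma~\ref{lem:graded-diamond-amplification} handles the latter.

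The remaining points are bookkeeping. The $\eps_{\rm SDP}$ tolerance must be chosen $\le\eps'$, and all guarantees must be placed on the single success event of probability at least $1-\delta$. The protocol's properties (nonadaptive, $m$ idle ancillas, times at most $(4k\bar\alpha)^{-1}$) are inherited verbatim from Theorem~\ref{thm:ancilla-structure}.
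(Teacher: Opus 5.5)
Your proposal is correct and follows the paper's proof. You instantiate Theorem~\ref{thm:ancilla-structure} with the trivial degree bound $\mathsf d=C_km^{k-1}$ and $b=0$. For the diamond claim you run Corollary~\ref{cor:ancilla-physical-reconstruction} at local accuracy of order $\eps_\diamond/m$, then use $\|\widehat\cL-\cL\|_\diamond\le2\sum_v|\widehat x_v-x_v|\le2m\|\widehat x-x\|_{\mathrm{loc},1}$, which follows from Proposition~\ref{prop:diamond} and the nonempty coefficient supports.

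One minor attribution point: the $\cO(1)$ diamond bound on the basis maps comes directly from the unitarity of Majorana multiplications, as in the proof of Proposition~\ref{prop:diamond}, not from Lemma~\ref{lem:graded-diamond-amplification}; this does not affect the argument.
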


    \begin{proof}
        Every $k$-local coefficient array has degree at most $C_km^{k-1}$. Apply Theorem~\ref{thm:ancilla-structure} with this bound and $b=0$. Proposition~\ref{prop:diamond} and nonempty coefficient supports give
        \begin{equation*}
            \|\widehat\cL-\cL\|_\diamond
            \le 2\sum_v|\widehat x_v-x_v|
            \le 2m\|\widehat x-x\|_{\mathrm{loc},1}.
        \end{equation*}
        Corollary~\ref{cor:ancilla-physical-reconstruction}, with coefficient accuracy $\eps_\diamond/(8m)$, therefore gives the second claim.
    \end{proof}

    \begin{remark}[Separate-query recovery with one external marker]\label{rem:one-marker-recovery}
        Under the generator assumptions and accuracy ranges of Corollary~\ref{cor:alltoall-modewise}, suppose one idle external marker is available and, for every $(C,D)$ with $|\fsupp(C,D)|\le k$, the experiment can prepare $\rho_D^\pm$ in \eqref{eq:physical-inputs} and measure $J_\eta(\Gamma_C)$, with all other modes initialized in the maximally mixed state. Proposition~\ref{prop:even-clifford-eigenaccess} supplies this access using local Gaussian operations. Separate datasets for the queries give local coefficient error $\eps_\chi$ or valid-generator diamond error $\eps_\diamond$ using, respectively,
        \begin{equation*}
            \widetilde{\cO}_k\!\left(
                \bar\alpha^2m^{3k-2}\eps_\chi^{-2}\log\frac m\delta
            \right)
            \quad\text{and}\quad
            \widetilde{\cO}_k\!\left(
                \bar\alpha^2m^{3k}\eps_\diamond^{-2}\log\frac m\delta
            \right)
        \end{equation*}
        samples, with the same success probability and evolution times. For a uniformly random input sign $S$ and measurement outcome $Y$, the bounded estimator $SY$ is unbiased by \eqref{eq:physical-transfer-states}. For $D=\varnothing$, use the maximally mixed input and $Y$ itself. Hoeffding's inequality and a union bound over the $\cO_k(m^k)$ queries replace the common-data measurement factor $m^{\lfloor k/2\rfloor}$ by $m^k$. Proposition~\ref{prop:fermionic-derivative-poly} and the same thresholded reconstruction and local-norm fit as in Corollary~\ref{cor:alltoall-modewise} give the stated bounds.
    \end{remark}

    The next result allows deviations from the assumed model. Here $\cL$ governs the actual experiments, while $\cK$ is a comparison generator satisfying the model assumptions. The estimate $\widetilde x$ is computed from these experiments and compared with $x(\cK)$, with additional error controlled by $\|\cL-\cK\|_\diamond$. The learner need not know $\cK$. If $\cL$ satisfies the assumptions, take $\cK=\cL$.

    \begin{corollary}[Stability to model misspecification]\label{cor:fermionic-misspecification}
        Let $\cL$ be the unknown time-independent, parity-covariant Lindbladian. Choose the experimental access of Theorem~\ref{thm:geo-noaux}, Corollary~\ref{cor:alltoall-modewise}, or Remark~\ref{rem:one-marker-recovery}, and let $\cK$ satisfy the corresponding model assumptions with weighted strength at most $\bar\alpha$, including marker separation in the geometric case. Fix $0<\eps_L\le\bar\alpha$ and $\delta\in(0,1)$. Use the polynomial degree $\ell$, sampling times, and curve-estimation accuracy specified in Proposition~\ref{prop:fermionic-derivative-poly} with $\beta=0$, allocating total failure probability $\delta$ to curve estimation. Unthresholded inversion returns canonical coefficient estimates $\widetilde x$ satisfying, with probability at least $1-\delta$,
        \begin{equation*}
            \|\widetilde x-x(\cK)\|_\infty
            \le C_{\rm inv}\left(
                \eps_L+\ell^2\|\cL-\cK\|_\diamond
            \right),
        \end{equation*}
        where $x(\cK)$ denotes the canonical coefficients of $\cK$, and $C_{\rm inv}=C_{k,r,\Delta}$ for geometric learning or $C_{\rm inv}=C_km^{k-1}$ for all-to-all learning. The curve-estimation sample counts at nominal derivative accuracy $\eps_L$ are unchanged.
    \end{corollary}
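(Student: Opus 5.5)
The plan is to run the same pipeline as in the proof of Theorem~\ref{thm:geo-noaux} (or Corollary~\ref{cor:alltoall-modewise}, Remark~\ref{rem:one-marker-recovery}), but on data generated by $e^{t\cL}$, and to compare every intermediate quantity with the corresponding ideal quantity for $\cK$. The estimators are linear in the measured curves, so the total error splits into a statistical part and a bias part. The statistical part is the estimation error relative to the true curves of $\cL$. The bias part is the discrepancy between the $\cL$-curves and the $\cK$-curves.

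\textbf{Step 1: curves.} For every queried curve $q$ (dressed transfer curve, marker-pair curve, or ancillary curve), write $f^{\cL}_q(t)$ and $f^{\cK}_q(t)$ for its value under the two generators. Each curve has the form $2^{-m}\tr[P e^{t\cL}(Q)]$ with $P,Q$ even Hermitian unitaries, possibly on an enlarged algebra with idle markers. Write $Q$ as a difference of two normalized states $\rho^\pm_Q$, and use H\"older's inequality together with Lemma~\ref{lem:graded-diamond-amplification}. This gives
\begin{equation*}
|f^{\cL}_q(t)-f^{\cK}_q(t)|\le t\|\cL-\cK\|_\diamond\le T\|\cL-\cK\|_\diamond .
\end{equation*}
For the shadow estimators with random local inputs, the same bound holds for the expectations. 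Lemma~\ref{lem:parallel-block-marginalization} and Proposition~\ref{prop:pair-block-local-frame} identify these expectations as exactly such curves. The sampling guarantee is unchanged because it does not depend on the generator. With probability $1-\delta$, the empirical curves are within $\eps_M$ of the $f^{\cL}_q$, hence within $\eps_M+T\|\cL-\cK\|_\diamond$ of the $f^{\cK}_q$.

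\textbf{Step 2: derivatives.} Next I would apply Proposition~\ref{prop:fermionic-derivative-poly} with $\beta=0$ to the target generator $\cK$. This proposition satisfies the model assumptions and the strength bound $\bar\alpha$, so its smoothness hypotheses hold. The statement I need is that it is robust to bounded perturbations of the sampled values: an additional uniform error $\beta'$ in the curve values at the grid times is amplified by $\cO(\ell^2/T)$ into derivative error. This is the Markov-brothers-type factor already implicit in the choice $\eps_M=T\eps_L/(4C_0\ell^2)$ in the proof of Theorem~\ref{thm:geo-noaux}. Taking $\beta'=T\|\cL-\cK\|_\diamond$, the derivative estimates satisfy
\begin{equation*}
|\widehat\Lambda-\Lambda(\cK)|\le\eps_L+C\ell^2\|\cL-\cK\|_\diamond ,
\end{equation*}
where $\Lambda(\cK)$ is the derivative of the $\cK$-curve. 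The same bound holds for the pair derivatives $s_{ij}$.

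\textbf{Step 3: inversion.} The remaining steps are deterministic and linear, and they are exact for $\cK$:
\begin{itemize}
\item the triplet calibration of Proposition~\ref{prop:pair-marker}, using marker separation for $\cK$;
\item the internal-marker correction of Theorem~\ref{thm:internal-correction};
\item the restricted transform \eqref{eq:restricted-chi};
\item the unthresholded descending recursion \eqref{eq:descending-recursion};
\item extraction via \eqref{eq:G-from-chi} and \eqref{eq:h-from-chi}.
\end{itemize}
Hence the entrywise errors propagate with the constants already computed. In the geometric case, the factor $5/2\cdot4^k$ combined with Lemma~\ref{lem:triangular-error-propagation} gives $C_{k,r,\Delta}$. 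In the all-to-all case, Lemma~\ref{lem:fermionic-stability} gives $C_km^{k-1}$. This yields the stated bound with $C_{\rm inv}$ absorbing the numerical constants.

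\textbf{Main obstacle.} The delicate point is Step 2. Proposition~\ref{prop:fermionic-derivative-poly} is stated for curves of the generator being learned, and its polynomial-approximation bias is controlled by the strength of that generator. Here $\cL$ itself may violate the strength bound. I would therefore apply the approximation argument to the $\cK$-curves only and treat the $\cL$-curves as adversarial perturbations of them. This requires checking that the derivative estimator is a fixed linear functional of the grid values with $\ell_1$-weight $\cO(\ell^2/T)$; for Chebyshev or Lagrange interpolation this follows from the Markov inequality.
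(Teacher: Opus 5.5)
Your proposal is correct and follows essentially the same route as the paper. Duhamel's formula with Lemma~\ref{lem:graded-diamond-amplification} bounds each curve discrepancy by $T\|\cL-\cK\|_\diamond$. The polynomial approximation is then applied to the $\cK$-curves, treating the $\cL$-data as a uniform perturbation, and the regression and Markov bounds give $\eps_L+C\ell^2\|\cL-\cK\|_\diamond$. Finally, the calibration and inversion, which are exact for $\cK$, propagate this error with the stated constants.

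One simplification: you do not need the derivative estimator to be a linear functional of the grid values, and the robust regression used need not be linear. The regression guarantee $\|\widehat p-p^{(\ell)}\|_{\infty,[0,T]}\le C_0(\text{uniform data error})$ applies verbatim with data error $\eps_M+\tau_\ell+T\|\cL-\cK\|_\diamond$, and Markov's inequality then finishes the step.
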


    \begin{proof}
        Following \cite[proof of Proposition~6.1]{MobusEtAl2026}, Duhamel's formula gives
        \begin{equation}
            \|e^{t\cL}-e^{t\cK}\|_\diamond
            \le t\|\cL-\cK\|_\diamond.
            \label{eq:misspecified-duhamel}
        \end{equation}
        Lemma~\ref{lem:graded-diamond-amplification} gives the same bound with idle external modes. Thus every normalized transfer or calibration curve differs from its comparator curve by at most $T\|\cL-\cK\|_\diamond$ on $[0,T]$. Apply the polynomial approximation to the comparator curves. The regression and Markov bounds in the proof of Proposition~\ref{prop:fermionic-derivative-poly} then give derivative error at most $\eps_L+C\ell^2\|\cL-\cK\|_\diamond$ relative to $\cK$.

        For geometric learning, Theorem~\ref{thm:internal-correction} and the inversion bound in the proof of Theorem~\ref{thm:geo-noaux} propagate this error through calibration and reconstruction. For all-to-all learning, use the restricted transform \eqref{eq:restricted-chi} and Lemma~\ref{lem:fermionic-stability}. These give the stated constants for the physical coefficients.
    \end{proof}

    If $\cK$ is feasible for the entrywise semidefinite fit, Proposition~\ref{prop:fermionic-sdp-runtime} preserves this bound up to a factor of three and additive error $2\eps_{\rm SDP}$, for $0<\eps_{\rm SDP}\le1$. This corollary concerns entrywise stability, without asserting sparse local-norm guarantees or support recovery below the misspecification error.

\section{Local-observable prediction}\label{sec:local-observable-prediction}
    The preceding results control generator coefficients. We now show that their local one-norm error also controls predictions of even observables, without requiring global diamond accuracy. This applies to both geometric learning and ancillary structure learning. With a known finite-range geometry, the reconstruction and subsequent simulation can additionally be restricted to a neighborhood of the observable. The proofs in Appendix~\ref{app:local-prediction-proofs} adapt dissipative Lieb--Robinson methods \cite{Poulin2010LR,BarthelKliesch2012} to even CAR observables, explicitly checking fermionic locality. For related Hamiltonian bounds, see \cite{NachtergaeleSimsYoung2017FermionLR}. Related learning and prediction guarantees appear in \cite{OnoratiEtAl2023,KraftEtAl2026}.

    For $U\subseteq[m]$, let $\mathfrak A_U^+$ denote the even subalgebra of $\mathfrak A_U$, and write $\|\cdot\|$ for the operator norm and $\|\cdot\|_1$ for the trace norm. For a support bound $q\in\{1,\ldots,m\}$ and prediction time $t\ge0$, define the local-observable prediction error
    \begin{equation}
        d_{q,t}(\cL,\widehat\cL)
        :=\max_{\substack{U\subseteq[m]\\|U|\le q}}
        \sup_{\substack{O\in\mathfrak A_U^+,\ O=O^\dagger\\\|O\|\le1}}
        \left\|e^{t\cL^\dagger}(O)-e^{t\widehat\cL^\dagger}(O)\right\|.
        \label{eq:local-observable-distance}
    \end{equation}
    This measures the largest error over even observables of norm at most one on at most $q$ modes.

    \begin{proposition}[Local stability from coefficient error]\label{prop:local-stability-coefficients}
        Let $\cL,\widehat\cL\in\mathfrak L_k$ have coefficient arrays $x=(h,G)$ and $\widehat x=(\widehat h,\widehat G)$. Suppose $\cL$ admits a local decomposition of weighted strength at most $\bar\alpha$. Then
        \begin{equation}
            d_{q,t}(\cL,\widehat\cL)
            \le 2qt\,e^{k2^k\bar\alpha t}
            \|\widehat x-x\|_{\mathrm{loc},1}.
            \label{eq:local-stability-coefficients}
        \end{equation}
        No geometric assumption or strength bound on the output decomposition is required.
    \end{proposition}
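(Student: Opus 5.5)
Our plan is the standard Duhamel plus dissipative Lieb--Robinson argument in the Heisenberg picture. Fix $U$ with $|U|\le q$ and an even Hermitian $O\in\mathfrak A_U^+$ with $\|O\|\le1$. Duhamel's formula gives
\begin{equation*}
    e^{t\cL^\dagger}(O)-e^{t\widehat\cL^\dagger}(O)
    =\int_0^t e^{(t-s)\widehat\cL^\dagger}\bigl(\cL^\dagger-\widehat\cL^\dagger\bigr)e^{s\cL^\dagger}(O)\,ds .
\end{equation*}
Unital completely positive maps contract the operator norm, so the outer factor $e^{(t-s)\widehat\cL^\dagger}$ can be dropped. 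This is why no strength bound on $\widehat\cL$ is needed: only validity of $\widehat\cL\in\mathfrak L_k$ is used. It therefore suffices to show
\begin{equation*}
    \|(\cL^\dagger-\widehat\cL^\dagger)e^{s\cL^\dagger}(O)\|\le2q\,e^{k2^k\bar\alpha s}\|\widehat x-x\|_{\mathrm{loc},1}
\end{equation*}
for $s\le t$. Integrating over $s\in[0,t]$ and bounding $e^{k2^k\bar\alpha s}\le e^{k2^k\bar\alpha t}$ then gives \eqref{eq:local-stability-coefficients}.

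Next we expand the difference in canonical coefficients. Write $\cL^\dagger-\widehat\cL^\dagger=\sum_v(x_v-\widehat x_v)\cD_v^\dagger$, where $\cD_v$ is the elementary map attached to coordinate $v$: $-i[\Gamma_A,\cdot]$ for $h_A$, and $\Gamma_A(\cdot)\Gamma_B-\tfrac12\{\Gamma_B\Gamma_A,\cdot\}$ for $G_{A,B}$. Each $\cD_v^\dagger$ satisfies $\|\cD_v^\dagger(Y)\|\le2\|Y\|$. More importantly, $\cD_v^\dagger(Y)=0$ whenever $Y$ is even and supported disjointly from $S_v$. This follows from Lemma~\ref{lem:graded-factor}-type graded commutation: an even $Y$ commutes with every $\Gamma_A$ on disjoint modes, so the sandwich and anticommutator terms cancel.

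The heart of the argument is a Lieb--Robinson-type decomposition of $O_s:=e^{s\cL^\dagger}(O)$. We would write $O_s=\sum_{W}\Delta_W(s)$ over growing even-supported pieces. Concretely, use the interaction-picture or Dyson expansion of $e^{s\cL^\dagger}$ in the local terms $\cL_e^\dagger$. A term of order $n$ is a product $\cL_{e_n}^\dagger\cdots\cL_{e_1}^\dagger(O)$ along a connected chain, meaning each $e_j$ meets $U\cup e_1\cup\cdots\cup e_{j-1}$. Each factor preserves evenness, and the norm is bounded by $\prod\|\cL_{e_j}^\dagger\|$; to control operator norms we would use $\|\cL_e^\dagger\|_{\infty\to\infty}$, bounded via the coefficient structure. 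A term contributes to $\cD_v^\dagger$ only if $S_v$ meets the chain's support. Summing, we bound $\sum_v|x_v-\widehat x_v|\,\|\cD_v^\dagger(\cdot)\|$ by $2\|\widehat x-x\|_{\mathrm{loc},1}$ times the number of support modes, weighted over chains. Chains are counted starting from the $q$ modes of $U$. Each step adds at most $k$ new modes and has weight summed over $e\ni u$, bounded by $\bar\alpha$ times a local factor $2^k$. This produces $q\sum_n (k2^k\bar\alpha s)^n/n!$, that is, $q\,e^{k2^k\bar\alpha s}$.

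The main obstacle is this counting step. We must check that the combinatorics yield exactly the exponent $k2^k\bar\alpha$. We must also relate the $2\to2$ norms in $\alpha$ to the operator norms needed here; we expect a factor $2^k$ from dimension conversion on the $k$-mode support, as in Lemma~\ref{lem:local-coefficient-norm}. Finally, we must handle fermionic locality, namely that odd jump operators in $\cL_e^\dagger$ do not spread support beyond $e$ when acting on even operators. For that last point we would rely on the even-algebra restriction and the graded factorization, verifying that $\cL_e^\dagger(Y)$ is even and supported on $\supp(Y)\cup e$.
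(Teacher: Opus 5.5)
Your outer Duhamel step, with $e^{(t-s)\widehat\cL^\dagger}$ outside and removed by contractivity so that only validity of $\widehat\cL$ is used, matches the paper. So do the properties of the elementary coefficient maps and the intermediate target $\|(\cL^\dagger-\widehat\cL^\dagger)e^{s\cL^\dagger}(O)\|\le2q\,e^{k2^k\bar\alpha s}\|\widehat x-x\|_{\mathrm{loc},1}$.

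The gap is the step you flag as the main obstacle, and careful bookkeeping will not close it. A plain Taylor/Dyson expansion of $e^{s\cL^\dagger}(O)$ over union-connected chains cannot give this bound. In your chains, $e_j$ only has to meet $W_{j-1}:=U\cup e_1\cup\cdots\cup e_{j-1}$. The admissible weight at step $j$ is therefore bounded only by $|W_{j-1}|\max_u\sum_{e\ni u}\|\cL_e^\dagger\|_{\rm cb}\le(q+(j-1)k)\,2^k\bar\alpha$, not by $2^k\bar\alpha$ times a fixed factor. In the worst case the order-$n$ contribution is then of size $\frac{s^n}{n!}(2^k\bar\alpha)^n\prod_{j=1}^n(q+(j-1)k)$, before the final error-map factor. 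The $1/n!$ is cancelled, so for $k\ge2$ the resulting bound is finite only for $s$ below a constant multiple of $(k2^k\bar\alpha)^{-1}$. Even for $k=1$, where supports never grow, it gives $e^{2q\bar\alpha s}$, with $q$ in the exponent. The cause is that bounding every factor $\cL_e^\dagger$ in norm discards the contractivity of the dynamics generated by terms away from the current support.

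The missing idea is a Duhamel recursion that resums those terms, so that each chain is attached to a single moving support rather than to an accumulated union. The paper defines the influence $\mathcal C_U(S,t)$ as the supremum of $\|\mathcal K_S^\dagger e^{t\cL^\dagger}(O)\|/(\|\mathcal K_S^\dagger\|_{\rm cb}\|O\|)$. The supremum runs over even $O$ on $U$ and parity-covariant maps $\mathcal K_S$ with multipliers on $S$ and $\mathcal K_S^\dagger(\Id)=0$. It then splits $\cL$ into terms meeting $S$ and terms disjoint from $S$.

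The disjoint terms generate a unital completely positive contraction that commutes with $\mathcal K_S^\dagger$. The fermionic input needed here, beyond the evenness and support facts you list, is that disjointly supported parity-covariant basis maps commute: the exchange sign is $(-1)^{\pi(A)\pi(C)+\pi(B)\pi(D)}=1$. This yields
$\mathcal C_U(S,t)\le\mathbf 1_{\{S\cap U\ne\varnothing\}}+\sum_{e\cap S\ne\varnothing}w_e\int_0^t\mathcal C_U(e,s)\,ds$, with $w_e=\|\cL_e^\dagger\|_{\rm cb}\le2^{|e|}\|\cL_e^\dagger\|_{2\to2}$.

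On iteration, each new term need only meet the previous one. Chains are therefore paths of $\mathsf J_{u,v}:=\sum_{e\ni u,v}w_e$, and $\mathcal C_U(S,t)\le\sum_{u\in S}\sum_{v\in U}(e^{t\mathsf J})_{u,v}$. The paper roots the chain at the error support and groups $\widehat\cL-\cL=\sum_S\mathcal E_S$, with $\max_u\sum_{S\ni u}\|\mathcal E_S^\dagger\|_{\rm cb}\le2\|\widehat x-x\|_{\mathrm{loc},1}$. The column sums $\sum_u\mathsf J_{u,v}=\sum_{e\ni v}|e|w_e\le k2^k\bar\alpha$ then give exactly $2|U|e^{k2^k\bar\alpha s}\|\widehat x-x\|_{\mathrm{loc},1}$. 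So the factor $k$ in the exponent counts the modes through which a path can leave a term, not the new modes added to a union.
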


    \begin{proof}
        See Appendix~\ref{app:local-stability-proof}.
    \end{proof}

    For geometric models, use the entrywise fit in Appendix~\ref{app:sdp-implementation}, restricted to the same geometric supports. Only $\cO_{k,r,\Delta}(1)$ physical coefficients involve a given mode, so entrywise error $\eps_{\rm loc}$ implies local one-norm error $\cO_{k,r,\Delta}(\eps_{\rm loc})$. For ancillary structure learning, Corollary~\ref{cor:ancilla-physical-reconstruction} supplies a valid generator with local coefficient error at most $3\eps$. In both cases, the fitted Heisenberg evolution does not increase operator norms. The sparse coefficient estimate alone need not define a valid Lindbladian.

    \begin{corollary}[Simultaneous local-observable prediction]\label{cor:simultaneous-local-observables}
        Under the assumptions of Theorem~\ref{thm:geo-noaux}, fix $q\in\{1,\ldots,m\}$, $t>0$, and $\eps,\delta\in(0,1)$. The geometric protocol and semidefinite fit return a valid $\widehat\cL$ with $d_{q,t}(\cL,\widehat\cL)\le\eps$, with probability at least $1-\delta$, using
        \begin{equation}
            \widetilde{\cO}_{k,r,\Delta}\!\left(
                \left[1+\bar\alpha^2q^2t^2e^{2k2^k\bar\alpha t}\eps^{-2}\right]
                \log\frac m\delta
            \right)
            \label{eq:local-observable-samples}
        \end{equation}
        samples. The guarantee holds simultaneously for all even observables on at most $q$ modes. For every parity-invariant initial state, all corresponding reduced states are also predicted with trace-norm error at most $\eps$.
    \end{corollary}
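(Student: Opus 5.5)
The plan is to combine Theorem~\ref{thm:geo-noaux} with the local stability bound of Proposition~\ref{prop:local-stability-coefficients}. First I would run the geometric protocol to obtain raw coefficients $\widetilde x=(\widetilde h,\widetilde G)$ with entrywise error $\eta$. Then I would apply the entrywise semidefinite fit \eqref{eq:sdp-projection}, restricted to geometric supports of diameter at most $r$, to obtain a valid $\widehat\cL\in\mathfrak L^{\rm geo}_{k,r}\subseteq\mathfrak L_k$. As in the diamond part of the proof of Theorem~\ref{thm:geo-noaux}, the true decomposition is feasible, and \eqref{eq:sdp-explicit-error} gives $\|\widehat x-x\|_\infty\le3\eta+2\eps_{\rm SDP}$ on the same success event.

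Next I would convert this entrywise bound into a local one-norm bound. Both $x$ and $\widehat x$ are supported on geometrically allowed coordinates. Each such coordinate involving a fixed mode $j$ has all its support inside $B_r(\{j\})$, whose size is bounded by $\cO_{\Delta,r}(1)$. Hence only $C'_{k,r,\Delta}$ coordinates meet $j$, and $\|\widehat x-x\|_{\mathrm{loc},1}\le C'_{k,r,\Delta}(3\eta+2\eps_{\rm SDP})$.

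Proposition~\ref{prop:local-stability-coefficients} applies because $\cL$ has strength at most $\bar\alpha$, and it requires nothing of $\widehat\cL$ beyond membership in $\mathfrak L_k$. It gives
\[
d_{q,t}(\cL,\widehat\cL)\le 2qt\,e^{k2^k\bar\alpha t}\,C'_{k,r,\Delta}(3\eta+2\eps_{\rm SDP}).
\]
I would choose
\[
\eta=\eps_{\rm SDP}=\frac{\eps}{10\,C'_{k,r,\Delta}\,q\,t\,e^{k2^k\bar\alpha t}},
\]
capped at $\min\{\bar\alpha,1\}$, so that the right-hand side is at most $\eps$. Substituting $\eta$ into \eqref{eq:geo-noaux-entrywise} gives $\widetilde{\cO}(\bar\alpha^2\eta^{-2}\log(m/\delta))$, which is the second term in \eqref{eq:local-observable-samples}.

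The additive $1$ handles the regime where the cap $\eta=\bar\alpha$ is active. There a constant-accuracy run costs $\widetilde{\cO}(\log(m/\delta))$ samples. This cap is the step needing care, since Theorem~\ref{thm:geo-noaux} is stated only for $\eps_\chi\le\bar\alpha$. I would check that when $\eta$ would exceed $\bar\alpha$, running at $\eta=\bar\alpha$ still gives error at most $\eps$, because the stability bound only improves for smaller $\eta$. The polylogarithmic interpolation factors may then depend on $\bar\alpha t$, but they are absorbed in $\widetilde{\cO}$.

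Simultaneity over all $U$ and $O$ is automatic, since $d_{q,t}$ is a supremum and the bound holds deterministically on the single success event. For the reduced-state claim, let $\rho$ be parity invariant. Then $e^{t\cL}(\rho)$ and $e^{t\widehat\cL}(\rho)$ are even, because both generators are parity covariant. Their reduced states on $U$ are therefore even, and the trace norm of an even operator on $U$ is attained by even Hermitian $O$ with $\|O\|\le1$. Duality $\tr[O\,e^{t\cL}(\rho)]=\tr[e^{t\cL^\dagger}(O)\rho]$ then bounds the reduced trace distance by $d_{q,t}(\cL,\widehat\cL)\le\eps$. This attainment holds because the sign of the difference commutes with $\Pi_U$.
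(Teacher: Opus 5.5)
Your proposal is correct and follows essentially the same route as the paper's proof. Both use entrywise recovery via Theorem~\ref{thm:geo-noaux} followed by the geometry-restricted semidefinite fit, convert to local one-norm error through the bounded number of allowed coordinates per mode, apply Proposition~\ref{prop:local-stability-coefficients}, and use evenness of the evolved marginals for the trace-norm claim. The one adjustment is to decouple the two tolerances as the paper does: cap $\eps_{\rm SDP}$ at $1$ (e.g.\ $\eps_{\rm SDP}=\min\{1,\eta\}$) and cap the coefficient accuracy $\eta$ only at $\bar\alpha$. Capping $\eta$ itself at $1$ when $\bar\alpha>1$ would cost $\bar\alpha^2\log(m/\delta)$ samples in the small-$t$ regime, which \eqref{eq:local-observable-samples} does not allow.
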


    \begin{proof}
        See Appendix~\ref{app:simultaneous-prediction-proof}.
    \end{proof}

    \begin{figure}[t]
        \centering
        \includegraphics[width=\linewidth]{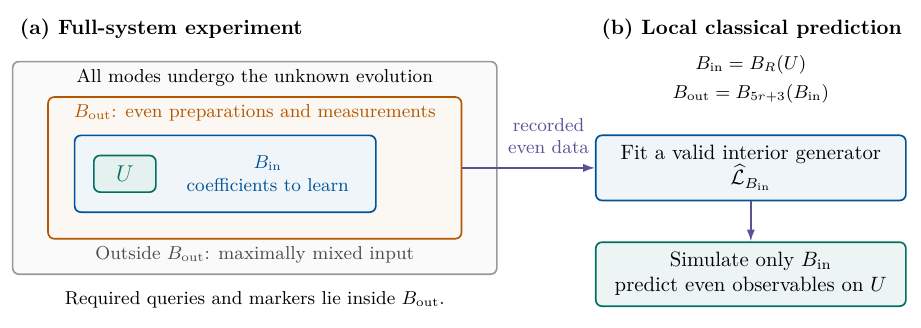}
        \caption{Local reconstruction and prediction. (a) Even preparations and measurements within $B_{\rm out}$ estimate the canonical coefficients in $B_{\rm in}$, including internal-marker calibration. Each sample evolves the full system, with maximally mixed input outside the active blocks. (b) Classical simulation uses the fitted generator $\widehat\cL_{B_{\rm in}}$ to predict even observables supported on $U$. Arrows indicate classical data processing.}
        \label{fig:locality-schematic}
    \end{figure}

    At fixed support size, prediction time, precision, and model parameters, the sample count is logarithmic in $m$. For Fermi--Hubbard models, the observables include occupations, double occupations, currents, spin observables, and even pairing correlations.

    To predict an observable on a specified region $U$, one can instead learn only a surrounding finite volume. For $B\subseteq[m]$, let $P_B$ be the diagonal projection onto the coefficient indices $A\ne\varnothing$ with $\supp_{\rm f}(A)\subseteq B$, and define
    \begin{equation}
        H_B^{\rm can}:=\sum_{\substack{A\ne\varnothing,\ \pi(A)=0\\\supp_{\rm f}(A)\subseteq B}}h_A\Gamma_A,
        \qquad G_B^{\rm can}:=P_BGP_B,
        \qquad \cL_B^{\rm can}:=\cL_{H_B^{\rm can},G_B^{\rm can}}.
        \label{eq:canonical-local-restriction}
    \end{equation}
    This canonical restriction is a valid local generator, as shown in Appendix~\ref{app:finite-volume-prediction-proof}. It is determined by the identifiable coefficients and does not require the unknown local decomposition. When comparing local and full-system evolutions, we embed local operators in the full CAR algebra and extend local generators by idle modes as in \eqref{eq:graded-extension}.

    \begin{corollary}[Local prediction by finite-volume reconstruction]\label{cor:finite-volume-local-prediction}
        Under the assumptions of Theorem~\ref{thm:geo-noaux}, fix a nonempty region $U$, $t>0$, $\delta\in(0,1)$, an integer $R\ge r$, and $0<\eps_{\rm loc}\le\bar\alpha$. Set
        \begin{equation*}
            B_{\rm in}:=B_R(U),
            \qquad B_{\rm out}:=B_{5r+3}(B_{\rm in}).
        \end{equation*}
        Restrict the geometric protocol to the transfer and calibration curves needed for the coefficients in $B_{\rm in}$, and fit a valid generator $\widehat\cL_{B_{\rm in}}$ using supports $e\subseteq B_{\rm in}$ with $|e|\le k$ and $\diam_{\mathsf G}(e)\le r$. With probability at least $1-\delta$, its coefficients approximate those of $\cL_{B_{\rm in}}^{\rm can}$ to entrywise error $\eps_{\rm loc}$ and, for a constant $C$ depending only on $k,r,\Delta$,
        \begin{equation}
            \sup_{\substack{O\in\mathfrak A_U^+,\ O=O^\dagger\\\|O\|\le1}}
            \left\|e^{t\cL^\dagger}(O)-e^{t\widehat\cL_{B_{\rm in}}^\dagger}(O)\right\|
            \le C|U|t e^{C\bar\alpha t}
            \left(\eps_{\rm loc}+\bar\alpha e^{-(R-r)}\right).
            \label{eq:finite-volume-local-prediction}
        \end{equation}
        This requires
        \begin{equation}
            N=\widetilde{\cO}_{k,r,\Delta}\!\left(
                \left[1+\bar\alpha^2\eps_{\rm loc}^{-2}\right]
                \log\frac{|B_{\rm out}|}{\delta}
            \right)
            \label{eq:finite-volume-local-samples}
        \end{equation}
        samples and $\cO_{k,r,\Delta}(|B_{\rm out}|)$ local readouts per sample. The preparations and measurements are supported in $B_{\rm out}$, with all other modes initialized in the maximally mixed state. Each sample evolves the full system for its prescribed short time $t_i\le(4k\bar\alpha)^{-1}$, independently of the prediction time $t$.
    \end{corollary}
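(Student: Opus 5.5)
Our approach is to split the prediction error into two parts: a \emph{truncation} error between $\cL$ and its canonical restriction $\cL^{\rm can}_{B_{\rm in}}$, and an \emph{estimation} error between $\cL^{\rm can}_{B_{\rm in}}$ and the fitted $\widehat\cL_{B_{\rm in}}$. By the triangle inequality,
\begin{equation*}
    \left\|e^{t\cL^\dagger}(O)-e^{t\widehat\cL_{B_{\rm in}}^\dagger}(O)\right\|
    \le
    \left\|e^{t\cL^\dagger}(O)-e^{t(\cL^{\rm can}_{B_{\rm in}})^\dagger}(O)\right\|
    +
    \left\|e^{t(\cL^{\rm can}_{B_{\rm in}})^\dagger}(O)-e^{t\widehat\cL_{B_{\rm in}}^\dagger}(O)\right\|.
\end{equation*}

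\textbf{Estimation term.} We first check that the restricted geometric protocol learns the coefficients of $\cL^{\rm can}_{B_{\rm in}}$.
\begin{itemize}
    \item Each coefficient with support $S\subseteq B_{\rm in}$ is recovered by the descending recursion \eqref{eq:descending-recursion}.
    \item That recursion uses only extensions inside $B_r(S)$.
    \item It uses marker cores in $B_{4r+3}(S)$, with interaction terms reaching one further range $r$.
\end{itemize}
All blocks $W(S)$ and the terms touching them therefore lie in $B_{5r+3}(B_{\rm in})=B_{\rm out}$. This justifies preparing and measuring only in $B_{\rm out}$, with maximally mixed input elsewhere, via the reduced-channel identity in the proof of Proposition~\ref{prop:pair-block-local-frame}. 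The arguments of Theorem~\ref{thm:geo-noaux} then carry over verbatim, with the union bound now over $\cO_{k,r,\Delta}(|B_{\rm out}|)$ curves instead of $m$. This gives entrywise error $\eps_{\rm loc}$ with the sample count \eqref{eq:finite-volume-local-samples}. The extra ``$1+$'' in that count covers the polylogarithmic interpolation overhead when $\eps_{\rm loc}$ is of order $\bar\alpha$.

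The canonical restriction is feasible for the entrywise SDP on supports $e\subseteq B_{\rm in}$. This is the claimed validity in Appendix~\ref{app:finite-volume-prediction-proof}, which we would prove by restricting each witness term $\cL_e$ with $e\subseteq B_{\rm in}$; terms crossing the boundary have their interior-supported coefficients reorganized. So the fit keeps entrywise error $\cO(\eps_{\rm loc})$. Since only $\cO_{k,r,\Delta}(1)$ coefficients involve each mode, the local one-norm error is also $\cO(\eps_{\rm loc})$. Proposition~\ref{prop:local-stability-coefficients} with $q=|U|$ then bounds the estimation term by $C|U|te^{C\bar\alpha t}\eps_{\rm loc}$. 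This requires a strength bound of order $\bar\alpha$ for the comparison generator $\cL^{\rm can}_{B_{\rm in}}$, which follows from Lemma~\ref{lem:local-coefficient-norm}.

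\textbf{Truncation term.} This is the main obstacle. The difference $\cL-\cL^{\rm can}_{B_{\rm in}}$ consists of coefficients whose support meets $B_{\rm in}^c$, so every such term sits at distance at least $R-r$ from $U$. We would use a dissipative Lieb--Robinson bound for even CAR observables, as in Appendix~\ref{app:local-prediction-proofs}, together with the Duhamel formula
\begin{equation*}
    e^{t\cL^\dagger}-e^{t\cK^\dagger}
    =\int_0^t e^{(t-s)\cL^\dagger}(\cL^\dagger-\cK^\dagger)e^{s\cK^\dagger}\,ds,
    \qquad \cK:=\cL^{\rm can}_{B_{\rm in}} .
\end{equation*}
The Heisenberg evolution $e^{s\cK^\dagger}(O)$ spreads from $U$ with Lieb--Robinson velocity $\cO(\bar\alpha)$. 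Its commutator-type overlap with a boundary term at distance $\ge R-r$ is then bounded by $C|U|\bar\alpha e^{C\bar\alpha s}e^{-(R-r)}$, with the velocity absorbed into $C$ and the exponential rate normalized to $1$ per graph step using bounded degree. Boundary terms at a given layer are counted by bounded degree, giving a geometric sum.

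The delicate point is fermionic: the difference generator may contain odd jumps. We plan to use that $O$ is even and that $\cL$ is parity covariant, so the evolved observable stays even. The graded factorization of Lemma~\ref{lem:graded-factor} then shows that terms supported away from the current support annihilate it up to the exponentially small Lieb--Robinson tail. Integrating over $s\in[0,t]$ gives the factor $t$ and completes \eqref{eq:finite-volume-local-prediction}.
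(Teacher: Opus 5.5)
Your proposal is correct and is essentially the paper's proof. It uses the same split into truncation and estimation errors, and Proposition~\ref{prop:local-stability-coefficients} for the estimation part with the restricted strength bound from Lemma~\ref{lem:local-coefficient-norm}. For the truncation part it uses Duhamel plus the even-CAR Lieb--Robinson bound. There the exponential weight must exceed $\log\max\{1,\Delta-1\}$ to beat the bounded-degree shell growth; the paper takes $\mu_0=1+\log\max\{1,\Delta-1\}$, at the cost of a factor $e^{\mu_0 r}$ in the velocity.

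The one structural difference is the Duhamel order. The paper puts the restricted semigroup outside and reuses \eqref{eq:car-response-matrix-path-app} for the true $\cL$. You put $e^{(t-s)\cL^\dagger}$ outside and apply the bound to $\cK=\cL^{\rm can}_{B_{\rm in}}$. This works equally well, since $\cK$ is valid with strength $C_k\bar\alpha$ and range $r$. It even has a small bonus: $e^{s\cK^\dagger}(O)$ stays in $\mathfrak A_{B_{\rm in}}$, so only terms meeting both $B_{\rm in}$ and its complement contribute.

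Two steps should be tightened.

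First, validity of $\cL^{\rm can}_{B_{\rm in}}$ does not come from keeping the terms with $e\subseteq B_{\rm in}$, because boundary-crossing terms also contribute to coefficients supported in $B_{\rm in}$. Instead, keep every term, truncate its Hamiltonian to monomials supported in $B_{\rm in}$, and replace each Kossakowski block by its principal submatrix indexed by monomials supported in $e\cap B_{\rm in}$. That submatrix remains PSD, and the restricted terms sum exactly to $h_A$ and $P_{B_{\rm in}}GP_{B_{\rm in}}$. This construction also defines the defects $\mathcal D_e=\cL_e-(\cL_e)^{\rm can}_{B_{\rm in}}$ that enter your Duhamel step, with $\mathcal D_e^\dagger(\Id)=0$ and $\|\mathcal D_e^\dagger\|_{\rm cb}\le C_k\|\cL_e^\dagger\|_{2\to2}$.

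Second, the radius $5r+3$ arises because recovering a coefficient on $S\subseteq B_{\rm in}$ needs restricted data on extension supports $S'\subseteq B_r(B_{\rm in})$. Their marker cores lie in $B_{4r+3}(S')\subseteq B_{\rm out}$, which is where the extra $r$ comes from, not from "interaction terms reaching one further range $r$". Terms touching those markers need not lie in $B_{\rm out}$. This is harmless, because every sample evolves the full system and the reduced-channel identity accounts for them.
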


    \begin{proof}
        See Appendix~\ref{app:finite-volume-prediction-proof}.
    \end{proof}

    At fixed $|U|$, $t$, target error, and model parameters, one can choose $R$ and $\eps_{\rm loc}$ independently of $m$. Bounded degree then makes the sample count, local readouts, and classical reconstruction cost independent of $m$. Subsequent simulation uses only $B_{\rm in}$ and can still cost exponentially in its volume. Evaluating an expectation also requires the initial state's marginal on $B_{\rm in}$, supplied or estimated separately. Its acquisition cost is not included above. Figure~\ref{fig:locality-schematic} distinguishes the full-system experiment from the subsequent local simulation.

\section{Discussion}\label{sec:discussion}
    Calibrated internal markers enable local Lindbladian learning under parity superselection without external modes in the geometric setting. For prediction in a prescribed region, only a surrounding neighborhood needs to be reconstructed. At fixed region size, prediction time, accuracy, and model parameters, the sample count and local reconstruction cost are independent of the total system size. Experiments still evolve the full system, and dense classical simulation can remain exponential in the reconstructed neighborhood's volume.

    For unknown all-to-all structure, thresholding before signed inversion controls local coefficient error through the approximate coefficient degree and the weak-coefficient tail. The additional sample factor $m^{\lfloor k/2\rfloor}$ comes from the present ancillary pairing construction. Further directions include reducing this factor through collective even measurements, obtaining degree-dependent sample bounds without a supplied degree bound, and controlling the routing costs of nonlocal pair operations. Further work could determine which Hamiltonian and dissipative coefficients can still be learned using only number-conserving preparations, controls, and measurements.

\section*{Acknowledgments}

    D.S.F. acknowledges support from the Novo Nordisk Foundation through Quantum for Life grant NNF20OC0059939 and from the ERC through GIFNEQ grant 101163938. T.M. acknowledges support from the Deutsche Forschungsgemeinschaft (DFG, German Research Foundation), Project-ID 470903074, TRR 352, and Project-ID 575156903. This project was funded within the QuantERA II program, which has received funding from the EU's H2020 research and innovation program under GA No. 101017733. DSF and CR are supported by France 2030 under the French National Research Agency award number ``ANR-22-PNCQ-0002''.

\section*{AI assistance statement}

    The authors guided the use of OpenAI Codex (GPT-5/6, Aug.--Sept. 2026) and supplied the main proof ideas, algorithmic structure, and key concepts. Under this guidance, Codex drafted supporting results and their proofs, suggested possible extensions, and assisted with proofreading. The authors checked and revised the outputs and take full responsibility for the manuscript.

\appendix
\addtocontents{toc}{\protect\setcounter{tocdepth}{1}}
\addcontentsline{toc}{section}{Appendices}
\addtocontents{toc}{\protect\setcounter{tocdepth}{0}}

\section{Physical GKSL reconstruction and numerical implementation}\label{app:sdp-implementation}\label{sec:sdp}

\subsection{Physical reconstruction and coefficient bounds}\label{app:sdp-reconstruction}

    After signed inversion, noisy coefficient estimates need not generate a completely positive trace-preserving semigroup.  Projecting local estimates separately onto the PSD cone need not preserve the estimated coefficients, because one entry $G_{A,B}$ may receive contributions from several local blocks.  Following the qubit construction in \cite{MobusEtAl2026}, we fit embedded local PSD decompositions, with parity-resolved blocks and fermionic support constraints.

    Given estimates $\widetilde h_A$ and $\widetilde G_{A,B}$, introduce real Hamiltonian variables $\widehat h_A$, with $\widehat h_A=0$ for odd $A$, and local PSD matrix variables $\widehat K^{(e,p)}$ with rows and columns indexed by $\cI_{e,p}$, as in Definition~\ref{def:localcone},
    \begin{equation}
        \widehat K^{(e,p)}\succeq0,
        \qquad |e|\le k,\quad p\in\{0,1\},
    \end{equation}
    and define the global Kossakowski matrix by
    \begin{equation}
        \widehat G_{A,B}
        :=
        \sum_{\substack{e\supseteq \fsupp(A,B)\\ |e|\le k}}
        \widehat K^{(e,\pi(A))}_{A,B}.
        \label{eq:Ghat-aggregate}
    \end{equation}
    A convenient entrywise projection is
    \begin{align}
        \text{minimize}\quad & t \nonumber\\
        \text{subject to}\quad
        &\widehat K^{(e,p)}\succeq0, \nonumber\\
        &\abs{\widehat h_A-\RePart\widetilde h_A}\le t, \nonumber\\
        &\abs{\RePart(\widehat G_{A,B}-\widetilde G_{A,B})}\le t, \nonumber\\
        &\abs{\ImPart(\widehat G_{A,B}-\widetilde G_{A,B})}\le t,
        \label{eq:sdp-projection}
    \end{align}
    for all nonempty Hamiltonian and Kossakowski indices satisfying the parity and locality conditions of Definition~\ref{def:localcone}. The absolute-value constraints are linear after splitting signs.  Standard semidefinite programming references include \cite{VandenbergheBoyd1996SDP,BoydVandenberghe2004}. For fixed $k$, there are $\sum_{j=1}^k\binom mj=\cO_k(m^k)$ candidate supports, each with a bounded number of Majorana indices and PSD blocks of dimension at most $4^k$. The program therefore has $\cO_k(m^k)$ scalar variables and constraints.

    \begin{proposition}[SDP accuracy and computational complexity]\label{prop:fermionic-sdp-runtime}
        Suppose $0<\eps_{\rm SDP}\le1$, $\|\widetilde G-G\|_\infty\le\eps_\chi$, and $\|\RePart\widetilde h-h\|_\infty\le\eps_\chi$.  Every feasible point of \eqref{eq:sdp-projection} with objective $t\le\eps_\chi+\eps_{\rm SDP}$ satisfies
        \begin{equation}
            \|\widehat G-G\|_\infty
            \le
            3\eps_\chi+2\eps_{\rm SDP},
            \qquad
            \|\widehat h-h\|_\infty
            \le
            2\eps_\chi+\eps_{\rm SDP},
            \label{eq:sdp-explicit-error}
        \end{equation}
        The program has $\cO_k(m^k)$ scalar affine constraints and PSD blocks, each PSD block having dimension at most $4^k$. The method in Subsection~\ref{app:sdp-numerics} runs in time polynomial in the input size and $\log(1/\eps_{\rm SDP})$ and returns an exactly feasible point with positive-semidefinite Kossakowski blocks.
    \end{proposition}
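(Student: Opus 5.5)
The accuracy bounds \eqref{eq:sdp-explicit-error} follow from the triangle inequality through the raw estimates. The complexity claim is a count of variables and constraints plus a standard interior-point argument, followed by a repair step that restores exact positive semidefiniteness.

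\textbf{Accuracy.} Fix a feasible point with objective $t\le\eps_\chi+\eps_{\rm SDP}$.

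For each Kossakowski index, the real- and imaginary-part constraints in \eqref{eq:sdp-projection} give
\begin{equation*}
    |\widehat G_{A,B}-\widetilde G_{A,B}|\le\sqrt2\,t\le 2t .
\end{equation*}
Hence
\begin{equation*}
    |\widehat G_{A,B}-G_{A,B}|\le 2t+\eps_\chi\le 3\eps_\chi+2\eps_{\rm SDP}.
\end{equation*}
For Hamiltonian indices, $\widehat h_A$ is real and $|\widehat h_A-\RePart\widetilde h_A|\le t$. Together with $\|\RePart\widetilde h-h\|_\infty\le\eps_\chi$ this gives
\begin{equation*}
    |\widehat h_A-h_A|\le t+\eps_\chi\le 2\eps_\chi+\eps_{\rm SDP}.
\end{equation*}
Odd $A$ are excluded because both $\widehat h_A$ and $h_A$ vanish there. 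I will also record why such a point exists. The fixed witness decomposition of Definition~\ref{def:localcone}, with the true real $h$ and PSD blocks $K^{(e,p)}$, is feasible with $t\le\eps_\chi$. This follows from the hypotheses and \eqref{eq:aggregate-G}, so the optimum is at most $\eps_\chi$.

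\textbf{Size.} There are $\sum_{j\le k}\binom mj=\cO_k(m^k)$ supports $e$. Each carries two PSD blocks indexed by $\cI_{e,p}$, of dimension at most $4^k$. Each global index $(A,B)$ with $|\fsupp(A,B)|\le k$ yields a bounded number of affine constraints after sign splitting, and there are $\cO_k(m^k)$ such indices. The sum \eqref{eq:Ghat-aggregate} has at most $\cO_k(m^{k})$ terms overall, so the constraint matrix has $\cO_k(m^k)\cdot\mathrm{poly}_k$ nonzeros.

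\textbf{Solver and exact feasibility.} To apply a standard interior-point or ellipsoid method, I need an a priori bound on the feasible region. The objective may be restricted to $t\le t_0:=\max(\|\widetilde h\|_\infty,\|\widetilde G\|_\infty)+1$, since the zero generator is feasible with that objective. Each diagonal entry $\widehat K^{(e,p)}_{A,A}$ is nonnegative and contributes to $\widehat G_{A,A}$, so it is bounded by $|\widetilde G_{A,A}|+t_0$. Positive semidefiniteness then bounds all block entries. This bounded radius, together with polynomial bit size, makes the solver run in time polynomial in the input and $\log(1/\eps_{\rm SDP})$. It returns a point whose objective is within $\eps_{\rm SDP}/2$ of optimal, but whose blocks may violate PSD by a tiny margin $\lambda$.

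The main obstacle is converting this near-feasible point into an exactly feasible one without exceeding the objective budget. I would first try the following repair: replace each block by $\widehat K^{(e,p)}+\lambda\Id$, and then recompute $t$ as the maximal constraint violation. Each coefficient $\widehat G_{A,A}$ changes by at most $\lambda$ times the number of supports containing $\fsupp(A,A)$, which is at most $\cO_k(m^{k-1})$. Off-diagonal entries do not change. Choosing the solver tolerance so that $\lambda\,\cO_k(m^{k-1})\le\eps_{\rm SDP}/2$ costs only an additional $\log m$ in bit complexity. The repaired point then has exactly PSD blocks and objective at most $\eps_\chi+\eps_{\rm SDP}$, so the accuracy part above applies. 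If exact rational PSD certification is needed, the shifted blocks can be verified by an exact $LDL^\top$ factorization, since each block has constant size.
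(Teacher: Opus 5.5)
Your proof is correct. The accuracy part is the paper's triangle-inequality argument taken in the other order. You bound the complex modulus $|\widehat G_{A,B}-\widetilde G_{A,B}|\le\sqrt2\,t$ and then add $\eps_\chi$. The paper instead bounds the real and imaginary parts of $\widehat G_{A,B}-G_{A,B}$ separately by $2\eps_\chi+\eps_{\rm SDP}$ and multiplies by $\sqrt2$. Both land under $3\eps_\chi+2\eps_{\rm SDP}$, yours slightly more tightly. Your Hamiltonian bound, size count, and shift-by-a-multiple-of-$\Id$ repair with exact certification of the bounded-size blocks match the paper.

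Where you genuinely differ is in how you make the program bounded. The paper adds a priori bounds from Lemma~\ref{lem:local-coefficient-norm} and the strength bound, $\sum_A|h_A|+\sum_{e,p}\tr K^{(e,p)}\le c_km\bar\alpha$, chosen so the true decomposition stays feasible. It notes that this may exclude other minimizers. You instead cap $t\le t_0$ using the feasible zero generator, then bound every variable from the data via nonnegative diagonals and $|K_{A,B}|^2\le K_{A,A}K_{B,B}$. This needs no $\bar\alpha$ and loses no minimizer, since any feasible point with $t>t_0$ is beaten by the zero generator. Your repair estimate is also sharper. You bound the change of each diagonal coordinate by $\lambda$ times the $\cO_k(m^{k-1})$ supports containing $\fsupp(A)$, where the paper uses the total change $C_km^k\xi$. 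Either costs only $\cO_k(\log m)$ extra bits.

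One step needs supplementing. The bounded rational SDP theorem the paper invokes requires an explicit strictly feasible point and an inner radius $R_{\rm in}$, with $\log(R_{\rm out}/R_{\rm in})$ polynomial; the same holds for interior-point methods in the Turing model generally. An outer bound and polynomial bit size alone do not give polynomial running time. Here it is easy to supply, as the paper does: set every block to $\theta\Id$ with rational $\theta>0$ inverse-polynomial in the number of blocks, set $\widehat h=0$, and take $t$ strictly between the resulting residuals (at most $t_0-1+\theta\,\cO_k(m^{k-1})$) and $t_0$.

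Relatedly, the solver's output may also violate the scalar constraints by some margin $\xi$. Recomputing $t$ as the maximal residual still gives exact feasibility. However, $\xi$ must then be added to your objective increase $\lambda\,\cO_k(m^{k-1})$, and chosen of the same order.
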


    \begin{proof}
        The true local PSD decomposition is feasible with objective $t\le\eps_\chi$.  By the assumed objective bound, each real and imaginary part of $\widehat G_{A,B}-G_{A,B}$ has absolute value at most $2\eps_\chi+\eps_{\rm SDP}$, so its absolute value is at most $\sqrt2(2\eps_\chi+\eps_{\rm SDP})\le3\eps_\chi+2\eps_{\rm SDP}$.  The real Hamiltonian variables obey the sharper bound stated in \eqref{eq:sdp-explicit-error}.
        The size, runtime, and exact-feasibility claims are proven in Subsection~\ref{app:sdp-numerics}.
    \end{proof}

    Define $\widehat H:=\sum_A\widehat h_A\Gamma_A$.  The output generator is
    \begin{equation}
        \widehat\cL:=\cL_{\widehat H,\widehat G}.
        \label{eq:output-generator}
    \end{equation}
    The local PSD constraints ensure $\widehat\cL\in\mathfrak L_k$, so it is parity covariant and generates a completely positive trace-preserving semigroup.

    \begin{proposition}[Coefficient-to-diamond conversion]\label{prop:diamond}
        For two generators in the model class,
        \begin{equation}
            \norm{\cL-\cL'}_\diamond
            \le
            2\sum_A\abs{h_A-h_A'}
            +
            2\sum_{A,B}\abs{G_{A,B}-G'_{A,B}}.
            \label{eq:diamond-sum}
        \end{equation}
        Consequently, if every induced Majorana-basis coefficient differs by at most $\eps_\chi$, then
        \begin{equation}
            \norm{\cL-\cL'}_\diamond
            \le
            C_km^k\eps_\chi.
            \label{eq:diamond-entry}
        \end{equation}
    \end{proposition}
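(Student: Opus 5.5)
The plan is to split $\cL-\cL'$ by the canonical coefficients and bound each basis superoperator in diamond norm, then sum with the triangle inequality. Both generators lie in the model class, so each has the form $\cL_{H,G}$. Since the map $(H,G)\mapsto\cL_{H,G}$ is linear,
\begin{equation*}
    \cL-\cL'
    =
    -i[\Delta H,\cdot\,]
    +
    \sum_{A,B}\Delta G_{A,B}
    \Bigl(\Gamma_A(\cdot)\Gamma_B-\tfrac12\{\Gamma_B\Gamma_A,\cdot\,\}\Bigr),
\end{equation*}
where $\Delta H=\sum_A(h_A-h'_A)\Gamma_A$ and $\Delta G=G-G'$, summed over the nonempty, parity-matched index pairs. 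This is the defining form of Definition~\ref{def:localcone} after aggregating local blocks via \eqref{eq:aggregate-G}. The two sides agree because both generators are recovered uniquely from $(h,G)$ by \eqref{eq:G-from-chi} and \eqref{eq:h-from-chi}.

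Each piece is then bounded separately. For the Hamiltonian part, $\|[\Gamma_A,\cdot\,]\|_\diamond\le2\|\Gamma_A\|=2$, since left and right multiplication by a unitary are trace-norm isometries even after tensoring with an identity on a reference system. For the dissipative part, the map $X\mapsto\Gamma_AX\Gamma_B$ has diamond norm at most $\|\Gamma_A\|\|\Gamma_B\|=1$. The anticommutator term $\tfrac12\{\Gamma_B\Gamma_A,\cdot\,\}$ also has diamond norm at most $1$, because $\Gamma_B\Gamma_A$ is unitary. Together these give $\le2$ per coefficient, and the triangle inequality yields \eqref{eq:diamond-sum}. The diamond norm here is the standard one on the full Fock-space matrix algebra. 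Under the identification in Lemma~\ref{lem:graded-diamond-amplification}, the fermionic extension is an ordinary amplification, so no grading subtlety arises.

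For \eqref{eq:diamond-entry}, I count the coefficients. Nonzero entries require $|\supp_{\rm f}(A)|\le k$, or $|\fsupp(A,B)|\le k$. There are $\sum_{j\le k}\binom mj=\cO_k(m^k)$ candidate supports, and each carries at most $4^k$ labels $A$ and at most $16^k$ pairs $(A,B)$. The entries $h_A$ and $G_{A,B}$ are themselves induced Majorana-basis coefficients, or differences of such coefficients, by \eqref{eq:G-from-chi} and \eqref{eq:h-from-chi}. Each $h_A$ is $\tfrac i2$ times a difference of two $\chi$ entries, so its error is at most $\eps_\chi$, and the same holds for $G$. Summing gives $C_km^k\eps_\chi$ with, say, $C_k=4\cdot16^k\cdot\max_j\binom mj/m^k$-type constants absorbed.

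I expect no real obstacle. The only point needing care is that the decomposition into basis superoperators is taken over aggregated canonical coefficients, not over a witness decomposition, and that the anticommutator is accounted for in the same $(A,B)$ term rather than through the identity-index coefficients. Accounting for it separately would double count it.
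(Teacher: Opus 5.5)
Your proof is correct and follows the paper's argument. The paper likewise uses linearity of $(H,G)\mapsto\cL_{H,G}$ and a diamond-norm bound of two for each commutator basis map and each normalized dissipator basis map, which holds because left and right multiplication by the unitary Majorana monomials are trace-norm isometries under amplification; it then applies the triangle inequality and an $\cO_k(m^k)$ count of nonempty parity-matched indices with $|\fsupp(A,B)|\le k$. Your stated constant is loose. A clean version combines $|h_A-h'_A|\le\eps_\chi$ and $|G_{A,B}-G'_{A,B}|\le\eps_\chi$ from \eqref{eq:G-from-chi} and \eqref{eq:h-from-chi} with the count $\sum_{j\le k}\binom mj(4^j+16^j)\le2(k+1)16^km^k$.
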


    \begin{proof}
        Left and right multiplication by a Majorana unitary have diamond norm one.  A commutator basis map therefore has diamond norm at most two, and each normalized dissipator basis map has diamond norm at most two.  The number of physical coefficient pairs with mode-support union at most $k$ is $\cO_k(m^k)$.
    \end{proof}

\subsection{Certified numerical implementation}\label{app:sdp-numerics}

    We justify the runtime bound and explain how to obtain exactly positive-semidefinite output from approximate numerical blocks. By Lemma~\ref{lem:local-coefficient-norm} and \eqref{eq:weighted-strength}, the true coefficients admit a decomposition satisfying
    \begin{equation*}
        \sum_A|h_A|+
        \sum_{e,p}\tr[K^{(e,p)}]
        \le
        c_km\bar\alpha.
    \end{equation*}
    This bound, together with the accuracy of the coefficient estimates, provides finite trace bounds for the PSD blocks and finite bounds for the scalar optimization variables. We choose these bounds so that the true decomposition remains feasible and the point constructed below lies strictly inside them. This may exclude other minimizers of \eqref{eq:sdp-projection}.

    \begin{proof}[Computational claims in Proposition~\ref{prop:fermionic-sdp-runtime}]
        First, use rational input data and parameterize the SDP by real Hamiltonian coefficients and Hermitian Kossakowski blocks on the allowed indices, setting all coordinates forbidden by parity, locality, or the identity-free gauge to zero. We next specify a convenient interior point for the method. Choose a small positive rational multiple of the identity in each nonempty PSD block, set all Hamiltonian coefficients to zero, and take $t$ strictly larger than every resulting residual. This gives a known strictly feasible point. Choose the added bounds to contain this point strictly, and choose the positive scalar multiplying each identity matrix inverse-polynomially in the number of blocks. The strict inequalities above and the finite variable bounds provide inner and outer Euclidean radii $R_{\rm in}^{\rm SDP},R_{\rm out}^{\rm SDP}$ with polynomial binary encoding length. Thus $\log(R_{\rm out}^{\rm SDP}/R_{\rm in}^{\rm SDP})$ is polynomial in the input size. The bounded rational SDP theorem \cite[Theorem~1.1]{deKlerkVallentin2016} gives runtime polynomial in the input size and $\log(1/\eps_{\rm SDP})$, as in the qubit reconstruction of \cite{MobusEtAl2026}.

        Next, we bound the cost of restoring exact feasibility. Let $K_{\rm num}^{(e,p)}$ be the numerical Hermitian blocks and $G_{\rm num}$ their sum as in \eqref{eq:Ghat-aggregate}. Require objective value at most $\eps_{\rm SDP}/2$ above the bounded optimum, scalar-inequality violations at most $\xi$, and the rigorously verified bounds $K_{\rm num}^{(e,p)}\succeq-\xi\Id$. The shifts
        \begin{equation*}
            \widehat K^{(e,p)}:=K_{\rm num}^{(e,p)}+\xi\Id
        \end{equation*}
        make every block PSD. For fixed $k$, these lower bounds can be checked exactly on the rational shifted blocks, whose dimensions are bounded. Rebuild $\widehat G$ using \eqref{eq:Ghat-aggregate}. Only its diagonal entries change, and
        \begin{equation*}
            \sum_{A,B}|\widehat G_{A,B}-(G_{\rm num})_{A,B}|
            =\xi\sum_{e,p}|\cI_{e,p}|
            \le C_km^k\xi.
        \end{equation*}
        Leave the Hamiltonian coordinates unchanged and increase $t$ by $\xi$ plus this total coefficient change. Choosing $0<\xi\le\eps_{\rm SDP}/(4C_km^k)$, with $C_k\ge1$, makes this increase at most $\eps_{\rm SDP}/2$. The output is exactly feasible for \eqref{eq:sdp-projection}. Since the true decomposition has objective at most $\eps_\chi$, the final value satisfies $t\le\eps_\chi+\eps_{\rm SDP}$. The additional precision is $\cO_k(\log m+\log(1/\eps_{\rm SDP}))$ bits. For related rigorous postprocessing of approximate SDP solutions, see \cite{JanssonChaykinKeil2007}.
    \end{proof}

    The same argument applies to the local-norm fit \eqref{eq:local-norm-fit}. For each coefficient index $v$ with support $S_v$, introduce a residual variable $u_v\ge|\widehat x_v-\widetilde x_v|$ and impose $\sum_{v:j\in S_v}u_v\le t$ for every mode $j$, minimizing $t$. Here $\widetilde x$ contains the input estimates and $\widehat x=(\widehat h,\widehat G)$ the fitted coefficients. For a complex residual $z$, the constraint $u\ge|z|$ is equivalent to the $2\times2$ Hermitian PSD constraint $\bigl(\begin{smallmatrix}u&z\\\overline z&u\end{smallmatrix}\bigr)\succeq0$. Real residuals need only linear inequalities.

    Choose finite bounds that keep the true decomposition feasible and admit a strictly feasible point as above, with every $u_v$ larger than its residual and $t$ larger than each mode sum. Apply the eigenvalue tolerance also to the residual PSD blocks. For the numerical residual $z_v$ at coordinate $v$, this gives $u_v+\xi\ge|z_v|$. Increase each $u_v$ by $\xi$ plus the change in its fitted coefficient, then increase $t$ to cover these changes and the mode-sum feasibility error. Since there are $\cO_k(m^k)$ variables and blocks of bounded dimension, these adjustments increase $t$ by at most $C_km^k\xi$. The same choice $\xi\le\eps_{\rm SDP}/(4C_km^k)$ keeps this below $\eps_{\rm SDP}/2$. Together with objective error at most $\eps_{\rm SDP}/2$, this gives an exactly feasible fit satisfying
    \begin{equation*}
        \|\widehat x-\widetilde x\|_{\mathrm{loc},1}
        \le \|x-\widetilde x\|_{\mathrm{loc},1}+\eps_{\rm SDP}.
    \end{equation*}
    The triangle inequality gives the coefficient error used in Corollary~\ref{cor:ancilla-physical-reconstruction}. The runtime has the same polynomial dependence as above.

\section{Proofs for local-observable prediction}\label{app:local-prediction-proofs}

\subsection{Local stability}\label{app:local-stability-proof}

    Write $\|\cdot\|_{\rm cb}$ for the completely bounded operator norm, the supremum of the induced operator norm after adjoining an idle finite-dimensional system. For a map $\mathcal T$ on $j$ modes, adjoining $j$ ancillary modes and using $\|X\|_2\le\|X\|_1\le2^j\|X\|_2$ gives
    \begin{equation*}
        \|\mathcal T^\dagger\|_{\rm cb}=\|\mathcal T\|_\diamond
        \le2^j\|\mathcal T^\dagger\|_{2\to2}.
    \end{equation*}
    For parity-covariant maps, Lemma~\ref{lem:graded-diamond-amplification} preserves this bound under fermionic extension. For the fixed decomposition of $\cL$, define the local term strengths and their mode-pair sums by
    \begin{equation*}
        w_e:=\|\cL_e^\dagger\|_{\rm cb},
        \qquad
        \mathsf J_{u,v}:=\sum_{e\ni u,v}w_e,
        \qquad
        a:=k2^k\bar\alpha.
    \end{equation*}
    The matrix $\mathsf J$ is symmetric and nonnegative. The strength bound \eqref{eq:weighted-strength} and the power series for the matrix exponential give
    \begin{equation}
        \sum_u\mathsf J_{u,v}=\sum_{e\ni v}|e|w_e\le a,
        \qquad
        \sum_u(e^{s\mathsf J})_{u,v}\le e^{as}
        \quad(s\ge0).
        \label{eq:car-influence-column-sums}
    \end{equation}

    \begin{proof}[Proof of Proposition~\ref{prop:local-stability-coefficients}]
        We adapt the support-chain argument of \cite[arXiv version, Section~V]{BarthelKliesch2012} to parity-covariant fermionic maps and then apply Duhamel's formula to the coefficient error.

        We first bound the influence of a map supported on $S$ on an observable initially supported on $U$. Define
        \begin{equation*}
            \mathcal C_U(S,t):=\sup_{O,\mathcal K_S}
            \frac{\|\mathcal K_S^\dagger e^{t\cL^\dagger}(O)\|}
            {\|\mathcal K_S^\dagger\|_{\rm cb}\|O\|},
        \end{equation*}
        where the supremum is over nonzero $O\in\mathfrak A_U^+$ and nonzero parity-covariant maps $\mathcal K_S$ with Majorana multipliers supported on $S$ and $\mathcal K_S^\dagger(\Id)=0$. Disjointly supported parity-covariant maps commute: exchanging their basis terms $\Gamma_A(\cdot)\Gamma_B$ and $\Gamma_C(\cdot)\Gamma_D$ produces $(-1)^{\pi(A)\pi(C)+\pi(B)\pi(D)}=1$. Also, $\mathcal K_S^\dagger(O)=O\mathcal K_S^\dagger(\Id)=0$ when $S\cap U=\varnothing$, because $O$ is even.

        Split off the terms disjoint from $S$. Their Heisenberg semigroup is a unital completely positive contraction and commutes with $\mathcal K_S^\dagger$. Duhamel's formula therefore gives
        \begin{equation*}
            \mathcal C_U(S,t)
            \le\boldsymbol 1_{\{S\cap U\ne\varnothing\}}
            +\sum_{e:\,e\cap S\ne\varnothing}
            w_e\int_0^t\mathcal C_U(e,s)\,ds.
        \end{equation*}
        Iteration produces chains of intersecting supports. Summing over a mode at each intersection bounds these chains by paths of $\mathsf J$, giving
        \begin{equation}
            \mathcal C_U(S,t)
            \le\sum_{u\in S}\sum_{v\in U}
            (e^{t\mathsf J})_{u,v}.
            \label{eq:car-response-matrix-path-app}
        \end{equation}

        We now apply this bound to the coefficient error. Put $\eta:=\|\widehat x-x\|_{\mathrm{loc},1}$ and group $\widehat\cL-\cL=\sum_S\mathcal E_S$ by exact coefficient support. Each adjoint Hamiltonian or dissipative basis map is parity covariant, annihilates the identity, and has completely bounded norm at most two. Hence
        \begin{equation*}
            \max_u\sum_{S\ni u}\|\mathcal E_S^\dagger\|_{\rm cb}\le2\eta.
        \end{equation*}
        Apply Duhamel's formula with $e^{(t-s)\widehat\cL^\dagger}$ on the outside. This is a contraction because $\widehat\cL$ is valid. Equation~\eqref{eq:car-response-matrix-path-app}, followed by the column-sum bound \eqref{eq:car-influence-column-sums}, gives
        \begin{equation*}
            \|(e^{t\widehat\cL^\dagger}-e^{t\cL^\dagger})(O)\|
            \le2\eta\|O\|\int_0^t
            \sum_{v\in U}\sum_u(e^{s\mathsf J})_{u,v}\,ds
            \le2|U|t e^{at}\eta\|O\|.
        \end{equation*}
    \end{proof}

\subsection{Simultaneous prediction}\label{app:simultaneous-prediction-proof}

    \begin{proof}[Proof of Corollary~\ref{cor:simultaneous-local-observables}]
        Bounded geometry allows at most $C_{k,r,\Delta}$ coefficients involving any mode, so entrywise error $\eps_{\rm loc}$ gives local coefficient error at most $C_{k,r,\Delta}\eps_{\rm loc}$. With $a$ as above, choose
        \begin{equation*}
            \eps_{\rm loc}
            =\min\left\{\bar\alpha,
            \frac{\eps}{C_{k,r,\Delta}qt e^{at}}\right\}.
        \end{equation*}
        Apply Theorem~\ref{thm:geo-noaux} and the geometric semidefinite fit of Appendix~\ref{app:sdp-implementation} with raw coefficient error $c\eps_{\rm loc}$ and tolerance $\eps_{\rm SDP}=\min\{1,c\eps_{\rm loc}\}$, for sufficiently small $c>0$ depending only on $k,r,\Delta$. Proposition~\ref{prop:local-stability-coefficients} gives the prediction error, and substitution into \eqref{eq:geo-noaux-entrywise} gives \eqref{eq:local-observable-samples}.

        The same coefficient event controls every observable in \eqref{eq:local-observable-distance}, so no additional union bound is needed. For a parity-invariant initial state, both evolved marginals are even. Their trace-norm difference is therefore attained by an even observable of norm at most one, which gives the marginal error.
    \end{proof}

\subsection{Finite-volume prediction}\label{app:finite-volume-prediction-proof}

    \begin{proof}[Proof of Corollary~\ref{cor:finite-volume-local-prediction}]
        First, the restriction \eqref{eq:canonical-local-restriction} is a valid local generator. For each local term, keep only its Hamiltonian monomials supported entirely in $B$ and restrict each Kossakowski block to indices corresponding to monomials supported in $e\cap B$. These blocks remain PSD and sum to the canonical restriction, preserving the support-size and diameter bounds. Lemma~\ref{lem:local-coefficient-norm} bounds the restricted weighted strength by $C_k\bar\alpha$ and gives, for $\mathcal D_e:=\cL_e-(\cL_e)_B^{\rm can}$,
        \begin{equation*}
            \mathcal D_e=0\quad(e\subseteq B),
            \qquad
            \|\mathcal D_e^\dagger\|_{\rm cb}
            \le C_k\|\cL_e^\dagger\|_{2\to2}.
        \end{equation*}
        For the boundary error, take $B=B_R(U)$ and $\mu_0:=1+\log(\max\{1,\Delta-1\})$. Finite range implies
        \begin{equation*}
            \max_u\sum_v e^{\mu_0\dist_{\mathsf G}(u,v)}\mathsf J_{u,v}
            \le k2^ke^{\mu_0r}\bar\alpha.
        \end{equation*}
        The matrix entries in \eqref{eq:car-response-matrix-path-app} consequently satisfy $(e^{t\mathsf J})_{u,v}\le\exp(k2^ke^{\mu_0r}\bar\alpha t-\mu_0\dist_{\mathsf G}(u,v))$. Every contributing $e\not\subseteq B$ has all its modes at distance at least $R-r$ from $U$. Summing the graph-distance shells, which contain at most $\Delta\max\{1,\Delta-1\}^{\ell-1}$ modes at distance $\ell\ge1$ from a fixed mode, gives
        \begin{equation*}
            \sum_{u:\,\dist_{\mathsf G}(u,U)\ge R-r}
            \sum_{v\in U}e^{-\mu_0\dist_{\mathsf G}(u,v)}
            \le C_\Delta|U|e^{-(R-r)}.
        \end{equation*}
        Duhamel's formula, with the canonical restricted semigroup on the outside, now bounds the boundary error by $C_{k,r,\Delta}|U|\bar\alpha t\exp(C_{k,r,\Delta}\bar\alpha t-(R-r))\|O\|$.

        Finally, all data needed for the local fit can be acquired within $B_{\rm out}$. Every required diagonal extension contains the coefficient's nonempty anchor support and lies in $B_r(B_{\rm in})$. Its marker triplets lie in $B_{4r+3}(B_r(B_{\rm in}))=B_{\rm out}$. These queries use the full-system evolution at the prescribed sampling times, with modes outside the active blocks initialized in the maximally mixed state. The proof of Theorem~\ref{thm:geo-noaux}, restricted to this query list, gives \eqref{eq:finite-volume-local-samples} and the readout bound.

        The canonical restriction is feasible for the local geometric fit in Appendix~\ref{app:sdp-implementation}. Entrywise coefficient accuracy $c\eps_{\rm loc}$ before the semidefinite fit and tolerance $\eps_{\rm SDP}=\min\{1,c\eps_{\rm loc}\}$ give fitted entrywise error at most $\eps_{\rm loc}$ for sufficiently small $c=c_{k,r,\Delta}>0$, hence local coefficient error at most $C_{k,r,\Delta}\eps_{\rm loc}$. Apply Proposition~\ref{prop:local-stability-coefficients} with the restricted strength bound $C_k\bar\alpha$, then add the boundary error to obtain \eqref{eq:finite-volume-local-prediction}.
    \end{proof}

\section{Bounds and proofs for exponentially decaying interactions}\label{app:exponential-proofs}

    We prove the extension in Subsection~\ref{sec:exponentially-decaying}, keeping the cutoff and marker scales explicit. Throughout, use the decomposition and tail bound of Definition~\ref{def:exponential-witness}. For the supplied mode graph, define its maximal ball volume by
    \begin{equation*}
        \mathsf V(R):=\max_{u\in[m]}|B_R(\{u\})|.
    \end{equation*}

\subsection{Marker calibration and truncated inversion}

\begin{assumption}[Scale-dependent marker access]\label{ass:scale-dependent-markers}
        Fix integer scales $R,L\ge1$.  Every nonempty mode set $S\subseteq[m]$ with $|S|\le k$ and $\diam_{\mathsf G}(S)\le R$ has a known triplet of singleton marker modes $X_i(S)$ satisfying
        \begin{equation*}
            \dist_{\mathsf G}(S,X_i)\ge L,
            \qquad
            \dist_{\mathsf G}(X_i,X_j)\ge L
            \quad(i\neq j).
        \end{equation*}
        The extended blocks $W(S):=S\cup X_1(S)\cup X_2(S)\cup X_3(S)$ admit a known coloring with $K_{\rm col}(R,L)$ colors such that blocks assigned the same color share no modes. The experiment can implement the pair-block frame of Proposition~\ref{prop:pair-block-local-frame} on every $W(S)$.
    \end{assumption}

    Each extended block contains at most $k+3$ modes. Proposition~\ref{prop:pair-block-local-frame} therefore provides estimators $Z_q$ with $\mathbb E[Z_q^2]\le\kappa_k$, where $\kappa_k$ depends only on $k$. Geometry affects the number of colors and the distances between paired modes. For nonempty mode sets $U,V$, define the strength of terms meeting both sets by
    \begin{equation*}
        \tau(U,V)
        :=
        \sum_{\substack{e:\ e\cap U\neq\varnothing\\e\cap V\neq\varnothing}}
        \ell_e .
    \end{equation*}
    Every term in this sum has diameter at least $\dist_{\mathsf G}(U,V)$, so
    \begin{equation}
        \tau(U,V)
        \le
        |U|\,a(\dist_{\mathsf G}(U,V))
        \le
        |U|\bar\alpha
        e^{-\mu\dist_{\mathsf G}(U,V)}.
        \label{eq:cross-interaction-tail}
    \end{equation}

    Use the marker scalars $b_i$ and pair derivatives $s_{ij}$ from \eqref{eq:marker-b} and \eqref{eq:marker-pair-derivative}. The internal derivatives $\Lambda^{S|i}_{C,D}$ are defined in \eqref{eq:internal-data}, while $\Lambda_{C,D}$ denotes the corresponding derivative with an idle external marker.

\begin{lemma}[Approximate internal-marker correction]\label{lem:approximate-internal-marker}
        Let $S$ be a nonempty region with $|S|\le k$, let $X_1,X_2,X_3\subseteq S^c$ be distinct singleton marker modes, and let $C,D\subseteq\cM(S)$. For $i\ne j$,
        \begin{equation*}
            |s_{ij}-b_i-b_j|
            \le
            3\tau(X_i,X_j).
        \end{equation*}
        For each $i$,
        \begin{equation*}
            \left|
                \Lambda^{S|i}_{C,D}
                -
                \Lambda_{C,D}
                -
                \boldsymbol 1_{\{\pi(C)=\pi(D)=1\}}
                \delta_{C,D}b_i
            \right|
            \le
            3\tau(S,X_i).
        \end{equation*}
        Under Assumption~\ref{ass:scale-dependent-markers}, if every internal and calibration derivative is estimated to error at most $\eps_L$, the calibrated derivative obeys
        \begin{equation}
            |\widehat\Lambda^{\rm corr}_{C,D}-\Lambda_{C,D}|
            \le
            C_k\left(
                \eps_L+\bar\alpha e^{-\mu L}
            \right).
            \label{eq:approximate-marker-error}
        \end{equation}
    \end{lemma}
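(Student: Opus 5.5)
The plan is to rerun the term-by-term arguments behind Proposition~\ref{prop:pair-marker} and Theorem~\ref{thm:internal-correction}, this time without assuming termwise separation, and to collect the exceptional terms into the error budget. The sum $\cL=\sum_e\cL_e$ splits into four classes for each statement.

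\emph{Pair identity.} Fix $i\ne j$. Every term $\cL_e$ falls into one of four classes:
\begin{itemize}
\item[(a)] it meets neither $X_i$ nor $X_j$;
\item[(b)] it meets $X_i$ only;
\item[(c)] it meets $X_j$ only;
\item[(d)] it meets both.
\end{itemize}
Class (a) contributes $\tr[\cL_e(\Id)]=0$, because such a term commutes past $O_{ij}$ by Lemma~\ref{lem:graded-factor}. Classes (b) and (c) contribute their share of $b_i$ and $b_j$ exactly, by the computation in the proof of Proposition~\ref{prop:pair-marker}. Class (d) is the defect. Its terms contribute to $s_{ij}$, and they also enter the definitions of $b_i$ and $b_j$, since the sums $\cK_i$ and $\cK_j$ include every term meeting the respective marker.

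Each class-(d) term therefore produces at most three discrepancies, one in $s_{ij}$, one in $b_i$ and one in $b_j$. Each discrepancy is a normalized trace of the form $2^{-n}\tr[P\,\cL_e^{\langle q\rangle}(P)]$ with $P$ a Hermitian unitary. By Cauchy--Schwarz with unnormalized Hilbert--Schmidt norms, and since $\|\cL_e^{\langle1\rangle}\|_{2\to2}=\|\cL_e\|_{2\to2}$ and the adjoint has the same norm, each is bounded by $\ell_e$. Lemma~\ref{lem:graded-diamond-amplification} ensures the norm is unchanged under fermionic extension. Summing over class (d) gives $3\tau(X_i,X_j)$.

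\emph{Corrected transfer.} The same four-way split is applied to $\Lambda^{S|i}_{C,D}-\Lambda_{C,D}-\mathbf 1\delta_{C,D}b_i$ with the sets $S$ and $X_i$:
\begin{itemize}
\item Terms meeting $S$ but not $X_i$ reproduce the idle-marker datum, as in the proof of Theorem~\ref{thm:internal-correction}.
\item Terms meeting $X_i$ but not $S$ reproduce the $b_i$ correction.
\item Terms meeting neither vanish.
\item Terms meeting both appear at most once in each of $\Lambda^{S|i}$, $\Lambda$ and $b_i$.
\end{itemize}
In the idle-marker generator, such a term is restricted to its action on system modes, with $\eta$ external. Again Cauchy--Schwarz bounds each occurrence by $\ell_e$. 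The one caveat is that $\Lambda$ uses $d_{\rm ext}$ normalization with the extended map, and Lemma~\ref{lem:graded-diamond-amplification} equates the norms. This gives $3\tau(S,X_i)$.

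\emph{Final estimate.} The calibrated estimate is
\[
\widehat\Lambda^{\rm corr}=\widehat\Lambda^{S|i}-\mathbf 1\delta\,\widehat b_i,\qquad \widehat b_i=\tfrac12(\widehat s_{ij}+\widehat s_{ik}-\widehat s_{jk}).
\]
The triangle inequality splits the error into two parts.
\begin{itemize}
\item \emph{Statistical part.} This is at most $\eps_L+\tfrac32\eps_L$.
\item \emph{Calibration bias.} The triplet formula $b_i=\tfrac12(s_{ij}+s_{ik}-s_{jk})$ holds only up to error $\tfrac32\cdot3\max\tau(X_a,X_b)$. Adding the transfer bias $3\tau(S,X_i)$ gives the full bias.
\end{itemize}
Under Assumption~\ref{ass:scale-dependent-markers}, \eqref{eq:cross-interaction-tail} applies with $|U|\le k$ for $U=S$ and $|U|=1$ for markers, at distance at least $L$. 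Hence every $\tau$ is at most $k\bar\alpha e^{-\mu L}$, which yields \eqref{eq:approximate-marker-error} with $C_k=\cO(k)$.

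The main obstacle is bookkeeping the class-(d) terms in $b_i$ correctly. Because $\cK_i$ is defined by all terms meeting $X_i$, its support $Y_i$ may overlap $S$ or $X_j$. The graded factorization then fails for those terms, so they must be bounded directly by $\ell_e$ rather than cancelled. Checking that normalizations over $Y_i$ and over the full system agree relies on tracing out idle modes, where each partial trace contributes a factor $1$ after normalization.
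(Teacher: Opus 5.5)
Your proposal is correct and follows the paper's proof: split the decomposition term by term, use the exact identities of Proposition~\ref{prop:pair-marker} and Theorem~\ref{thm:internal-correction} for terms meeting at most one of the two regions, and bound each of the at most three occurrences of a term meeting both regions by $\ell_e$ via Cauchy--Schwarz and the norm invariance of the twist. Then combine the $3/2$ amplification from the triplet formulas with the tail bound \eqref{eq:cross-interaction-tail}. Your write-up only makes the bookkeeping more explicit, including the $\tfrac{3}{2}\cdot 3\max\tau(X_a,X_b)$ bias propagated through $\widehat b_i$.
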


    \begin{proof}
        Split the fixed decomposition as in Proposition~\ref{prop:pair-marker} and Theorem~\ref{thm:internal-correction}. Terms meeting at most one of the two relevant regions satisfy the exact identities. For a term meeting both, each normalized overlap has absolute value at most $\ell_e$, including the twisted contributions because the twist in \eqref{eq:twisted-generator} preserves the Hilbert--Schmidt induced norm. Each defect contains at most three such contributions, giving both inequalities. The triplet formulas \eqref{eq:three-marker-calibration} amplify statistical error by $3/2$, so the corrected derivative has statistical error at most $5\eps_L/2$. Equation~\eqref{eq:cross-interaction-tail} and the marker distances bound the remaining defects by $C_k\bar\alpha e^{-\mu L}$.
    \end{proof}

    Let $n_{k,R}$ denote the number of allowed coordinates of $(h,G)$ with nonempty support of size at most $k$ and diameter at most $R$. Choosing an anchor mode gives
    \begin{equation}
        n_{k,R}
        \le
        C_km\mathsf V(R)^{k-1}.
        \label{eq:spatial-coordinate-count}
    \end{equation}
    The recursion's error amplification is bounded by
    \begin{equation}
        B_{k,R}
        :=
        C_k\sum_{t=0}^{k-1}
        \binom{\mathsf V(R)}{t}3^t
        \le
        C_k'\mathsf V(R)^{k-1}.
        \label{eq:spatial-inversion-amplification}
    \end{equation}
    Here $C_k\ge1$ is chosen sufficiently large to cover the fixed-dimensional transforms and the number of ordered extension chains, as shown below.

\begin{lemma}[Spatially truncated inversion]\label{lem:spatially-truncated-inversion}
        Suppose every required derivative is estimated to error at most $\eps_L$. Apply the recursion \eqref{eq:descending-recursion} only to coefficients and diagonal extensions whose supports have diameter at most $R$, and set all other physical coordinates to zero. Under Definition~\ref{def:exponential-witness} and Assumption~\ref{ass:scale-dependent-markers}, the resulting arrays satisfy
        \begin{equation}
            \|\widehat h-h\|_\infty
            ,\qquad
            \|\widehat G-G\|_\infty
            \le
            B_{k,R}\left(
                \eps_L
                +\bar\alpha e^{-\mu L}
                +\bar\alpha e^{-\mu R}
            \right).
            \label{eq:spatially-truncated-inversion}
        \end{equation}
        The corrected-derivative query list and the classical recursion have size $\cO_k(n_{k,R})$.
    \end{lemma}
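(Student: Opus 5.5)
The plan is to reduce the statement to three error sources and send each through the triangular-propagation bound of Lemma~\ref{lem:triangular-error-propagation}. The sources are: the statistical derivative error $\eps_L$; the marker-leakage bias $\bar\alpha e^{-\mu L}$ from Lemma~\ref{lem:approximate-internal-marker}; and the truncation bias from diagonal extensions of diameter greater than $R$, which the recursion silently drops.

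First, for each required pair $(A,B)$ with $R_0=\fsupp(A,B)$ of diameter at most $R$, apply Lemma~\ref{lem:approximate-internal-marker}. Every corrected derivative $\widehat\Lambda^{\rm corr}_{C,D}$ with $C,D\subseteq\cM(R_0)$ is then within $C_k(\eps_L+\bar\alpha e^{-\mu L})$ of the idle-marker derivative $\Lambda_{C,D}$. Inserting these into \eqref{eq:restricted-chi} and using the bound from the proof of Theorem~\ref{thm:geo-noaux} (there are $d_{R_0}^4$ summands, each trace of modulus at most $d_{R_0}$) gives $|\widetilde\chi^{[R_0]}_{A,B}-\chi^{[R_0]}_{A,B}|\le 4^kC_k(\eps_L+\bar\alpha e^{-\mu L})$.

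Second, I would write the exact identity of Theorem~\ref{thm:diagonal-extension} as a truncated recursion plus a remainder:
\begin{equation*}
\chi_{A,B}=\chi^{[R_0]}_{A,B}-\sum_{E\ne\varnothing,\ \diam\le R}s\,\chi_{A\sqcup E,B\sqcup E}-\rho_{A,B},
\qquad
\rho_{A,B}:=\sum_{E:\ \diam(R_0\cup\supp_{\rm f}E)>R}s\,\chi_{A\sqcup E,B\sqcup E}.
\end{equation*}
The point is that $|\rho_{A,B}|\le\sum_{e\ni u,\ \diam e>R}\sum_{A',B'}|\chi^{(e)}_{A',B'}|$ for any anchor $u\in R_0$. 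This holds because each global coefficient is a sum of local-term coefficients $\chi^{(e)}$ with $e\supseteq\fsupp$, and each such $e$ contains $u$ and has diameter greater than $R$. Lemma~\ref{lem:local-coefficient-norm} and \eqref{eq:exponential-local-tail} then give $|\rho_{A,B}|\le c_k a(R+1)\le c_k\bar\alpha e^{-\mu R}$. Coefficients that are set to zero because their own diameter exceeds $R$ are bounded the same way.

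Third, subtract the estimated recursion from the exact one. The error $e_v$ at node $v=(A,B)$ then satisfies $e_v\le a_v+\sum_{v\to w}e_w$, with $a_v\le C_k(\eps_L+\bar\alpha e^{-\mu L}+\bar\alpha e^{-\mu R})$. Here the children $w$ are the extensions of diameter at most $R$, so they lie in $B_R(\{u\})$. Each added mode has $3$ label choices and the depth is at most $k$, so Lemma~\ref{lem:triangular-error-propagation}, counting paths as in Lemma~\ref{lem:fermionic-stability}, bounds the amplification by $C_k\sum_{t<k}\binom{\mathsf V(R)}{t}3^t=B_{k,R}$. Extraction via \eqref{eq:G-from-chi} and \eqref{eq:h-from-chi} does not increase the error, which gives \eqref{eq:spatially-truncated-inversion}. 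The size claims follow from the anchor counting that gives \eqref{eq:spatial-coordinate-count}.

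The main obstacle is the remainder step. The difficulty is getting a tail bound for the aggregated global coefficients in terms of the per-term $\ell_e$ without losing the anchor structure. I would handle it by expanding every $\chi$ coefficient into local-term contributions before summing over $E$, so that each $e$ is counted once per anchor.
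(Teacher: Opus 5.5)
Your proposal is correct and follows essentially the same route as the paper. It uses marker-corrected derivatives via Lemma~\ref{lem:approximate-internal-marker}, the $4^k$-amplifying restricted transform, and a tail bound on the dropped extensions obtained by grouping coefficients by local terms through the anchor mode with Lemma~\ref{lem:local-coefficient-norm} and \eqref{eq:exponential-local-tail}; it then applies Lemma~\ref{lem:triangular-error-propagation} with $\cO_k(1)$ ordered chains per endpoint in the radius-$R$ ball. Your explicit remainder $\rho_{A,B}$, together with the injectivity of $E\mapsto(A\sqcup E,B\sqcup E)$, simply spells out the paper's one-line step of grouping the omitted coefficient sum by local terms.
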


    \begin{proof}
        Fix a nonempty anchor support and a mode in it. Every extension by $t$ modes whose support has diameter at most $R$ lies in this mode's radius-$R$ ball, giving at most
        \begin{equation*}
            \binom{\mathsf V(R)}{t}3^t
        \end{equation*}
        possible Majorana-label choices. Every omitted extension has support diameter greater than $R$. Group its coefficient sum by local terms. Each contributing term contains the anchor mode and has diameter greater than $R$. Lemma~\ref{lem:local-coefficient-norm} and \eqref{eq:exponential-local-tail} therefore bound the sum of absolute values of the omitted coefficients by $C_k\bar\alpha e^{-\mu R}$. The same argument bounds every physical coefficient whose support diameter exceeds $R$. Lemma~\ref{lem:approximate-internal-marker}, the restricted transform \eqref{eq:restricted-chi}, and this tail bound give the recursion of Lemma~\ref{lem:triangular-error-propagation}, with source error
        \begin{equation*}
            C_k\left(
                \eps_L+\bar\alpha e^{-\mu L}
                +\bar\alpha e^{-\mu R}
            \right).
        \end{equation*}
        Each endpoint is reached by only $\cO_k(1)$ ordered extension chains. Thus \eqref{eq:spatial-inversion-amplification} and Lemma~\ref{lem:triangular-error-propagation} prove \eqref{eq:spatially-truncated-inversion}. The recursion contains $\cO_k(n_{k,R})$ nodes. Each restricted Fierz transform has fixed dimension, and each coefficient contributes to at most $2^k$ lower-support indices, giving $\cO_k(n_{k,R})$ queries and recursion operations.
    \end{proof}

\subsection{Learning on general mode graphs}

\begin{theorem}[Learning exponentially decaying fermionic Lindbladians]\label{thm:exponentially-decaying-learning}
        There exists a constant $c_{k,0}>0$, depending only on $k$, such that the following holds. Let $\cL$ have the exponentially decaying $k$-body decomposition of Definition~\ref{def:exponential-witness}, suppose Assumption~\ref{ass:scale-dependent-markers} holds at scales $R,L$, and fix $\delta\in(0,1)$. For $0<\eps_\chi\le\bar\alpha$, if
        \begin{equation}
            B_{k,R}\bar\alpha
            \left(e^{-\mu R}+e^{-\mu L}\right)
            \le
            c_{k,0}\eps_\chi,
            \label{eq:exponential-entrywise-condition}
        \end{equation}
        then all coefficients can be recovered to entrywise error $\eps_\chi$ with probability at least $1-\delta$, using
        \begin{equation}
            N_{\rm coeff}
            =
            \widetilde{\cO}_k\!\left(
                K_{\rm col}(R,L)
                \bar\alpha^2 B_{k,R}^2\eps_\chi^{-2}
                \log\frac{n_{k,R}}{\delta}
            \right)
            \label{eq:exponential-entrywise-samples}
        \end{equation}
        samples. For diamond recovery, let $0<\eps_\diamond\le m\bar\alpha$ and suppose instead that
        \begin{equation}
            \bar\alpha
            \left(e^{-\mu R}+e^{-\mu L}\right)
            \le
            c_{k,0}\frac{\eps_\diamond}
            {m\mathsf V(R)^{2(k-1)}}
            \label{eq:exponential-diamond-condition}
        \end{equation}
        holds. Solve \eqref{eq:sdp-projection} with every candidate block $e$ satisfying $1\le|e|\le k$ and $\diam_{\mathsf G}(e)\le R$, to tolerance $0<\eps_{\rm SDP}\le\min\{1,c_{k,0}\eps_\diamond/n_{k,R}\}$. With probability at least $1-\delta$, it outputs a valid generator $\widehat\cL$ with interaction range at most $R$ satisfying
        \begin{equation*}
            \|\widehat\cL-\cL\|_\diamond
            \le
            \eps_\diamond
        \end{equation*}
        using
        \begin{equation}
            N_\diamond
            =
            \widetilde{\cO}_k\!\left(
                K_{\rm col}(R,L)
                \bar\alpha^2m^2\mathsf V(R)^{4(k-1)}\eps_\diamond^{-2}
                \log\frac{n_{k,R}}{\delta}
            \right)
            \label{eq:exponential-diamond-samples}
        \end{equation}
        samples. Both protocols are nonadaptive and use only even preparations and measurements on system modes, with evolution times at most $(4k\bar\alpha)^{-1}$ and no control during evolution. For either displayed sample count $N$, evolution time is bounded by \eqref{eq:total-evolution-time}.  Streaming costs $\widetilde{\cO}_k(n_{k,R}N)$ operations, while calibration and inversion cost $\cO_k(n_{k,R})$. The SDP restricted to supports of diameter at most $R$ has $\cO_k(n_{k,R})$ constant-dimensional PSD blocks and, for rational data, can be solved to tolerance $\eps_{\rm SDP}$ in time polynomial in its input size and $\log(\mathsf V(R)/\eps_{\rm SDP})$.
    \end{theorem}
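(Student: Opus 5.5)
The proof follows the finite-range proof of Theorem~\ref{thm:geo-noaux}, with Lemmas~\ref{lem:approximate-internal-marker} and~\ref{lem:spatially-truncated-inversion} replacing the exact calibration and inversion steps.

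\emph{Estimation step.} Set $T=(4k\bar\alpha)^{-1}$ and fix a target derivative accuracy $\eps_L$. For each sample, choose one of the $K_{\rm col}(R,L)$ colors of Assumption~\ref{ass:scale-dependent-markers} uniformly and run independent pair-block frames (Proposition~\ref{prop:pair-block-local-frame}) on the mode-disjoint blocks of that color. Multiply each selected block's estimator by $K_{\rm col}$ and set the others to zero. By Lemma~\ref{lem:parallel-block-marginalization}, each curve in \eqref{eq:block-internal-transfer} and \eqref{eq:block-pair-curves} then has an unbiased estimator with second moment at most $K_{\rm col}\kappa_k$. The query list is produced by Lemma~\ref{lem:spatially-truncated-inversion} and has $\cO_k(n_{k,R})$ entries. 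On the polylogarithmic time grid of Proposition~\ref{prop:fermionic-derivative-poly}, median-of-means with a union bound over all curves and times gives curve accuracy $\eps_M\asymp T\eps_L/\ell^2$ using $\widetilde\cO_k(K_{\rm col}\bar\alpha^2\eps_L^{-2}\log(n_{k,R}/\delta))$ samples. The proposition converts this into derivative error $\eps_L$ for every $\Lambda^{S|i}_{C,D}$ and $s_{ij}$.

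\emph{Entrywise claim.} Lemma~\ref{lem:approximate-internal-marker} combined with Lemma~\ref{lem:spatially-truncated-inversion} gives entrywise error at most $B_{k,R}(\eps_L+\bar\alpha e^{-\mu L}+\bar\alpha e^{-\mu R})$. Choosing $c_{k,0}\le 1/2$ in \eqref{eq:exponential-entrywise-condition} makes the bias at most $\eps_\chi/2$. Taking $\eps_L=\eps_\chi/(2B_{k,R})$ then gives total error $\eps_\chi$ and yields \eqref{eq:exponential-entrywise-samples}. The requirement $\eps_L\le\bar\alpha$ holds because $B_{k,R}\ge1$.

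\emph{Diamond claim.} First obtain raw coefficients to entrywise error $\eta$ on all supports of diameter at most $R$. The main obstacle is that the SDP \eqref{eq:sdp-projection} restricted to such blocks need not contain the true decomposition, since $\cL$ has tails of diameter greater than $R$. I would handle this with a comparison generator $\cL_R$, built by dropping every term $\cL_e$ with $\diam_{\mathsf G}(e)>R$. It is valid, feasible for the truncated SDP, and satisfies
\[
\|\cL-\cL_R\|_\diamond\le\sum_e 2^{k}\ell_e\le C_k m\bar\alpha e^{-\mu R},
\]
which follows from the cb bound of Appendix~\ref{app:local-stability-proof} by summing the tail bound \eqref{eq:exponential-local-tail} over anchor modes. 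The coefficients of $\cL$ and $\cL_R$ differ on retained indices only through tail terms, and Lemma~\ref{lem:local-coefficient-norm} bounds this difference by $C_k\bar\alpha e^{-\mu R}$. Hence $\cL_R$ has SDP objective at most $\eta+C_k\bar\alpha e^{-\mu R}$, and Proposition~\ref{prop:fermionic-sdp-runtime} gives entrywise error $\cO(\eta+\bar\alpha e^{-\mu R}+\eps_{\rm SDP})$ relative to $\cL_R$. Proposition~\ref{prop:diamond} over the $n_{k,R}\le C_km\mathsf V(R)^{k-1}$ coordinates then gives
\[
\|\widehat\cL-\cL\|_\diamond\le C_k n_{k,R}\bigl(\eta+\bar\alpha e^{-\mu R}+\eps_{\rm SDP}\bigr)+C_km\bar\alpha e^{-\mu R}.
\]
Here $\eta\le B_{k,R}(\eps_L+\text{bias})$. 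Using $n_{k,R}B_{k,R}\le C_km\mathsf V(R)^{2(k-1)}$, condition \eqref{eq:exponential-diamond-condition} makes the bias terms at most $\eps_\diamond/2$, and the stated tolerance for $\eps_{\rm SDP}$ controls its contribution. Choosing $\eps_L\asymp\eps_\diamond/(m\mathsf V(R)^{2(k-1)})$ gives \eqref{eq:exponential-diamond-samples}.

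\emph{Remaining claims.} The evolution-time bound follows from \eqref{eq:total-evolution-time}. The streaming cost is one update per sample and query, and the calibration and inversion costs follow from Lemma~\ref{lem:spatially-truncated-inversion}. The SDP size follows from Appendix~\ref{app:sdp-numerics} with $n_{k,R}$ blocks; the $\log\mathsf V(R)$ term in the runtime comes from the encoding length of the variable bounds.
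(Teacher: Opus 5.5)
Your proposal is correct and follows the paper's proof essentially step for step. It uses colored parallel frames with Lemma~\ref{lem:parallel-block-marginalization}, then median-of-means with Proposition~\ref{prop:fermionic-derivative-poly}, and Lemma~\ref{lem:spatially-truncated-inversion} for the entrywise bound. For the diamond claim it likewise takes the truncated comparator $\cL_R=\sum_{\diam_{\mathsf G}(e)\le R}\cL_e$ as the feasible SDP point, with coordinate bias $C_k\bar\alpha e^{-\mu R}$, and converts through Propositions~\ref{prop:fermionic-sdp-runtime} and~\ref{prop:diamond}.

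The differences are cosmetic. You bound $\|\cL-\cL_R\|_\diamond$ through the cb estimate $\|\cL_e\|_\diamond\le 2^k\ell_e$ rather than through Lemma~\ref{lem:local-coefficient-norm} with Proposition~\ref{prop:diamond}. The paper attributes the $\log\mathsf V(R)$ runtime term to the extra precision of the feasibility-restoring shift $\xi\le\eps_{\rm SDP}/(4C_k\mathsf V(R)^{k-1})$, since each coordinate lies in at most $C_k\mathsf V(R)^{k-1}$ candidate blocks, not to the encoding of the variable bounds; this misattribution does not affect the stated runtime.
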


    \begin{proof}
        Choose $c_{k,0}$ sufficiently small to satisfy the error bounds below and one color uniformly and independently for each sample. Run the frames on its disjoint blocks, multiply each selected estimator by $K_{\rm col}(R,L)$, and set it to zero otherwise. Proposition~\ref{prop:pair-block-local-frame} and Lemma~\ref{lem:parallel-block-marginalization} give unbiased estimators of the curve values, each with second moment at most $K_{\rm col}(R,L)\kappa_k$. Median-of-means estimation \cite[Theorem~4.1]{DevroyeEtAl2016} and Proposition~\ref{prop:fermionic-derivative-poly}, with a union bound over the $\cO_k(n_{k,R})$ curves and sampling times, estimate all derivatives to accuracy
        \begin{equation*}
            \eps_L
            =
            \Theta_k\!\left(
                \frac{\eps_\chi}{B_{k,R}}
            \right).
        \end{equation*}
        Choose the implied constant sufficiently small. Lemma~\ref{lem:spatially-truncated-inversion} and \eqref{eq:exponential-entrywise-condition} then give the entrywise guarantee and \eqref{eq:exponential-entrywise-samples}. For the physical output, let
        \begin{equation*}
            \cL_{\le R}
            :=
            \sum_{\diam_{\mathsf G}(e)\le R}\cL_e.
        \end{equation*}
        This is a valid generator with interaction range at most $R$ and a feasible point of the candidate-block SDP.  Moreover,
        \begin{equation*}
            \sum_{\diam(e)>R}\ell_e
            \le
            \sum_{u=1}^m
            \sum_{\substack{e\ni u\\\diam(e)>R}}
            \ell_e
            \le
            ma(R).
        \end{equation*}
        Hence Lemma~\ref{lem:local-coefficient-norm} and Proposition~\ref{prop:diamond} give
        \begin{equation*}
            \|\cL-\cL_{\le R}\|_\diamond
            \le
            C_k\sum_{\diam(e)>R}\ell_e
            \le
            C_km\bar\alpha e^{-\mu R}.
        \end{equation*}

        Let $x_v$ be any physical coordinate whose support has diameter at most $R$ and choose a mode in its nonempty union support. Every discarded local term contributing to $x_v$ contains this mode and has diameter greater than $R$. Thus
        \begin{equation}
            |x_v(\cL)-x_v(\cL_{\le R})|
            \le
            C_ka(R)
            \le
            C_k\bar\alpha e^{-\mu R}.
            \label{eq:truncated-comparator-coordinate-bias-app}
        \end{equation}
        Take
        \begin{equation*}
            \eps_L
            =
            \Theta_k\!\left(
                \frac{\eps_\diamond}
                {m\mathsf V(R)^{2(k-1)}}
            \right).
        \end{equation*}
        Equations~\eqref{eq:spatially-truncated-inversion}, \eqref{eq:exponential-diamond-condition}, and \eqref{eq:truncated-comparator-coordinate-bias-app} bound the error relative to each allowed coefficient of $\cL_{\le R}$ by $\cO_k(\eps_\diamond/n_{k,R})$. This gives a feasible comparison point with the same objective bound. Proposition~\ref{prop:fermionic-sdp-runtime} controls the fitted coefficient errors, and Proposition~\ref{prop:diamond} converts their sum into diamond error. Choosing the constants sufficiently small makes the fitting and truncation errors sum to at most $\eps_\diamond$. Substituting the derivative accuracy gives \eqref{eq:exponential-diamond-samples}.

        To restore exact numerical feasibility, use the procedure in Subsection~\ref{app:sdp-numerics}, with objective error at most $\eps_{\rm SDP}/2$, scalar-constraint violations at most $\xi$, and rigorously verified eigenvalue bounds $K_{\rm num}^{(e,p)}\succeq-\xi\Id$. Each coordinate in the fit belongs to at most $C_k\mathsf V(R)^{k-1}$ candidate blocks. The positivity shifts and scalar adjustments therefore increase the objective by at most $(1+C_k\mathsf V(R)^{k-1})\xi$. Choose $C_k\ge1$ and
        \begin{equation*}
            \xi
            \le
            \frac{\eps_{\rm SDP}}
            {4C_k\mathsf V(R)^{k-1}}.
        \end{equation*}
        These adjustments cost at most $\eps_{\rm SDP}/2$ in objective value and $\cO_k(\log\mathsf V(R))$ additional precision bits. The fitted generator need not satisfy the original exponential-tail bound.
    \end{proof}

\subsection{Periodic lattices and access requirements}

\begin{proof}[Proof of Corollary~\ref{cor:exponential-lattice}]
        Fix a cutoff $R\in\{R_\chi,R_\diamond\}$ and set $L=R$. For each nonempty support $S$ with $|S|\le k$ and $\diam_{\mathsf G}(S)\le R$, choose a deterministic anchor $x(S)\in S$ and place the markers at
        \begin{equation*}
            X_i(S):=\{x(S)+(R+i(R+1))e_1\},
            \qquad i=1,2,3,
        \end{equation*}
        where $e_1$ is the first coordinate vector. The finite-size condition makes each displacement from the anchor geodesic. Since $\diam(S)\le R$, one has $\dist(S,X_i)\ge i(R+1)>R$, and the markers are mutually separated by at least $R+1$. Thus Assumption~\ref{ass:scale-dependent-markers} holds. The extended blocks have diameter at most $5R+3$, so their pair operations span distances $\cO(R)$.

        A fixed vertex belongs to at most $C_k\mathsf V(R)^{k-1}$ target supports. If it occurs as a marker, its position and marker label determine the anchor, giving the same bound. Each extended block has at most $k+3$ modes. Greedy coloring therefore gives
        \begin{equation*}
            K_{\rm col}(R,R)\le C_k\mathsf V(R)^{k-1},
            \qquad \mathsf V(R)\le(2R+1)^D.
        \end{equation*}
        Put $d_*:=D(k-1)$. The cutoff conditions of Theorem~\ref{thm:exponentially-decaying-learning} follow from
        \begin{equation*}
            C_{k,D}\bar\alpha(1+R_\chi)^{d_*}e^{-\mu R_\chi}\le\eps_\chi,
            \qquad
            C_{k,D}\bar\alpha(1+R_\diamond)^{2d_*}e^{-\mu R_\diamond}\le\frac{\eps_\diamond}{m}.
        \end{equation*}
        The logarithmic radii in the corollary satisfy these inequalities when $C(k,D,\mu)$ is sufficiently large, since the exponential decay dominates either fixed polynomial. Substituting the volume and coloring bounds into the theorem gives the explicit spatial overheads
        \begin{align*}
            N_{\rm coeff}
            &=\widetilde{\cO}_{k,D,\mu}\!\left(
                \bar\alpha^2\eps_\chi^{-2}(1+R_\chi)^{3d_*}
                \log\frac m\delta
            \right),\\
            N_\diamond
            &=\widetilde{\cO}_{k,D,\mu}\!\left(
                \bar\alpha^2m^2\eps_\diamond^{-2}(1+R_\diamond)^{5d_*}
                \log\frac m\delta
            \right).
        \end{align*}
        Absorbing these polynomial factors in the logarithmic radii proves \eqref{eq:exponential-lattice-entrywise} and \eqref{eq:exponential-lattice-diamond}.
    \end{proof}

\begin{remark}[Routing and scope]\label{rem:exponential-routing}
        Strictly nearest-neighbor control additionally requires a routing construction. If each block $W(S)$ can be implemented inside $B_{cR}(x(S))$ for a fixed $c\ge1$, with the anchors chosen above, coloring these routing regions gives
        \begin{equation*}
            K_{\rm col}^{\rm route}(R)
            \le
            C_k\mathsf V(2cR)\mathsf V(R)^{k-1}.
        \end{equation*}
        Indeed, intersecting regions have anchors within distance $2cR$, and each anchor labels at most $C_k\mathsf V(R)^{k-1}$ blocks. This bound assumes such a routing construction. The corollary's finite-size condition also requires the marker separation to remain below the system scale.
    \end{remark}

\section{Ancillary measurement construction}\label{sec:modewise-ancilla}

    Here we present the even preparation-and-measurement scheme used in Theorem~\ref{thm:ancilla-structure} and Corollary~\ref{cor:alltoall-modewise}. It uses one idle ancillary mode per system mode and pair operations, without Bell-pair preparation. Ancillary parity operators make the inputs compatible with a fixed pairing, while randomized output pairings provide common data for all required transfer curves.

\subsection{Ancillary parity operators}
    Add an ancillary layer of $m$ idle fermionic modes, ordered after the system layer. Write $j_{\rm sys}$ and $j_{\rm aux}$ for mode $j$ in the system and ancillary layers, respectively, and choose one Majorana $\xi_j$ on $j_{\rm aux}$. The combined Fock space has dimension $d^2$. For $A\subseteq[2m]$, recall from \eqref{eq:local-majorana-parity} that $q_j(A)=|A\cap\{2j-1,2j\}|\bmod2$ is one exactly when $\Gamma_A$ contains a single Majorana on system mode $j_{\rm sys}$. Define
    \begin{align}
        r_A
        &:=
        \sum_{j=1}^m q_j(A),&
        \Xi_A
        &:=
        i^{r_A(r_A-1)/2}
        \prod_{\substack{j\in[m]\\q_j(A)=1}}\xi_j .
        \label{eq:modewise-tag}
    \end{align}
    The product follows the ancillary-mode order.  Thus $\Xi_A$ is a Hermitian unitary of parity $\pi(\Xi_A)=r_A\bmod2=\pi(A)$. For $\pi(C)=\pi(D)=p$, set
    \begin{equation}
        Q_D:=i^p\Gamma_D\Xi_D,
        \qquad
        P_{C|D}:=i^p\Gamma_C\Xi_D .
        \label{eq:modewise-input-output}
    \end{equation}
    Both are even Hermitian unitaries. Let $\Phi_t^{[m]}$ be the fermionic extension of $\Phi_t=e^{t\cL}$ to these idle modes, using the same system multipliers as in \eqref{eq:graded-extension}. No Hamiltonian or jump operator involves an ancillary mode.

    \begin{lemma}[Ancillary parity operators realize the dressed transfer curves]\label{lem:modewise-dressed-curves}
        For every $C,D$ with $\pi(C)=\pi(D)$,
        \begin{equation}
            d^{-2}
            \tr\!\left[
                P_{C|D}\,
                \Phi_t^{[m]}(Q_D)
            \right]
            =
            M^\eta_{C,D}(t).
            \label{eq:modewise-curve-identity}
        \end{equation}
        If $\pi(C)\neq\pi(D)$, then $M^\eta_{C,D}(t)=0$ by parity covariance.
    \end{lemma}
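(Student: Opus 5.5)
The identity reduces to factorizing every term of the Majorana expansion of $\Phi_t$ into a system trace times an ancillary trace, and then matching that term by term with the corresponding factorization of $M^\eta_{C,D}(t)$. First, $\Phi_t=e^{t\cL}$ is parity covariant, so it has its own expansion $\Phi_t=\sum_{A,B}\phi_{A,B}(t)\Gamma_A(\cdot)\Gamma_B$ with $\phi_{A,B}=0$ unless $\pi(A)=\pi(B)$. Both fermionic extensions, $\Phi_t^{[m]}$ by the $m$ ancillary modes and $\Phi_t^{\rm ext}$ by the single marker mode, use these same multipliers. Since both sides of \eqref{eq:modewise-curve-identity} are linear in $\Phi_t$, it suffices to compare them termwise for a fixed pair $(A,B)$ with $\pi(A)=\pi(B)$. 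The ancillary monomial $\Xi_D$ then plays the same role that $\eta^{p}$ played in Lemma~\ref{lem:signed-transfer}.

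For the termwise computation, write $p=\pi(C)=\pi(D)$ and consider
\begin{equation*}
    d^{-2}\tr\!\left[P_{C|D}\Gamma_AQ_D\Gamma_B\right]
    =d^{-2}(-1)^p\tr\!\left[\Gamma_C\Xi_D\Gamma_A\Gamma_D\Xi_D\Gamma_B\right].
\end{equation*}
Move the first $\Xi_D$ to the right through the system operator $\Gamma_A\Gamma_D$. Because $\Xi_D$ has parity $p$ and the ancillary Majoranas are distinct from the system ones, the graded rule gives the sign $(-1)^{p(\pi(A)+p)}$. After this move, $\Xi_D^2=\Id$ because $\Xi_D$ is a Hermitian unitary. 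The total phase is therefore $(-1)^{p\pi(A)}=(-1)^{\pi(D)\pi(B)}$, and the expression becomes $d^{-1}(-1)^{\pi(D)\pi(B)}\tr[\Gamma_C\Gamma_A\Gamma_D\Gamma_B]$, using $\tr_{\rm aux}[\Id]=d$. The same computation with $\eta^p$ in place of $\Xi_D$ and ancillary dimension $2$, as in the proof of Lemma~\ref{lem:signed-transfer}, gives exactly this value for the corresponding term of $M^\eta_{C,D}(t)$. Summing over $(A,B)$ with weights $\phi_{A,B}(t)$ then proves \eqref{eq:modewise-curve-identity}.

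The step that needs the most care is the graded reordering. One must check that $\Xi_D$ carries the same phase as the single $\eta^p$ even though it may be a product of several $\xi_j$. This holds because only its total parity $r_D\equiv p$ enters the exchange sign. Its internal phase $i^{r_D(r_D-1)/2}$ is irrelevant once we use $\Xi_D^2=\Id$. It also matters that the extension acts by left and right multiplication only, so no commutation of $\Xi_D$ with $\Phi_t$ itself is needed. Finally, when $\pi(C)\neq\pi(D)$, $M^\eta_{C,D}(t)$ contains a single odd power of $\eta$ inside the trace and therefore vanishes, as in Lemma~\ref{lem:signed-transfer}.
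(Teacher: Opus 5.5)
Your proof is correct and follows the paper's argument. You expand $\Phi_t$ in the Majorana-superoperator basis, run the calculation of Lemma~\ref{lem:signed-transfer} with $\Xi_D$ in place of $\eta^p$ (only its parity $p$ enters the exchange sign, and $\Xi_D^2=\Id$ leaves an ancillary trace $d$), match the result termwise with $M^\eta_{C,D}(t)$, and handle $\pi(C)\neq\pi(D)$ correctly. One step you assert without comment is that $\Phi_t^{\rm ext}=e^{t\cL_{\rm ext}}$ carries the same multipliers as $e^{t\cL}$; it holds because the fermionic extension acts as $\mathcal T\otimes\operatorname{id}$ (proof of Lemma~\ref{lem:graded-diamond-amplification}) and therefore commutes with composition and exponentiation.
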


    \begin{proof}
        Apply the calculation of Lemma~\ref{lem:signed-transfer} to $\Phi_t$, replacing $\eta^p$ by $\Xi_D$. The operator $\Xi_D$ has parity $p$, so commuting it through the system multipliers gives the same sign $(-1)^{p\pi(B)}$. Since $\Xi_D^2=\Id$, the ancillary factor contributes $\tr[\Id]=d$, leaving the normalized system trace in \eqref{eq:signed-transfer}.
    \end{proof}

\subsection{Explicit input frame and structured output pairing}

    Pair system mode $j_{\rm sys}$ with ancillary mode $j_{\rm aux}$ at the input.  Every $Q_D$ is compatible with this vertical pairing because the system factor $\Gamma_D$ and the ancillary parity operator $\Xi_D$ have the same local parity $q_j(D)$ on each vertical block.  In each sample, use on these pairs the random even-axis preparations $(\Id+S_bR_b)/4$ from Proposition~\ref{prop:pair-block-process-shadow} and evolve only the system, keeping the ancillary modes idle.  Both $Q_D$ and $P_{C|D}$ use the ancillary factor $\Xi_D$, with the signs from the fermionic extension accounted for by Lemma~\ref{lem:modewise-dressed-curves}.

    A fixed vertical output pairing detects $P_{C|D}$ only when $q_j(C)=q_j(D)$ for every $j$. Otherwise it is Hilbert--Schmidt orthogonal to the block-even measurement algebra.  Replacing the ancillary parity operator $\Xi_D$ by $\Xi_C$ does not help, because $\tr[\Xi_C\Xi_D]=0$ unless the local-parity signatures agree.  A query-adapted pairing suffices for one pair, whereas common data require target-independent randomized output pairings.  The resulting $m^{\lfloor k/2\rfloor}$ factor is a pairing-coverage cost within this architecture, not a lower bound for collective measurements.

    Use the following target-independent output distribution. For $m\ge4k$, set $m_0=2\lfloor m/4\rfloor$ and choose a uniformly random set $T\subseteq[m]$ of size $m_0$. Keep the vertical output pair $\{j_{\rm sys},j_{\rm aux}\}$ for $j\notin T$, and choose a uniform perfect matching of $T$. For every matched index pair $\{j,\ell\}$, independently choose with equal probability either
    \begin{equation*}
        \{j_{\rm sys},\ell_{\rm sys}\},\ \{j_{\rm aux},\ell_{\rm aux}\}
        \qquad\text{or}\qquad
        \{j_{\rm sys},\ell_{\rm aux}\},\ \{j_{\rm aux},\ell_{\rm sys}\}.
    \end{equation*}
    For $m<4k$, instead choose a uniform perfect matching of all $2m$ modes. In either case, measure a uniformly random even axis on every output block.

    Let $\cQ_k:=\{(C,D):|\fsupp(C,D)|\le k\}$ be the query family, with $|\cQ_k|=\cO_k(m^k)$.

    \begin{proposition}[Modewise-ancilla process shadow]\label{prop:modewise-process-shadow}
        For every $\eps_M>0$ and $\delta\in(0,1)$, suppose the experiment can append the $m$ idle modes and implement the two-mode even preparations and measurements just described.  At every selected time $t$, one common dataset estimates all $M^\eta_{C,D}(t)$, $(C,D)\in\cQ_k$, to uniform accuracy $\eps_M$ and failure probability at most $\delta$ using
        \begin{equation}
            \widetilde{\cO}_k\!\left(
                m^{\lfloor k/2\rfloor}
                \eps_M^{-2}
                \log\frac{2m}{\delta}
            \right)
            \label{eq:modewise-shadow-cost}
        \end{equation}
        samples. Its unbiased single-sample estimators have second moments at most $\cO_k(m^{\lfloor k/2\rfloor})$.
    \end{proposition}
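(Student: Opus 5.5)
The plan is to reduce to Lemma~\ref{lem:randomized-pairing-estimators}. Lemma~\ref{lem:modewise-dressed-curves} already writes each target curve as a normalized overlap $d^{-2}\tr[P_{C|D}\Phi_t^{[m]}(Q_D)]$ between two even Hermitian Majorana monomials on the $2m$-mode layered system. Each $Q_D$ is compatible with the fixed vertical input pairing. It therefore suffices to show three things about the random output pairing: for each query, an accepted set of pairings compatible with $P_{C|D}$ has probability $p_\tau\ge c_k m^{-\lfloor k/2\rfloor}$; on accepted settings the total number of nonidentity block factors is $w_*=\cO_k(1)$; and the union bound runs over $|\cQ_k|\cdot(\text{times})=\cO_k(m^k)$ queries, whose logarithm is $\cO_k(\log(2m/\delta))$. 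Substituting these into \eqref{eq:random-pairing-cost} gives \eqref{eq:modewise-shadow-cost} and the second-moment bound $p_\tau^{-1}7^{w_*}=\cO_k(m^{\lfloor k/2\rfloor})$. Unbiasedness follows directly from Proposition~\ref{prop:pair-block-process-shadow} applied to the channel $\Phi_t^{[m]}$ on $W=$ all $2m$ modes. The input axes average to the maximally mixed state, and the idle ancillas do not disturb it.

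The input side has weight at most $2|\supp_{\rm f}(D)|\le2k$ vertical blocks. For the output side, I would describe which modes carry a single Majorana in $P_{C|D}=i^p\Gamma_C\Xi_D$. The system mode $j_{\rm sys}$ carries one exactly when $q_j(C)=1$, and $j_{\rm aux}$ carries one exactly when $q_j(D)=1$. Let $Z:=\{j:q_j(C)\neq q_j(D)\}\subseteq\fsupp(C,D)$, so $|Z|\le k$ and $|Z|$ is even because $\pi(C)=\pi(D)$. On a vertical block, $P_{C|D}$ is block-even exactly when $j\notin Z$. Accordingly, I accept a setting when (i) every $j\in\fsupp(C,D)\setminus Z$ either lies outside $T$ or is matched suitably, and (ii) the indices of $Z$ are placed in $T$ and matched among themselves, each with the cross/straight orientation that pairs the lone odd Majorana of $j$ with the lone odd Majorana of its partner. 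The simplest accepted event is: $Z\subseteq T$, $(\fsupp(C,D)\setminus Z)\cap T=\varnothing$, the uniform matching of $T$ restricted to $Z$ equals one fixed perfect matching of $Z$, and each of the $|Z|/2$ orientations is correct. If $j\in Z$ has its odd label on the system side and $\ell\in Z$ on the auxiliary side, the cross pairing $\{j_{\rm sys},\ell_{\rm aux}\}$ works. If both are on the same side, the straight pairing works. In either case the complementary block receives an even number of labels.

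The main computation is the probability of this event for $m\ge4k$. The event $Z\subseteq T$ while the remaining at most $k$ support modes avoid $T$ has probability $\Omega_k(1)$, since $|T|=m_0\approx m/2$ and $m_0\le m-2k$. Conditioned on this, a uniform perfect matching of $T$ matches $Z$ internally according to a fixed matching with probability $\prod_{i=0}^{|Z|/2-1}(m_0-2i-1)^{-1}=\Theta_k(m^{-|Z|/2})\ge\Omega_k(m^{-\lfloor k/2\rfloor})$. The orientations contribute a factor $2^{-|Z|/2}$. The acceptance probability is known exactly as a combinatorial expression, so it can be used as $p_\tau$. Each accepted setting has at most $2k$ nonidentity output blocks, so $w_*\le4k$. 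For $m<4k$, a uniform perfect matching of all $2m=\cO_k(1)$ modes hits a fixed compatible matching with probability $\Omega_k(1)$. I expect the main care to be the sign bookkeeping $\sigma_{\mathfrak p}$, which is deterministic given the setting and therefore harmless, and the verification that the complementary blocks on $T\setminus Z$ carry only identity factors. This holds because $P_{C|D}$ acts trivially there, and the fermionic reordering within matched blocks only produces known signs.
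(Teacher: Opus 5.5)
Your proposal is correct and follows the paper's proof essentially step by step. You use Lemma~\ref{lem:modewise-dressed-curves} to reduce to Lemma~\ref{lem:randomized-pairing-estimators}, and you accept a setting when the mismatch set $Z$ (the paper's $\mathcal S_{C,D}$) lies in $T$ and is matched internally with the orientation joining the lone odd layer modes, while the doubly odd modes avoid $T$; this gives $p_\tau\ge C_k^{-1}m^{-|Z|/2}\ge C_k^{-1}m^{-\lfloor k/2\rfloor}$ and $w_*=\mathcal O_k(1)$. Your acceptance event is slightly smaller than the paper's (a fixed matching of $Z$ rather than any internal one, and all of $\fsupp(C,D)\setminus Z$ outside $T$ rather than only $\mathcal B_{C,D}$), and you overcount the input weight as $2k$ instead of $k$, but these changes affect only $k$-dependent constants.
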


    \begin{proof}
        Queries with $\pi(C)\neq\pi(D)$ are zero deterministically. Suppose first that $m\ge4k$. For a remaining query define
        \begin{equation*}
            \mathcal B_{C,D}
            :=
            \{j:q_j(C)=q_j(D)=1\},
            \qquad
            \mathcal S_{C,D}
            :=
            \{j:q_j(C)\neq q_j(D)\}.
        \end{equation*}
        Write $b=|\mathcal B_{C,D}|$ and $s=|\mathcal S_{C,D}|$.  Equality of the total parities makes $s$ even, and
        \begin{equation*}
            b+s\le|\fsupp(C,D)|\le k.
        \end{equation*}
        Consider the event that $\mathcal S_{C,D}\subseteq T$, that $T\cap\mathcal B_{C,D}=\varnothing$, that the random matching of $T$ pairs the $s$ mismatch indices among themselves, and that each resulting pair receives the unique orientation joining its two odd layer modes.  On this event $P_{C|D}$ is compatible.  For even $n\ge2$, write $(n-1)!!=(n-1)(n-3)\cdots1$, with $(-1)!!=1$. The event has probability
        \begin{equation}
            p^\circ_{C,D}
            =
            \frac{\binom{m-b-s}{m_0-s}}{\binom m{m_0}}
            \frac{(s-1)!!(m_0-s-1)!!}{(m_0-1)!!}
            2^{-s/2}
            \ge
            C_k^{-1}m^{-s/2}
            \ge
            C_k^{-1}m^{-\lfloor k/2\rfloor}.
            \label{eq:structured-pairing-hit}
        \end{equation}
        Here $C_k\ge1$ depends only on $k$. The first factor is bounded below by a positive constant depending only on $k$, since $m_0=\Theta(m)$, and the second is $\Theta_k(m^{-s/2})$.

        Apply Lemma~\ref{lem:randomized-pairing-estimators} with this event as the acceptance set. Only the classical acceptance rule depends on $(C,D)$, while the experimental settings serve all queries. The estimator is unbiased for the left-hand side of \eqref{eq:modewise-curve-identity}. The input $Q_D$ is active on at most $k$ vertical blocks. The output $P_{C|D}$ has at most two active layer modes per mode in $\fsupp(C,D)$, hence at most $2k$ active output blocks. The estimator's second moment is therefore at most
        \begin{equation*}
            (p^\circ_{C,D})^{-1}7^{3k}
            \le
            C_k m^{\lfloor k/2\rfloor}.
        \end{equation*}
        For $m<4k$, every even output monomial is compatible with some perfect matching of the $2m$ modes. There are only $\cO_k(1)$ such matchings, so accepting all compatible matchings gives the same second-moment bound. Lemma~\ref{lem:modewise-dressed-curves} and the simultaneous estimation guarantee of Lemma~\ref{lem:randomized-pairing-estimators} now prove \eqref{eq:modewise-shadow-cost}.
    \end{proof}

\section{Occupation-parity sample lower bounds}\label{sec:lower-bounds}

    For a mode $u$ and a nonempty mode set $S$, define
    \begin{equation}
        Z_u:=-i\gamma_{2u-1}\gamma_{2u}=\Id-2n_u,
        \qquad
        A_S:=\bigcup_{u\in S}\{2u-1,2u\},
        \qquad
        Z_S:=\prod_{u\in S}Z_u=(-1)^{|S|}\Gamma_{A_S}.
        \label{eq:occupation-parity-packing}
    \end{equation}
    These are commuting even, number-conserving observables. Under the Jordan--Wigner mapping, $Z_u$ becomes the single-qubit Pauli operator $\sigma_u^z$, so $Z_S$ becomes $\prod_{u\in S}\sigma_u^z$ with the same support \cite{SzalayEtAl2021}. We adapt the qubit testing arguments of \cite[Section~7]{MobusEtAl2026} to arbitrary fresh system--reference inputs and joint measurements, and use single-mode terms to prove the matching geometric diamond bound. These lower bounds also apply under parity superselection, which only restricts the allowed experiments.

    In sample $j$, the learner may prepare an arbitrary fresh system--reference state, evolve the system for a chosen time $0\le t_j\le t_{\max}$, and make a joint measurement. Later choices may depend on the preceding measurement outcomes. For a protocol using at most $N$ samples, define its worst-case quadratic exposure and total evolution time by
    \begin{equation*}
        Q_{\rm evol}:=\sup_{\rm transcripts}\sum_jt_j^2,
        \qquad
        T_{\rm evol}^{\max}:=\sup_{\rm transcripts}\sum_jt_j .
    \end{equation*}
    Then $Q_{\rm evol}\le Nt_{\max}^2$ and $Q_{\rm evol}\le t_{\max}T_{\rm evol}^{\max}$. For every Hermitian $A$,
    \begin{equation}
        \|-i[A,\cdot]\|_\diamond
        =
        \lambda_{\max}(A)-\lambda_{\min}(A),
        \label{eq:commutator-spectral-diameter}
    \end{equation}
    because centering $A$ gives the upper bound and the coherence between its extreme eigenvectors attains it. Write $\operatorname{TV}$ for total variation distance and $\operatorname{BC}(P,Q):=\sum_y\sqrt{P(y)Q(y)}$ for the Bhattacharyya coefficient, replacing the sum by an integral for continuous outcomes.

    \begin{lemma}[Adaptive exposure bound]\label{lem:adaptive-exposure}
        Let $\theta\in\mathbb R\setminus\{0\}$ and let $H_0,W$ be Hermitian, with $W^2=\Id$ and $[H_0,W]=0$. For $H_\pm=H_0\pm\theta W$, let $P_\pm$ be the transcript distributions of any protocol described above. If $Q_{\rm evol}<1/(64\theta^2)$, then
        \begin{equation*}
            \operatorname{TV}(P_+,P_-)
            \le2|\theta|\sqrt{Q_{\rm evol}}.
        \end{equation*}
    \end{lemma}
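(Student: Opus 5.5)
We bound the total variation through the Bhattacharyya coefficient and a per-sample fidelity estimate, chained along the adaptive transcript. Fix a protocol and, for each sample $j$, condition on the previous outcomes; these fix the reference input $\rho_j$, the time $t_j$ and the joint POVM. Write the single-sample outcome distributions under $H_\pm$ as $p^{(j)}_\pm$. Since $[H_0,W]=0$, we have
\begin{equation*}
e^{-itH_\pm}=e^{-itH_0}e^{\mp it\theta W},
\end{equation*}
and $e^{\mp it\theta W}=\cos(t\theta)\Id\mp i\sin(t\theta)W$ because $W^2=\Id$. After purifying $\rho_j$ into the reference, the two output pure states $\psi_\pm$ share the common unitary $e^{-itH_0}$, which cancels in their overlap. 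This gives
\begin{equation*}
|\langle\psi_+|\psi_-\rangle|=|\langle\phi|e^{2it\theta W}|\phi\rangle|\ge\cos(2t\theta)
\end{equation*}
whenever $2|t\theta|\le\pi/2$, since $\langle\phi|W|\phi\rangle$ is real and the real part of $e^{2it\theta w}$, $w=\pm1$, is $\cos(2t\theta)$. Measurement cannot decrease fidelity, so the Bhattacharyya coefficient of the conditional outcome distributions satisfies
\begin{equation*}
\operatorname{BC}\bigl(p^{(j)}_+,p^{(j)}_-\bigr)\ge|\langle\psi_+|\psi_-\rangle|\ge\cos(2t_j\theta)\ge1-2t_j^2\theta^2 .
\end{equation*}
For mixed inputs, concavity of fidelity or purification into the reference handles the general case.

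Next we chain over the adaptive transcript. For adaptive sequential distributions the Bhattacharyya coefficient does not factor directly. Instead, we use the standard recursion: integrating over the last outcome first gives
\begin{equation*}
\operatorname{BC}(P_+,P_-)=\mathbb E\Bigl[\prod_j\operatorname{BC}\bigl(p^{(j)}_+,p^{(j)}_-\mid\text{past}\bigr)\Bigr]
\end{equation*}
under the geometric-mean measure $\sqrt{P_+P_-}$, normalized appropriately. The cleanest route is an induction on $N$, bounding the inner integral pointwise by its infimum over pasts. Because $Q_{\rm evol}$ is a supremum over transcripts, each pointwise product satisfies
\begin{equation*}
\prod_j\bigl(1-2t_j^2\theta^2\bigr)\ge1-2\theta^2\sum_jt_j^2\ge1-2\theta^2Q_{\rm evol}.
\end{equation*}
Here the hypothesis $Q_{\rm evol}<1/(64\theta^2)$ keeps every $2|t_j\theta|\le1/4<\pi/2$ and every factor positive. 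Hence $\operatorname{BC}(P_+,P_-)\ge1-2\theta^2Q_{\rm evol}$.

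Finally, convert to total variation with $\operatorname{TV}\le\sqrt{1-\operatorname{BC}^2}$. Since $1-\operatorname{BC}^2\le2(1-\operatorname{BC})\le4\theta^2Q_{\rm evol}$, this gives $\operatorname{TV}(P_+,P_-)\le2|\theta|\sqrt{Q_{\rm evol}}$.

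The main obstacle is the adaptive chaining step. We must ensure that the pointwise lower bound on each conditional coefficient, which depends on the random past through $t_j$, still yields a bound governed by the worst-case $\sum_jt_j^2$. We would do the induction from the last sample backwards: for each prefix, the residual coefficient is at least $\prod$ over the remaining $(1-2t_j^2\theta^2)$ along the worst continuation. Then $\operatorname{BC}$ of the joint distribution equals $\int\sqrt{P_+P_-}$ of the prefix times at least this residual, and the prefix coefficient itself is at most one but is bounded below inductively.
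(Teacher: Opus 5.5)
Your proposal is correct and follows the paper's proof. The relative unitary $e^{2i\theta tW}$ gives conditional overlap at least $\cos(2|\theta|t_j)$, monotonicity of fidelity under measurement passes this to the Bhattacharyya coefficient, backward iteration along the adaptive transcript yields $\operatorname{BC}(P_+,P_-)\ge\inf_{\boldsymbol y}\prod_j\cos(2|\theta|t_j)\ge1-2\theta^2Q_{\rm evol}$, and $\operatorname{TV}\le\sqrt{2(1-\operatorname{BC})}$ concludes. Your expectation identity is exact if each conditional factor $\sqrt{p^{(j)}_+p^{(j)}_-}$ is normalized separately, dividing by $\operatorname{BC}(p^{(j)}_+,p^{(j)}_-\mid\text{past})$, rather than normalizing the joint geometric mean; with that normalization it gives the infimum bound directly, so the chaining step you flag as the main obstacle is immediate.
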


    \begin{proof}
        Condition on the same preceding outcomes under both hypotheses. The relative unitary is $e^{2i\theta tW}$, whose eigenvalues lie at $e^{\pm2i\theta t}$. Every purified input therefore has output overlap of absolute value at least $\cos(2|\theta|t)$, and measurement cannot decrease this root fidelity. The exposure bound ensures $2|\theta|t_j<1/4$.

        Let $\boldsymbol y$ denote a complete transcript, adding zero-time samples after an early stop. Iterating the conditional fidelity bound backwards, then using $\cos x\ge1-x^2/2$ and $\prod_j(1-a_j)\ge1-\sum_j a_j$ for $a_j\in[0,1]$, gives
        \begin{equation*}
            \operatorname{BC}(P_+,P_-)
            \ge
            \inf_{\boldsymbol y}\prod_j
            \cos\!\bigl(2|\theta|t_j(\boldsymbol y_{<j})\bigr)
            \ge1-2\theta^2Q_{\rm evol}.
        \end{equation*}
        The claim follows from the classical fidelity bound
        \begin{equation*}
            \operatorname{TV}(P_+,P_-)
            \le
            \sqrt{2\bigl(1-\operatorname{BC}(P_+,P_-)\bigr)}.
        \end{equation*}
    \end{proof}

    \begin{proposition}[Coefficient lower bound]\label{prop:fermionic-coefficient-lower}
        Fix a nonempty $S\subseteq[m]$ with $|S|\le k$, and let $0<\eps_1\le\bar\alpha/4$. Uniformly estimating $h_{A_S}=\theta$ to error $\eps_1$, with success probability at least $2/3$, for the family
        \begin{equation*}
            \cL_\theta=-i[\theta\Gamma_{A_S},\cdot],
            \qquad |\theta|\le\bar\alpha/2,
        \end{equation*}
        requires $Q_{\rm evol}=\Omega(\eps_1^{-2})$. Consequently,
        \begin{equation}
            N=\Omega\!\left(\frac1{t_{\max}^2\eps_1^2}\right),
            \qquad
            T_{\rm evol}^{\max}=\Omega\!\left(\frac1{t_{\max}\eps_1^2}\right).
            \label{eq:single-fermionic-lower}
        \end{equation}
        The family has weighted interaction strength at most $\bar\alpha$. Choosing $S$ to be a singleton places it in the geometric class.
    \end{proposition}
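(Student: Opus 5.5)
The plan is a two-point testing argument. It combines Lemma~\ref{lem:adaptive-exposure} with the standard reduction from estimation to hypothesis testing. Set $W:=\Gamma_{A_S}$ and $H_0:=0$. Then $W$ is a Hermitian unitary, so $W^2=\Id$, and $[H_0,W]=0$ holds trivially. Take the two hypotheses $\theta_\pm=\pm\theta_0$ with $\theta_0:=2\eps_1$, and write $\cL_\pm=-i[\pm\theta_0 W,\cdot]$.

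First, I check admissibility. Since $\eps_1\le\bar\alpha/4$, we have $|\theta_0|=2\eps_1\le\bar\alpha/2$. The term is supported on the single region $S$, with $|S|\le k$. Its adjoint $\cL_\pm^\dagger=i[\pm\theta_0W,\cdot]$ has induced Hilbert--Schmidt norm at most $2|\theta_0|\le\bar\alpha$. Because a single term contains every $u\in S$, the weighted strength \eqref{eq:weighted-strength} is at most $\bar\alpha$. The generator is even, number conserving (since $W=\pm Z_S$ by \eqref{eq:occupation-parity-packing}), traceless, and lies in $\mathfrak L_k$. For a singleton $S$ it has diameter zero, so it lies in the geometric class.

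Second, I reduce estimation to testing. Suppose an estimator $\widehat\theta$ achieves $|\widehat\theta-\theta|\le\eps_1$ with probability at least $2/3$ under both hypotheses. The intervals $[\theta_0-\eps_1,\theta_0+\eps_1]$ and $[-\theta_0-\eps_1,-\theta_0+\eps_1]$ are disjoint, since $\theta_0>\eps_1$. Deciding the sign of $\widehat\theta$ therefore distinguishes $P_+$ from $P_-$ with success probability at least $2/3$, which forces $\operatorname{TV}(P_+,P_-)\ge1/3$.

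Third, I apply Lemma~\ref{lem:adaptive-exposure} by contradiction. Assume $Q_{\rm evol}<c\,\eps_1^{-2}$ for a small absolute constant $c$; taking $c=1/(256\cdot 4)$ suffices. Then $Q_{\rm evol}<1/(64\theta_0^2)=1/(256\eps_1^2)$, so the lemma applies and gives
\[
\operatorname{TV}(P_+,P_-)\le2\theta_0\sqrt{Q_{\rm evol}}=4\eps_1\sqrt{Q_{\rm evol}}<4\sqrt c<1/3.
\]
This contradicts the previous step, so $Q_{\rm evol}\ge c\,\eps_1^{-2}=\Omega(\eps_1^{-2})$. The consequences \eqref{eq:single-fermionic-lower} then follow from $Q_{\rm evol}\le Nt_{\max}^2$ and $Q_{\rm evol}\le t_{\max}T^{\max}_{\rm evol}$.

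The only delicate point is compatibility of constants. The lemma's hypothesis $Q_{\rm evol}<1/(64\theta^2)$ must be implied by the contradiction assumption, which is why the constant $c$ has to be taken small enough, as chosen above. Everything else is bookkeeping.
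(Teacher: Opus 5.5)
Your proof is correct and matches the paper's argument: the same two-point family $\theta=\pm2\eps_1$ with $W=\Gamma_{A_S}$, the estimation-to-testing reduction forcing $\operatorname{TV}(P_+,P_-)\ge1/3$, and Lemma~\ref{lem:adaptive-exposure} to reach a contradiction when $Q_{\rm evol}$ is small. The extra factor $4$ in your constant $c$ is harmless but unnecessary: the paper assumes $Q_{\rm evol}<1/(256\eps_1^2)$ directly, which already gives $\operatorname{TV}(P_+,P_-)<1/4<1/3$.
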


    \begin{proof}
        Each generator is a single local term of norm $2|\theta|\le\bar\alpha$. The two parameters $\theta=\pm2\eps_1$ belong to the stated interval, and accurate estimation distinguishes them with success probability at least $2/3$. Hence their transcript laws have total variation at least $1/3$. If $Q_{\rm evol}<1/(256\eps_1^2)$, Lemma~\ref{lem:adaptive-exposure} instead gives
        \begin{equation*}
            \operatorname{TV}(P_+,P_-)
            \le4\eps_1\sqrt{Q_{\rm evol}}<\frac14,
        \end{equation*}
        a contradiction. The sample and total-time bounds follow from the two bounds on $Q_{\rm evol}$ above.
    \end{proof}

    \begin{proof}[Proof of Corollary~\ref{cor:geometric-diamond-lower}]
        We use a family of independent single-mode Hamiltonian coefficients. Set
        \begin{equation*}
            \zeta:=\frac{8\eps_\diamond}{m},
            \qquad
            H_v:=\zeta\sum_{u=1}^m v_uZ_u,
            \qquad v\in\{\pm1\}^m,
        \end{equation*}
        and let $\cL_v=-i[H_v,\cdot]$. Its singleton terms $\cL_{u,v}=-i[\zeta v_uZ_u,\cdot]$ have diameter zero, and the assumed accuracy range gives
        \begin{equation*}
            \alpha(\{\cL_{u,v}\})=2\zeta\le\frac{\bar\alpha}{4}.
        \end{equation*}
        Since the $Z_u$ have independently selectable eigenvalues, \eqref{eq:commutator-spectral-diameter} gives, with $d_H$ denoting Hamming distance,
        \begin{equation}
            \|\cL_v-\cL_w\|_\diamond
            =4\zeta d_H(v,w).
            \label{eq:singleton-exact-diamond-distance}
        \end{equation}

        Accurate generator recovery would identify most signs. Draw $V$ uniformly from $\{\pm1\}^m$ and choose the family member $\cL_{\widehat V}$ closest to the estimated generator in diamond norm. On the success event, the triangle inequality and \eqref{eq:singleton-exact-diamond-distance} give $d_H(V,\widehat V)\le m/16$. Therefore
        \begin{equation*}
            \mathbb E d_H(V,\widehat V)
            \le\frac23\frac m{16}+\frac13m=\frac{3m}{8}.
        \end{equation*}
        For each coordinate $u$, let $P_{u,\pm}$ be the transcript law averaged over the other coordinates, conditional on $V_u=\pm1$. Binary testing bounds the error probability for that coordinate below by $(1-\operatorname{TV}(P_{u,+},P_{u,-}))/2$. Summing over $u$ yields
        \begin{equation*}
            \mathbb E d_H(V,\widehat V)
            \ge\frac m2\left(1-\frac1m\sum_{u=1}^m
            \operatorname{TV}(P_{u,+},P_{u,-})\right).
        \end{equation*}
        Thus the average variation is at least $1/4$. Neighboring Hamiltonians have the form $H_0\pm\zeta Z_u$ with $[H_0,Z_u]=0$. If $Q_{\rm evol}<1/(64\zeta^2)$, Lemma~\ref{lem:adaptive-exposure} bounds every neighboring transcript variation strictly below $1/4$. Convexity gives the same bound for the coordinate mixtures, a contradiction. Hence
        \begin{equation}
            Q_{\rm evol}\ge\frac1{64\zeta^2}
            =\frac{m^2}{4096\eps_\diamond^2}.
            \label{eq:geometric-quadratic-exposure-lower}
        \end{equation}
        Finally, $Q_{\rm evol}\le Nt_{\max}^2$ and $Q_{\rm evol}\le t_{\max}T_{\rm evol}^{\max}$ give the claimed sample and total-time bounds.
    \end{proof}

\section{Technical access lemmas}\label{app:technical}

\subsection{Even preparations and measurements from Gaussian control}

    Let $S\subseteq[m]$ be a set of system modes and let $x_{\rm mark}\notin S$ be a marker mode with Majoranas $\eta,\eta'$. A fermionic Gaussian unitary on $S\cup\{x_{\rm mark}\}$ acts on Majoranas by an orientation-preserving real orthogonal transformation and preserves total parity \cite{JozsaMiyake2008,WanEtAl2023}. Local fermionic Gaussian access means that the experiment can implement these unitaries, prepare occupation states with chosen classical probabilities, and measure the occupation parities $-i\gamma_{2j-1}\gamma_{2j}$ and $-i\eta\eta'$.

    \begin{proposition}[Dressed preparations and measurements from Gaussian control]\label{prop:even-clifford-eigenaccess}
        For every $(C,D)$ with $|\fsupp(C,D)|\le k$, local fermionic Gaussian access on $\fsupp(C,D)\cup\{x_{\rm mark}\}$ allows preparation of $\rho_D^\pm$ in \eqref{eq:physical-inputs} for $D\neq\varnothing$ and measurement of $J_\eta(\Gamma_C)$, with all other modes initialized in the maximally mixed state. For $D=\varnothing$, use the maximally mixed input.
    \end{proposition}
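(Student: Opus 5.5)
The plan is to diagonalize each required even operator by a Gaussian unitary into a signed product of occupation parities, and then to realize the eigenspace states by classical sampling of occupation strings. Fix $(C,D)$ and set $U:=\fsupp(C,D)\cup\{x_{\rm mark}\}$. The operators we need are $J_\eta(\Gamma_C)$ and $J_\eta(\Gamma_D)$. Each is, up to a known phase, an even Majorana monomial $\Gamma_E$ on $U$ with $|E|=2r$ even. When $\pi(\cdot)=1$, the label set $E$ includes the marker Majorana $\eta$, and the factor $i$ together with the reordering phase from \eqref{eq:union-product} is absorbed into the sign.

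\emph{Step 1 (Gaussian diagonalization).} Choose a signed permutation of the Majoranas on $U$ that maps the labels of $E$ onto the pairs $\{2j-1,2j\}$ of $r$ distinct modes $j\in U$. A signed permutation is real orthogonal. If its determinant is $-1$, flip the sign of one Majorana outside the image pairs if one exists; otherwise flip one inside a pair, which only changes the overall sign. The result lies in $SO(2|U|)$ and is therefore implemented by a parity-preserving Gaussian unitary $V$. Under conjugation, $V\Gamma_E V^\dagger=\sigma\prod_{j}Z_j$ with a computable sign $\sigma\in\{\pm1\}$. This uses the fact that a product of Majoranas maps to the corresponding product and that $i\gamma_{2j-1}\gamma_{2j}=-Z_j$. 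There are $|U|\le k+1$ modes, and $2r\le 2|U|$ holds because $E\subseteq\cM(U)$.

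\emph{Step 2 (measurement).} To measure $J_\eta(\Gamma_C)$, apply $V_C$, measure the occupation parities $Z_j$ on the $r$ image modes, and output $\sigma\prod_j Y_j$. By construction this implements the projective measurement of the Hermitian unitary $J_\eta(\Gamma_C)$, whose outcome is $\pm1$.

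\emph{Step 3 (preparation).} The state $\rho_D^\pm=(\Id\pm J_\eta(\Gamma_D))/d_{\rm ext}$ is the normalized projector onto the $\pm1$ eigenspace, tensored with the maximally mixed state elsewhere. Its Gaussian transform $V_D\rho_D^\pm V_D^\dagger$ is the uniform mixture of occupation basis states on $U$ whose occupation parities satisfy $\sigma\prod_j z_j=\pm1$. To prepare it, sample the image-mode occupations uniformly subject to this product constraint and the remaining modes of $U$ uniformly. Apply $V_D^\dagger$, and leave all modes outside $U$ maximally mixed. Since $V_D$ commutes with operators outside $U$ up to parity, and the maximally mixed state is even, the prepared global state equals $\rho_D^\pm$ exactly. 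For $D=\varnothing$, the maximally mixed input suffices.

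\emph{Main obstacle.} The hardest part is the sign and orientation bookkeeping in Step 1. It requires checking that the permutation can always be made orientation-preserving, and that the phases from $i^{|A|(|A|-1)/2}$, from the dressing factor $i$, and from Majorana reordering combine into a real sign $\sigma$. Hermiticity of both sides forces $\sigma\in\{\pm1\}$. Tracking $\sigma$ explicitly is routine.
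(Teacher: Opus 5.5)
Your proposal is correct and follows essentially the same route as the paper's proof. In both, an orientation-preserving signed permutation, implemented by a parity-preserving Gaussian unitary, maps the dressed monomial to a known sign times a product of occupation parities, which are then measured; uniform sampling of occupation strings with the prescribed signed parity product, followed by the inverse unitary, prepares $\rho_D^\pm$. Your determinant fix spells out orientation bookkeeping that the paper leaves implicit. In fact $\eta'$ never lies in $E$, so a mode outside the image pairs always exists and your "otherwise" branch is never needed.
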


    \begin{proof}
        Let $P=J_\eta(\Gamma_D)$ with $D\subseteq\cM(S)$.  If $D=\varnothing$, the required input is the maximally mixed state.  Otherwise $P$ is an even Hermitian Majorana monomial, so $P=\pm i^r\widetilde\gamma_1\cdots\widetilde\gamma_{2r}$ for distinct Majoranas in $S\cup\{x_{\rm mark}\}$.  An orientation-preserving signed permutation of the Majoranas, implemented by a Gaussian unitary, maps $P$ to a known sign times $\prod_{q=1}^r(-i\gamma_{2q-1}\gamma_{2q})$, after relabeling local modes and possibly using the marker pair.  Multiplying these parity outcomes and the known sign measures $P$.

        Uniformly sampling occupation strings whose parity product, multiplied by the known sign, equals the desired eigenvalue prepares the normalized eigenspace projector. Applying the inverse Gaussian unitary and initializing the remaining modes in the maximally mixed state gives $\rho_D^\pm$. The same rotation-and-measurement construction applies to $J_\eta(\Gamma_C)$.
    \end{proof}

\subsection{Derivative recovery from even CAR support growth}

    The dressed and calibration curves above are instances of the following derivative-recovery result.

    \begin{lemma}[Even CAR support growth]\label{lem:even-car-support-growth}
        Let $\cL_e$ be a local term from Definition~\ref{def:localcone}, and let $\widetilde\cL_e$ denote its fermionic extension, possibly with no added modes. Let $X$ be an even observable supported on a mode set $U$, which may include idle modes. Then $\widetilde\cL_e^\dagger(X)=0$ if $e\cap U=\varnothing$. Otherwise, $\widetilde\cL_e^\dagger(X)$ is even, supported on $U\cup e$, and
        \begin{equation*}
            \norm{\widetilde\cL_e^\dagger(X)}_2
            \le
            \norm{\cL_e^\dagger}_{2\to2}\norm{X}_2 .
        \end{equation*}
    \end{lemma}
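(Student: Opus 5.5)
The plan is to compute $\widetilde\cL_e^\dagger(X)$ directly from the local GKSL form \eqref{eq:local-gksl} and reduce each assertion to a property of the Majorana multipliers, which are all supported on $e$. Taking Hilbert--Schmidt adjoints, the Heisenberg generator reads
\begin{equation*}
    \cL_e^\dagger(X)
    =
    i[H_e,X]
    +\sum_{p=0}^1\sum_{A,B\in\cI_{e,p}}\overline{K^{(e,p)}_{A,B}}
    \left(\Gamma_BX\Gamma_A-\tfrac12\{\Gamma_B\Gamma_A,X\}\right)
\end{equation*}
up to the conjugation convention for $K$, and the fermionic extension uses the same multipliers. Every multiplier lies in $\mathfrak A_e$. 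Hence $\widetilde\cL_e^\dagger(X)$ lies in the algebra generated by $\mathfrak A_e$ and $\mathfrak A_U$, which is $\mathfrak A_{U\cup e}$; this gives the support claim. Evenness follows from parity covariance: each term $\Gamma_BX\Gamma_A$ with $\pi(A)=\pi(B)$ applied to even $X$ is even, as are the Hamiltonian and anticommutator terms, because $H_e$ and $\Gamma_B\Gamma_A$ are even.

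For the vanishing claim when $e\cap U=\varnothing$, I would invoke the graded commutation rule. Since $X$ is even, it commutes with every homogeneous operator on the disjoint region $e$, in particular with $H_e$, $\Gamma_A$, $\Gamma_B$ and $\Gamma_B\Gamma_A$. Thus $[H_e,X]=0$ and $\Gamma_BX\Gamma_A=X\Gamma_B\Gamma_A$, so the dissipator becomes $X\Gamma_B\Gamma_A-\frac12(\Gamma_B\Gamma_A X+X\Gamma_B\Gamma_A)=0$. Equivalently, $\widetilde\cL_e^\dagger(X)=X\,\widetilde\cL_e^\dagger(\Id)=0$ by unitality of the Heisenberg generator, which is the same mechanism used in Appendix~\ref{app:local-stability-proof}. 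Here the signs $(-1)^{pq}$ of Lemma~\ref{lem:graded-factor} do not appear, because $X$ is even ($q=0$).

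For the norm bound, I would apply Lemma~\ref{lem:graded-diamond-amplification} to the parity-covariant map $\cL_e^\dagger$. Its fermionic extension has the same induced $2\to2$ norm as the local map on its support algebra. The extension of the adjoint equals the adjoint of the extension, since the basis maps $\Gamma_A(\cdot)\Gamma_B$ have adjoints $\Gamma_A(\cdot)\Gamma_B$ with the indices transposed in the coefficients. This gives $\|\widetilde\cL_e^\dagger(X)\|_2\le\|\cL_e^\dagger\|_{2\to2}\|X\|_2$. The main obstacle is this adjoint/extension compatibility when idle modes sit between system modes in the ordering. I would handle it by reordering, as in the proof of Lemma~\ref{lem:graded-diamond-amplification}, so that the extension is represented as $\cL_e\otimes\mathrm{id}$, for which $(\cL_e\otimes\mathrm{id})^\dagger=\cL_e^\dagger\otimes\mathrm{id}$ and the norm identity are standard.
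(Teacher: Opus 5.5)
Your proof is correct and follows essentially the same route as the paper, which argues with parity-homogeneous jump operators instead of the Kossakowski-block expansion but otherwise uses the same commutation, support, and parity-covariance steps (the Heisenberg form is $\sum K_{A,B}(\Gamma_BX\Gamma_A-\frac12\{\Gamma_B\Gamma_A,X\})$ without the conjugate, though nothing depends on this). For the norm bound, the paper applies Lemma~\ref{lem:graded-diamond-amplification} to $\cL_e$ itself and uses $\|\mathcal T^\dagger\|_{2\to2}=\|\mathcal T\|_{2\to2}$ on both sides, so the adjoint/extension compatibility you single out as the main obstacle is not needed; your coefficientwise argument for it is nonetheless valid and needs no reordering.
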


    \begin{proof}
        Choose an even Hamiltonian and parity-homogeneous jumps supported on $e$. If $e\cap U=\varnothing$, the even operator $X$ commutes with all these operators, so the Hamiltonian commutator and each dissipative term $L^\dagger XL-\frac12\{L^\dagger L,X\}$ vanish. Otherwise, multiplication enlarges the support only to $U\cup e$. The output remains even because $\widetilde\cL_e^\dagger$ is parity covariant. The norm bound follows from Lemma~\ref{lem:graded-diamond-amplification} and invariance of the induced Hilbert--Schmidt norm under adjoints.
    \end{proof}

    \begin{proposition}[Robust derivative recovery from even CAR curves]\label{prop:fermionic-derivative-poly}
        Fix $k$, $\bar\alpha>0$, and a corruption fraction $\beta\in[0,1/2)$, with $\beta=0$ for uncorrupted data. Suppose $\alpha(\{\cL_e\})\le\bar\alpha$. Consider a finite family of curves
        \begin{equation*}
            f_{X,Y}(t)
            :=
            D^{-1}\tr\!\left[Y e^{t\cL}(X)\right]
        \end{equation*}
        where $D$ is the Hilbert-space dimension, including any idle modes, $\cL$ is a parity-covariant $k$-local generator or its fermionic extension, and $X,Y$ are even Hermitian Majorana monomials supported on $\cO_k(1)$ modes. This includes all dressed-transfer and marker-pair curves used above. Set $T=(4k\bar\alpha)^{-1}$. For every $0<\eps_L\le\bar\alpha$, one can choose a degree
        \begin{equation*}
            \ell=\cO_{k,\beta}\!\left(1+\log\!\left(1+\frac{1}{T\eps_L}\right)\right)
        \end{equation*}
        and a Chebyshev partition with endpoints $T(1+\cos(\pi j/J))/2$, $0\le j\le J$, where $J=\lceil c_\beta\ell\rceil$ for a sufficiently large constant $c_\beta$. Choose $\lceil\log(\ell+1)\rceil$ fixed interior times in each interval, common to all curves. This gives $\cO_{k,\beta}(\ell\log(\ell+1))$ times. For each curve, assume that at most a $\beta$ fraction of its estimates in each interval are arbitrarily corrupted, and all other estimates have error at most
        \begin{equation}
            \eps_M
            \le
            \frac{T\eps_L}{4C_\beta\ell^2},
            \label{eq:fermionic-curve-precision}
        \end{equation}
        where $C_\beta\ge1$ is a regression constant and $C_0$ denotes its value at $\beta=0$. Robust polynomial regression returns $\widehat p_{X,Y}$ such that $\widehat f_{X,Y}'(0):=\widehat p_{X,Y}'(0)$ satisfies
        \begin{equation*}
            \left|
            \widehat f_{X,Y}'(0)-f_{X,Y}'(0)
            \right|
            \le
            \eps_L
        \end{equation*}
        simultaneously for all curves whenever these data conditions hold.
    \end{proposition}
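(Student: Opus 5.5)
The proof splits into an analytic part and a statistical part: bound the Taylor coefficients of every curve $f_{X,Y}$ so that a low-degree polynomial approximates it on $[0,T]$, then recover that polynomial by robust regression on the Chebyshev grid and differentiate at zero using a Markov-type inequality.

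\textbf{Step 1: Taylor coefficients.} I expand
\begin{equation*}
f_{X,Y}(t)=\sum_{n\ge0}\frac{t^n}{n!}\,D^{-1}\tr\!\left[(\cL^\dagger)^n(Y)\,X\right].
\end{equation*}
I write $\cL^\dagger=\sum_e\widetilde\cL_e^\dagger$ and expand $(\cL^\dagger)^n(Y)$ as a sum over sequences $(e_1,\ldots,e_n)$. By Lemma~\ref{lem:even-car-support-growth}, a sequence contributes only if each $e_j$ meets the current support $U_{j-1}:=\fsupp(Y)\cup e_1\cup\cdots\cup e_{j-1}$. Since $|e_j|\le k$, this support has at most $|\fsupp(Y)|+(j-1)k$ modes. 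Summing over terms meeting $U_{j-1}$, with $\sum_{e\ni u}\|\cL_e^\dagger\|_{2\to2}\le\bar\alpha$, bounds the weight at step $j$ by $(|\fsupp(Y)|+(j-1)k)\bar\alpha$.

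The normalized Hilbert--Schmidt norm of each output is controlled by the same lemma, starting from $D^{-1/2}\|Y\|_2=1$. Cauchy--Schwarz with $D^{-1/2}\|X\|_2=1$ then gives
\begin{equation*}
|c_n|\le\frac{\bar\alpha^n}{n!}\prod_{j=0}^{n-1}\bigl(c+jk\bigr)\le C\,(2k\bar\alpha)^n\quad\text{with } c=\cO_k(1).
\end{equation*}
At $t\le T=(4k\bar\alpha)^{-1}$ this gives geometric decay $\lesssim C\,2^{-n}$, up to polynomial factors in $n$ from $c$. Truncating at degree $\ell$ therefore leaves a uniform error $\cO_k(2^{-\ell}\ell^{c})$ on $[0,T]$.

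\textbf{Step 2: derivative of the truncation.} The derivative at zero of the truncated polynomial is exact. I choose $\ell$ so that the uniform truncation error, multiplied by the Markov factor $2\ell^2/T$ used in Step 3, is at most $\eps_L/4$. This yields $\ell=\cO_k(1+\log(1+1/(T\eps_L)))$.

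\textbf{Step 3: robust regression and Markov inequality.} On the Chebyshev partition with $\lceil\log(\ell+1)\rceil$ times per interval, I take the median in each interval, which discards a $\beta<1/2$ fraction of corrupted points. Following the robust polynomial regression in \cite{MobusEtAl2026,StilckFranca2024}, I then fit $\widehat p$ in $\ell_\infty$ over these values.

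The key property is a uniform-norm stability bound for degree-$\ell$ polynomials sampled on this Chebyshev-adapted grid: an $\ell_\infty$ error $\eta$ at the nodes gives uniform error at most $C_\beta\eta$ on $[0,T]$. Under \eqref{eq:fermionic-curve-precision}, $\eta$ equals $\eps_M$ plus the truncation error. Markov's inequality on $[0,T]$ gives $|q'(0)|\le(2\ell^2/T)\|q\|_\infty$ for any degree-$\ell$ polynomial $q$. Applying it to $q=\widehat p-p_\ell$, where $p_\ell$ is the truncated Taylor polynomial, gives error at most $2C_\beta\ell^2(\eps_M+\text{trunc})/T\le\eps_L/2+\eps_L/4$. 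Adding the exactness of $p_\ell'(0)$ from Step 2 completes the bound. Since the times and the error argument are common to all curves, the guarantee holds simultaneously.

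\textbf{Main obstacle.} I expect Step 1 to be the hardest: it needs the support-growth counting to give $\bar\alpha^n\prod_j(c+jk)$ rather than a bound growing with the system size $m$. It must also be carried out under the fermionic extension with idle modes, where Lemma~\ref{lem:graded-diamond-amplification} keeps the norms unchanged. The grid-stability constant $C_\beta$ in Step 3 I would import from the cited regression lemma rather than reprove.
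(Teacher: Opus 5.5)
Your overall route matches the paper's: Taylor truncation controlled by even CAR support growth, robust regression on the Chebyshev grid, and Markov's inequality $|q'(0)|\le(2\ell^2/T)\|q\|_\infty$. Step~1 is the explicit form of the Taylor-tail bound the paper imports from \cite{MobusEtAl2026}. With $a=\max\{1,\lceil|\fsupp(Y)|/k\rceil\}$, your product bound gives $|c_n|\le k^n\bar\alpha^n\binom{a+n-1}{n}\le 2^{a-1}(2k\bar\alpha)^n$. This is exactly the paper's tail $\tau_\ell=2^{a-\ell-1}$ on $[0,T]$, with no polynomial loss. (In Step~2, choose $\ell$ with $C_\beta$ included, $2C_\beta\ell^2\tau_\ell/T\le\eps_L/4$, so that it matches Step~3; the order of $\ell$ is unchanged.)

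The gap is Step~3 as you describe it. The $\lceil\log(\ell+1)\rceil$ times in a Chebyshev interval are distinct, so the median of the \emph{raw} estimates there mixes values $f_{X,Y}(t_i)$ taken at different times. Across an interior interval of length $\Theta(T/\ell)$, the curve generically moves by order $|f'_{X,Y}|\,T/\ell$. For a decay at rate $\sim\bar\alpha$ this is about $1/(k\ell)$, which does not shrink with $\eps_L$. You, however, need node accuracy $\eps_M\le T\eps_L/(4C_\beta\ell^2)$. Whatever time you attach to such a median, your premise of ``$\ell_\infty$ error $\eta=\eps_M+\tau_\ell$ at the nodes'' fails, and the resulting derivative error does not vanish as $\eps_L\to0$.

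The robust regression the paper imports (via \cite[Lemma~4.4]{MobusEtAl2026}) instead works with residuals of candidate polynomials. For example, accept any $\widehat p$ of degree at most $\ell$ such that, in every interval $I_j$, at least half of the residuals $|y_i-\widehat p(t_i)|$ are at most $\eta$; this is a median-residual condition. The truncated polynomial $p_\ell$ is accepted, because $\beta<1/2$ leaves strictly more than half of each interval's estimates uncorrupted, and their residuals against $p_\ell$ are at most $\eta$. For any accepted $\widehat p$, a set of at least half and a set of strictly more than half of the points must intersect. So every $I_j$ contains an uncorrupted time with $|\widehat p(t_i)-p_\ell(t_i)|\le2\eta$. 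Now write $t=T(1+\cos\theta)/2$; the intervals then have $\theta$-length $\pi/J$. Bernstein's inequality for the trigonometric polynomial $\theta\mapsto q(T(1+\cos\theta)/2)$, with $q=\widehat p-p_\ell$, gives $|\partial_\theta q|\le\ell\|q\|_\infty$. Hence $\|q\|_\infty\le2\eta+\pi\ell\|q\|_\infty/J$, i.e.\ $\|q\|_\infty\le2\eta/(1-\pi/c_\beta)$ once $c_\beta>\pi$. That is the bound $\|\widehat p-p_\ell\|_\infty\le C_\beta(\eps_M+\tau_\ell)$ the paper uses, and your Markov step then finishes exactly as there.
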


    \begin{proof}
        We first approximate each curve by a polynomial. Write $f_{X,Y}(t)=D^{-1}\tr[e^{t\cL^\dagger}(Y)X]$. By Lemma~\ref{lem:even-car-support-growth}, every nonzero sequence of local terms applied to $Y$ remains even, adds at most $k-1$ modes per step, and obeys the Hilbert--Schmidt norm bound used in \cite[Lemmas~4.1--4.2]{MobusEtAl2026}. Their support-growth and Taylor-tail arguments therefore apply, including with idle modes. Set $a:=\max\{1,\max_{(X,Y)}\lceil|\supp_{\rm f}(Y)|/k\rceil\}=\cO_k(1)$. Since $\|X\|_2\|Y\|_2/D=1$, Cauchy--Schwarz gives, for the degree-$\ell$ Taylor polynomial $p_{X,Y}^{(\ell)}$ with $\ell\ge a$,
        \begin{equation*}
            (p_{X,Y}^{(\ell)})'(0)=f_{X,Y}'(0),
            \qquad
            \|f_{X,Y}-p_{X,Y}^{(\ell)}\|_{\infty,[0,T]}
            \le 2^{a-\ell-1}=: \tau_\ell.
        \end{equation*}
        If $Y=\Id$, the curve is constant by trace preservation, so these conclusions hold as well.

        Next choose $\ell$ so that $\tau_\ell\le T\eps_L/(4C_\beta\ell^2)$. Exponential decay gives the stated degree. On the common grid, uncorrupted data differ from $p_{X,Y}^{(\ell)}$ by at most $\eps_M+\tau_\ell$. Robust Chebyshev regression \cite[Theorem~1.4]{KaneKarmalkarPrice2017}, as formulated in \cite[Lemma~4.4]{MobusEtAl2026}, therefore gives $\|\widehat p_{X,Y}-p_{X,Y}^{(\ell)}\|_{\infty,[0,T]}\le C_\beta(\eps_M+\tau_\ell)$ under the stated corruption bound.

        Finally, Markov brothers' inequality converts this uniform polynomial error into derivative error:
        \begin{equation*}
            |\widehat p_{X,Y}'(0)-f_{X,Y}'(0)|
            \le\frac{2C_\beta\ell^2}{T}(\eps_M+\tau_\ell)
            \le\eps_L.
        \end{equation*}
        This deterministic argument holds simultaneously for every curve on the stated curve-estimation event.
    \end{proof}

\bibliographystyle{abbrvurl-doi}
\bibliography{refs_v5}

\end{document}